%% file: fscd.tex
\documentclass[a4paper,UKenglish,cleveref, autoref, thm-restate]{lipics-v2021}
\usepackage{tikz-cd}
\usetikzlibrary{trees}
\usepackage{mathrsfs}

\DeclareMathOperator{\rd}{\mathrm{rd}}
\DeclareMathOperator{\sk}{\mathrm{sk}}
\DeclareMathOperator{\nf}{\mathrm{nf}}
\DeclareMathOperator{\BT}{\mathrm{BT}}
\def\th{\EuScript}
\def\<{\langle}
\def\>{\rangle}
\newcommand{\norm}[1]{\nu(#1)}
\newcommand{\lam}{\ensuremath{\lambda}}
\newcommand{\Hst}{\EuScript{H}^*}
\newcommand\FHP{\mathrm{FHP}}
\newcommand\HP{\mathrm{HP}}

\newcommand\PT{\mathrm{PT}}
\newcommand\FPT{\mathrm{FPT}}
\newcommand\fin{\mathrm{fin}}

\title{Groups and Inverse Semigroups in Lambda Calculus} 

\author{Antonio Bucciarelli}{Université Paris Cité, CNRS, IRIF, F-75013, Paris, France}{buccia@irif.fr}{https://orcid.org/0000-0002-6050-9867}{}
\author{Arturo De Faveri}{Université Paris Cité, CNRS, IRIF, F-75013, Paris, France}{defaveri@irif.fr}{https://orcid.org/0009-0004-0115-4500}{}
\author{Giulio Manzonetto}{Université Paris Cité, CNRS, IRIF, F-75013, Paris, France}{gmanzone@irif.fr}{https://orcid.org/0000-0003-1448-9014}{}
\author{Antonino Salibra}{Université Paris Cité, CNRS, IRIF, F-75013, Paris, France}{salibra@unive.it}{https://orcid.org/0000-0001-6552-2561}{}

\authorrunning{A. Bucciarelli, A. De Faveri, G. Manzonetto and A. Salibra} 

\Copyright{Antonio Bucciarelli, Arturo De Faveri, Giulio Manzonetto and Antonino Salibra} 

\ccsdesc[500]{Theory of computation~Lambda calculus}
\keywords{Lambda Calculus, Invertibility, Groups, Inverse Semigroups} 

\nolinenumbers 

\EventEditors{Frank Pfenning}
\EventNoEds{1}
\EventLongTitle{11th International Conference on Formal Structures for Computation and Deduction (FSCD 2026)}
\EventShortTitle{FSCD 2026}
\EventAcronym{FSCD}
\EventYear{2026}
\EventDate{July 20--23, 2026}
\EventLocation{Lisbon, Portugal}
\EventLogo{}
\SeriesVolume{378}
\ArticleNo{14}

\begin{document}

\maketitle

\begin{abstract}
We study invertibility of $\lambda$-terms modulo $\lambda$-theories.
Here a fundamental role is played by a class of \lam-terms called finite hereditary permutations (FHP) and by their infinite generalisations (HP). 
More precisely, $\FHP$s are the invertible elements in the least extensional \lam-theory $\boldsymbol{\lambda \eta}$ and $\HP$s are those in the greatest sensible \lam-theory $\th H^*$.   
Our approach is based on inverse semigroups, algebraic structures that generalise groups and semilattices.
We show that FHP modulo a  $\lambda$-theory $\th T$ is always an inverse semigroup and that HP modulo $\th T$ is an inverse semigroup whenever
$\th T$ contains the theory of B\"ohm trees.
An inverse semigroup comes equipped with a natural order.
We prove that the natural order corresponds to $\eta$-expansion in $\FHP /\th T$, and to infinite $\eta$-expansion in $\HP/\th T$. 
Building on these correspondences we obtain the two main contributions of this work: 
firstly, we recast in a broader framework the results cited at the beginning; 
secondly, we prove that the $\FHP$s are the invertible \lam-terms in all the \lam-theories lying between $\boldsymbol{\lambda \eta}$ and $\th H^+$. 
The latter is Morris' observational \lam-theory, defined by using the $\beta$-normal forms as observables.
\end{abstract}

\section{Introduction}
\input{intro.tex}

\section{Preliminaries}\label{sec:prelims}
\input{prelim.tex}

\section{Permutation trees}\label{sec:permtrees} 
\input{pt.tex}

\section{Permutation trees and invertibility in $\lambda$-calculus}
\input{ptlambda.tex}

\section{Related and further work}
\input{conclusion.tex}

\enlargethispage{\baselineskip}
\bibliography{sample-base.bib}

\input{appendix.tex}

\end{document}

%% file: intro.tex
The \lam-calculus, established by Alonzo Church in the 1930s, has played a prominent role in theoretical computer science for almost a century~\cite{B84,BDS13,BM22}. 
This stems from its position at the crossroads of programming language theory and various branches of mathematics, which draws the attention of a diverse community of researchers. 
Its denotational semantics has been studied using domain theory~\cite{AmadioC98}, category theory~\cite{LambekS86,AspertiL91} and bicategory theory \cite{GuerrieriO21,KerinecMO23}, game semantics~\cite{GianantonioFH99,BlondeauPatissierCA25,Clairambault24}, linear logic~\cite{Abramsky93,Ehrhard02}, set theory~\cite{Krivine14,Diaz-CaroGMV19}, and universal algebra~\cite{MS08}.
Operational aspects of its semantics have been also mechanised in proof assistants~\cite{Czajka20,LancelotAV25}.

This article fits within the line of research that studies the \lam-calculus from an algebraic perspective. 
In~\cite{Church1941}, Church formulated the \lam-calculus as a semigroup where $M\circ N = \lam x.M(Nx)$, whence its connection with algebra begins at the syntactic level, 
and this perspective also inspired subsequent works \cite{BD89}. 
A problem that naturally arises is the characterisation of those \lam-terms $M$ that are \emph{invertible} in the sense that $M\circ N = N \circ M = \mathbf{I}$ (the identity), for some $N$. 
The set of invertible $\lambda$-terms modulo a $\lambda$-theory is a group. 
When the notion of equality is $=_\beta$ the group is trivial, as the only invertible \lam-term is $\mathbf{I}$. 
The problem becomes more interesting when considering extensional \lam-theories: Dezani-Ciancaglini, Bergstra and Klop have characterised the \lam-terms invertible modulo $=_{\beta\eta}$ as the \emph{finite hereditary permutations} (FHPs)~\cite{D76,BergstraK80}. 
Intuitively, FHPs are obtained from $\eta$-expansions of $\mathbf{I}$ by arbitrarily permuting the subtrees of their B\"ohm trees, and it can be proved that they are the only realisers of type isomorphisms in the simply typed \lam-calculus~\cite{Cosmo05}.
The invertibility modulo $=_{\beta\eta}$ has also been studied by Folkerts~\cite{Folkerts98}, who proved that a \lam-term is invertible if and only if it is both injective and surjective--a property that \emph{seems} natural but fails in other \lam-theories.
More recently, Statman has shown that the semigroup of $\lambda$-terms modulo $=_\beta$ is Sq-universal~\cite{Statman}, a profound result stating that every countable semigroup is embeddable in the semigroup of $\lambda$-terms modulo a suitable $\lambda$-theory.

There are other \lam-theories that are of interest for computer scientists because they capture operational properties of programs. 
In particular, Morris defined observational equivalences where two \lam-terms are equivalent if they display the same behaviour when they are plugged in any context~\cite{MorrisTh}. 
Clearly, this notion depends on the behavioural property one decides to observe.
The observational equivalence $\Hst$, where one observes head termination, is by far the most studied \lam-theory: it arises as the theory of Scott's model $\mathcal{D}_\infty$ and captures the B\"ohm tree equality modulo an infinitary notion of extensionality~\cite{Hyland75}. 
In \cite{BergstraK80}, Bergstra and Klop characterised the class of invertible \lam-terms modulo $\Hst$ by generalising the finite hereditary permutations to permutations having a possibly infinite B\"ohm tree ($\HP$s).

It is natural to consider the problem of characterising invertibility modulo \lam-theories lying between $=_{\beta \eta}$ and $\th H^*$; 
one of these is for instance $\th H^+$, Morris's observational theory in which the observed behavioural property is $\beta$-normalisation~\cite{IMP19}. 
In this paper, we aim to advance the study of this problem by extending the results presented in \cite[Chapter 21]{B84}.  
Our novel approach consists in showing that $\HP$, which is a monoid modulo $=_\beta$ 
and a group modulo $\th H^*$, is an \emph{inverse semigroup} (actually, an inverse monoid) 
modulo the \lam-theory of B\"ohm trees.
An inverse semigroup \cite{law98} is a semigroup in which every element admits a unique inverse in the generalised sense of semigroup 
theory.
By virtue of their rich structure, inverse semigroups are connected with various areas of logic and theoretical 
computer science, including topos theory \cite{Funk,FS10}, Stone duality \cite{L12}, and automata theory \cite{ARW}.
They simultaneously and faithfully generalise groups--since they are equipped with a unary operation extending 
the group inverse--and semilattices, as the idempotent elements of an inverse semigroup always form a 
semilattice with respect to a natural order. 
An inverse semigroup is a semilattice if and only if every element is idempotent, and it is a group if and only if the identity is the only idempotent.
The prototypical example of an inverse monoid is given by the injective partial endofunctions on a set; 
viewed in this way, inverse monoids provide a natural framework for describing partial symmetries, as opposed to the total ones described by groups.

We introduce a new monoid whose elements we call \emph{permutation trees}, which can be put into correspondence with a monoid of Böhm-like trees. 
We show that the permutation trees form an inverse monoid, we characterise its idempotents and the associated order,
and we prove that it is in fact $F$-inverse, a technical concept that will be appropriately explained. 
The analysis of this monoid has two important consequences: $\HP$ modulo Böhm tree equality and $\FHP$ modulo $=_\beta$ form inverse monoids;  
these sets remain so when considered modulo any theory extending Böhm tree equality and $=_\beta$, respectively.
A noteworthy aspect is that the natural order on these inverse monoids coincides with the preorder induced by infinite $\eta$-expansion (on Böhm-like trees) and finite $\eta$-expansion (on $\lambda$-terms). 
Quotienting an inverse monoid by a suitable congruence $\sigma$, called minimum group congruence, yields a group in which all idempotents lie in the equivalence class of the identity. 
In this group, the generalised inverses become actual inverses in the usual group-theoretic sense.
This naturally raises the question of what such a congruence corresponds to in the cases of $\HP$ modulo Böhm tree equality and $\FHP$ modulo $=_{\beta}$. 
The minimum group congruence is related to the natural order that corresponds to infinite $\eta$-expansion in the theory of Böhm trees and to finite $\eta$-expansion in $=_{\beta}$. 
Therefore, we obtain the key result that quotienting $\HP$ by Böhm theory and then by $\sigma$ gives precisely $\HP$ modulo $\th H^{*}$; 
similarly, quotienting $\FHP$ by $=_\beta$ and then by $\sigma$ gives precisely $\FHP$ modulo $=_{\beta \eta}$. 
This provides a new perspective on invertible $\lambda$-terms modulo these \lam-theories.

The $\lambda$-theories $\th H^*$ and $=_{\beta\eta}$ are special in that they arise precisely from the minimum group congruence, but this is not always the case. 
A counterexample is given by the theory $\th H^+$, whose equality has been characterised in terms of the interaction between Böhm trees and \emph{finite} $\eta$-expansions~\cite{CDZ,DRP,IMP19}. 
To characterise when two hereditary permutations are equal in $\th H^+$, we therefore need to take a further step and study the interaction between infinite trees equipped with an order that nonetheless originates from \emph{finite} idempotents.
In other words, we are led to consider a congruence inspired by, but distinct from, the minimum group congruence.
In the theory of Böhm trees, the order constructed in this way coincides with finite $\eta$-expansion. 
Building on the characterisation of the invertibles we gave for $\th H^*$, we demonstrate that the $\lambda$-terms invertible modulo $\th H^+$ are precisely $\FHP$s. 
As a consequence, $\FHP$s are the invertible $\lambda$-terms in all the \lam-theories lying between $=_{\beta \eta}$ and $\th H^+$.
This result settles a conjecture formulated by Barendregt in his book~\cite[p.~547]{B84}.
The conceptual reconstruction of invertibles modulo $\th H^*$ and $=_{\beta \eta}$ together with the characterisation of invertibles modulo $\th H^+$
constitute the main contributions of this work. 
Some directions for future research are outlined in the conclusion.

%% file: prelim.tex
We recall some basic notions and results concerning the $\lambda$-calculus and the theory of semigroups.
We refer to \cite{B84,BM22} for the former, and to \cite{Howie,law98} for the latter.

We denote by $\mathbb N$ the set of natural numbers and by $\mathbb N^+$ the set of
positive natural numbers.
For every $n \in \mathbb{N}$, let $S_n$ be the symmetric group on $n$ elements. 
Let $\mathrm{Sym}(\mathbb{N}^{+})$ be the symmetric group on $\mathbb{N}^{+}$. 
For every $n \ge 1$, $S_n$ can be embedded in $\mathrm{Sym}(\mathbb{N}^{+})$ as the subgroup of all the $\rho \in \mathrm{Sym}(\mathbb{N}^{+})$ such that $\rho i =i $ for all $i >n$.
By abuse of notation, we shall write $S_n$ for the image of this embedding.  
We denote by $\iota$ the identity permutation in $\mathrm{Sym}(\mathbb{N}^{+})$.

\subsection{Inverse semigroups}
Given a semigroup $S$,
we say that $u \in S$ is \emph{regular} if there is $v \in S$ such that $uvu=u$.  
A semigroup is called \emph{regular} if every element is regular. 
Moreover, we say that $v$ is an \emph{inverse} of $u$ if $uvu=u$ and $vuv=v$.
Every element in a regular semigroup admits at least on inverse.  
An extremely important class of regular semigroups is that of inverse semigroups, in which every element admits precisely one inverse.  
\begin{definition}
An \emph{inverse semigroup} is a semigroup $S$  endowed with a unary operation $(-)^*$ that satisfies, for all $u, v \in S$:
\begin{equation}
    \label{eq:inverse}
    (u^*)^*=u, \qquad (uv)^* = v^*u^*, \qquad uu^*u = u, \qquad uu^*vv^* = vv^*uu^* .
\end{equation}
An \emph{inverse monoid} is an inverse semigroup together with a unit for multiplication.
\end{definition}
In an inverse semigroup $u^*$ is the unique inverse of $u$.

\begin{example}
The prototypical example of an inverse monoid is the set $I(X)$ of partial injections on a given set $X$. 
An element $u$ is a bijection between two subsets of $X$, the domain and the range of $u$, and $u^*$ is its partial inverse. 
\end{example}

An \emph{idempotent} in a semigroup $S$ is $u \in S$ such that $uu=u$. 
The idempotents of $I(X)$ are the partial identities. 
We denote by $E(S)$ the set of idempotents of $S$. 
In any inverse semigroup, every element of the form $uu^*$ is idempotent. 
Conversely, since if $u$ is idempotent, then $u=u^*$, idempotents are precisely those elements of the form $uu^*$. 
The last identity in \eqref{eq:inverse} says that idempotents commute.

\begin{remark}\label{rem:surj-homo}
If $f: S \to T$ is a surjective homomorphism of semigroups and $S$ is an inverse semigroup, then $T$ is also an inverse semigroup. 
\end{remark}

\subparagraph*{The natural order}
Any inverse semigroup $S$ comes equipped with a natural order that corresponds to inclusion between functions of $I(X)$. 
By definition, $u \le v$ if $uu^* v = u$, or, equivalently, if $v u^*u = u$. 
Actually, one can prove that the following are equivalent: 
\begin{enumerate}
\item $u \le v$; 
\item $u=ve$, for some idempotent $e$; 
\item $u=dv$, for some idempotent $d$. 
\end{enumerate}
Therefore, if $d$ is idempotent and $u \le d$, then $u$ is idempotent.
Moreover, $(S,\le)$ is a partial order compatible with the structure of inverse semigroup, i.e., if $s \le t$ and $u \le v$, then $s^* \le t^*$ and $su \le tv$.
If $S$ is an inverse monoid, then the poset $(E(S), \le)$ has binary meets given by multiplication and a top element given by the unit of the monoid. 

\subparagraph*{The minimum group congruence}
Two important relations in the study of an inverse semigroup $S$ are:
\begin{enumerate}
    \item the relation of \emph{compatibility} $\sim$, defined by $t\sim u$ if $t^* u$ and $tu^*$ are idempotents; 
    \item the relation $\sigma_S$, defined by $t \,\sigma_S\, u$ if there exists $w$ such that $w\leq t$ and $w\leq u$. 
\end{enumerate}

When $t$ and $u$ are compatible, i.e., $t \sim u$, then $t \land u$ exists and $t \land u=tu^*u$. 
The relation $\sigma_S$ is the smallest congruence such that $S/\sigma_S$ is a group \cite[Thm. 2.4.1]{law98} and will be called \emph{minimum group congruence}.  

Two classes of inverse semigroups that have been studied extensively are the class of $E$-unitary inverse semigroups and the class of $F$-inverse semigroups. 
\begin{definition}
    An inverse semigroup $S$ is said to be 
    \begin{enumerate}
        \item \emph{$E$-unitary} if, whenever $e \in E(S)$ and $e \le v$, it follows that $v \in E(S)$;  
        \item \emph{$F$-inverse} if each equivalence class of $\sigma_S$ has a unique maximal element. 
    \end{enumerate}
\end{definition} 

The inverse semigroup $S$ is $E$-unitary iff ${\sim} = \sigma_S$ \cite[Thm. 2.4.6]{law98};  
if $S$ is $F$-inverse, then it is $E$-unitary \cite[Proposition 7.1.3]{law98}. 
The following result is \cite[Thm. 2.4.6]{law98}. 

\begin{lemma}
    \label{lem:mgc}
    The minimum group congruence $\sigma_S$ of an $E$-unitary inverse monoid $S$ is the least congruence equating all the idempotent elements of $S$. 
    Moreover, for every $u \in S$, $u \, \sigma_{S} \, 1$ iff $u \in E(S)$. 
\end{lemma}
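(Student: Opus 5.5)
The plan is to work directly from the definition of $\sigma_S$ ($t \,\sigma_S\, u$ iff some $w$ satisfies $w \le t$ and $w \le u$) and the characterisations of the natural order recalled above. I will use that $\sigma_S$ is a congruence, as recalled from \cite{law98}. I will prove the two claims separately, and will invoke $E$-unitarity only in the second.

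\emph{$\sigma_S$ equates all idempotents.} Let $e,f \in E(S)$. Then $ef$ is idempotent, because idempotents commute. Moreover $ef \le e$, since $ef = fe$ has the form $d e$ with $d = f$ idempotent. Likewise $ef \le f$, since $ef$ has the form $f e$ with $e$ idempotent (item 2 of the characterisation). So $w = ef$ witnesses $e \,\sigma_S\, f$.

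\emph{$\sigma_S$ is the least such congruence.} Let $\theta$ be any congruence with $e \,\theta\, f$ for all $e,f \in E(S)$, and suppose $t \,\sigma_S\, u$ via $w \le t$, $w \le u$. By the definition of the order, $w = t\,w^*w$. Since $w^*w$ and $t^*t$ are both idempotent, $w^*w \,\theta\, t^*t$. Compatibility of $\theta$ with multiplication then gives
\[
w = t\,w^*w \;\theta\; t\,t^*t = t .
\]
Symmetrically $w \,\theta\, u$, and transitivity yields $t \,\theta\, u$. Hence $\sigma_S \subseteq \theta$.

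\emph{$u \,\sigma_S\, 1$ iff $u \in E(S)$.} If $u$ is idempotent, then $u \le u$ and $u = 1\cdot u \le 1$, so $u$ itself is a witness. Conversely, suppose $w \le u$ and $w \le 1$. Since $1$ is idempotent and $w \le 1$, the remark after the characterisation of the order gives that $w$ is idempotent. Then $w \in E(S)$ with $w \le u$, and $E$-unitarity forces $u \in E(S)$. This is the only step that uses $E$-unitarity; it is straightforward once the witness $w$ has been shown to be idempotent.

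The argument contains no real obstacle. The only point needing care is to use the form $w = t\,w^*w$ of the natural order, rather than $w = w w^* t$, so that replacing $w^*w$ by $t^*t$ inside the congruence gives exactly $t\,t^*t = t$.
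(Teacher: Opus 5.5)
Your proof is correct, and it takes a different route from the paper only in that the paper gives no argument: it imports the lemma from \cite[Thm.~2.4.6]{law98}, and takes the congruence property of $\sigma_S$ from \cite[Thm.~2.4.1]{law98}. Your argument is a self-contained substitute for that citation.

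\textbf{What your approach buys.} It uses only the characterisations of the natural order recalled in the preliminaries. It also shows that the first assertion, that $\sigma_S$ is the least congruence collapsing $E(S)$, holds in every inverse monoid. $E$-unitarity enters only in the last step of the second assertion, and that step cannot be avoided. If $e \le u$ with $e$ idempotent, then $e = 1\cdot e \le 1$ too, so $u\,\sigma_S\,1$. Hence the implication $u\,\sigma_S\,1 \Rightarrow u \in E(S)$, for all $u$, is equivalent to $E$-unitarity.

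\textbf{What the citation buys.} It is brief, and it comes with the other equivalences of Lawson's theorem, notably ${\sim} = \sigma_S$ for $E$-unitary $S$. The paper uses that fact separately in Lemmas \ref{lem:max} and \ref{lem:res}.

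\textbf{A small correction.} Your closing remark is off. There is no need to prefer $w = t\,w^*w$ over $w = ww^*t$. The latter works equally well, since $ww^*\,\theta\,tt^*$ gives $ww^*t\;\theta\;tt^*t = t$ by right compatibility.
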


\subparagraph*{Green's relations}
Certain equivalence relations, called Green's relations, have played a central role in the development of the theory. 
They can be defined on any semigroup, but assume a particularly simple form on inverse semigroups. 
\begin{definition}
    Let $S$ be an inverse semigroup. 
    We define two equivalence relations $\mathcal L$ and $\mathcal R$ on $S$ by $u \, \mathcal L \, v$ if $u^* u=v^* v$ and $u \, \mathcal R \, v$ if $u u^*=vv^*$. 
    Moreover, we define $\mathcal H := \mathcal L \cap \mathcal R$. 
\end{definition}
Traditionally, given $s \in S$, $L_s$, $R_s$, and $H_s$, denote their equivalence classes under $\mathcal{L}$, $\mathcal{R}$, and $\mathcal{H}$, respectively.
It is easy to check that $\mathcal L$ is a right congruence and $\mathcal R$ is a left congruence, 
meaning that if $u \, \mathcal{L} \, v$, then $us \, \mathcal{L} \, vs$ for every $s \in S$, and similarly for $\mathcal{R}$. 

\begin{proposition} \emph{\cite[Prop.~3.2.1]{law98}}
    \label{prop:key}
    Let $S$ be an inverse monoid and $e \in E(S)$. Then: 
    \begin{enumerate}
        \item $eSe$ is an inverse monoid, and is the largest monoid with identity $e$; 
        \item the $\mathcal{H}$-class $H_e$ is a group with identity $e$ and it coincides with $\{t \in eSe : tt^*=t^*t=e\}$; 
        \item every subgroup of $S$ is contained in an $\mathcal{H}$-class.
    \end{enumerate}
\end{proposition}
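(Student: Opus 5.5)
We verify the three items separately, working throughout with the equational characterisation \eqref{eq:inverse} and the facts about the natural order recalled above.

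\emph{Item 1.} For $s,t\in S$ we have $(ese)(ete)=e(set)e\in eSe$, so $eSe$ is closed under multiplication. For $u=ese$ we get $u^*=e^*s^*e^*=es^*e\in eSe$, because $e^*=e$ for an idempotent. Every $u=ese$ satisfies $eu=u=ue$, since $ee=e$, so $e$ is a unit for $eSe$. The identities \eqref{eq:inverse} hold in $eSe$ because they hold in $S$; hence $eSe$ is an inverse monoid. For maximality, let $M\subseteq S$ be a submonoid with identity $e$. Then each $m\in M$ satisfies $m=eme\in eSe$, so $M\subseteq eSe$.

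\emph{Item 2.} Take $t\in H_e$, so that $tt^*=e$ and $t^*t=e^*e=e$. Then $et=tt^*t=t$ and $te=tt^*t=t$, which gives $t=ete\in eSe$. Conversely, if $t\in eSe$ with $tt^*=t^*t=e$, then $t\,\mathcal H\,e$ follows directly from the definition. For the group structure, let $t,u\in H_e$. We compute
\[
(tu)(tu)^*=tuu^*t^*=tet^*=tt^*=e,
\]
where $tet^*=tt^*$ uses $te=t$. Symmetrically, $(tu)^*(tu)=u^*t^*tu=u^*eu=u^*u=e$. So $H_e$ is closed under products, it contains $e$ as identity, and $t^*\in H_e$ is an inverse of $t$ in the group sense, because $tt^*=t^*t=e$.

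\emph{Item 3.} This is the only step that needs an idea, namely showing that the group inverse in $G$ must coincide with the semigroup inverse in $S$. Let $G\le S$ be a subgroup with identity $f$. Since $ff=f$, we have $f\in E(S)$. For $g\in G$, let $h$ be its inverse in $G$. Then $ghg=fg=g$ and $hgh=h$, so $h$ is an inverse of $g$ in the semigroup sense. By uniqueness of inverses in $S$, $h=g^*$. Therefore $gg^*=gh=f$ and $g^*g=hg=f$. Hence every $g\in G$ satisfies $g\,\mathcal H\,f$, and $G\subseteq H_f$.
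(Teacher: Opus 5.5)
The paper gives no proof of this proposition; it simply imports it from Lawson's book, so there is no in-paper argument to compare yours with. Your direct verification is correct and complete, and it is the standard textbook argument. It uses only the identities \eqref{eq:inverse}, the fact that $e^*=e$ for idempotents, and uniqueness of inverses in an inverse semigroup, all of which the paper records in its preliminaries.

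The one check you leave implicit in item 2 is that $t^*\in H_e$ whenever $t\in H_e$. It is immediate from $t^*(t^*)^*=t^*t=e$ and $(t^*)^*t^*=tt^*=e$.
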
 

Thus, for any idempotent $e \in E(S)$, the $\mathcal{H}$-class $H_e$ is the unique maximal subgroup of $S$ with identity $e$. 
Since each $\mathcal{H}$-class contains at most one idempotent, the maximal subgroups of $S$ are precisely the disjoint $\mathcal{H}$-classes containing an idempotent.

\subsection{Trees} 
An \emph{ordered tree} is a non-empty subset $A$ of $(\mathbb{N}^{+})^{*}$ such that:
\begin{enumerate}[(i)]
\item $A$ is closed under initial segments; 
\item $A$ is \emph{gapless}: for every $i,j \in \mathbb{N}^{+}$ and every $w \in A$, if $i < j$ and $wj \in A$, then $wi \in A$. 
\end{enumerate}
The elements of $A$ are called \emph{nodes}; 
a node is a \emph{leaf} if it has no children (immediate successors).
The empty word $\epsilon$ is the unique \emph{root}, i.e., node without immediate predecessors. 
We will be exclusively interested in \emph{finitely branching trees}, i.e., trees whose nodes have a finite number of children. 
We denote by $\rd(A)$ the \emph{degree} of the root of $A$, i.e., the number of its children.
For each $1\leq i\leq \rd(A)$ we denote by $A_i:= \{ w \in (\mathbb{N}^{+})^{*} : iw \in A \}$ the subtree of $A$ of root $i$.
Finally, we say that a \emph{labelled} tree is a function $l: A \to L$ from a tree to a nonempty set $L$ of labels. 
The tree $A$ is called the \emph{skeleton} of $l$, and is denoted by $\sk(l)$.

The set of finitely branching ordered trees can be defined coinductively.
When we deal with infinite trees we will freely use coinductive arguments.
We refer the reader to \cite{KS} for a gentle introduction to coinductive techniques.  

\subsection{Lambda calculus}
The set $\Lambda$ of \emph{$\lambda$-terms} over a countably infinite set $\mathrm{Var}$ of variables is defined by the grammar: 
\[
	M,N ::= x\mid \lam x.M\mid MN,\textrm{ for }x\in\mathrm{Var}.  
\]
We assume that application associates on the left, and has higher precedence than abstraction.
For instance, $\lam xyz.xyz$ stands for $\lam x.(\lam y.(\lam z.(xy)z))$.
From now on, $\lambda$-terms are considered up to renaming of bound variables. 
A $\lambda$-term without free variables is called a \emph{combinator}.
The \emph{$\beta$-reduction} $\to_\beta$ and \emph{$\eta$-reduction} $\to_\eta$ are defined as the contextual closures of the rules: 
\begin{equation*}
    (\lambda x.M)N \to_\beta M[N/x], \qquad \qquad \lambda x.Mx \to_\eta M, \,   \text{if $x$ does not occur free in $M$}
\end{equation*}
where we denote by $M[N/x$] the \emph{capture-free substitution} of $N$ for all free occurrences of $x$ in $M$. 
Given a notion of reduction $\to_r$ between $\lambda$-terms, we denote by $\twoheadrightarrow_r$ the reflexive, transitive closure of $\to_r$, and by $=_r$ the least equivalence relation containing $\to_r$. 
When $\to_r$ is confluent, $\nf_r(M)$ stands for the unique \emph{$r$-normal form} of $M$ (if any). 
We define \emph{$\beta\eta$-reduction} as $\to_{\beta \eta}\ =\ \to_\beta \cup \to_\eta$. 
    The combinators below are used throughout the paper: 
    \begin{equation*}
        \mathbf{I}= \lambda x.x, \qquad \mathbf{1}=\lambda xy.xy, \qquad \mathbf{B} = \lambda fgx. f(gx),\qquad \mathbf{Y} = \lam f.(\lam x.f(xx))(\lam x.f(xx)).
    \end{equation*}
We recall that $\mathbf{I}$ is the identity, $\mathbf{1}$ the 1st Church numeral, $\mathbf{B}$ stands for the composition and $\mathbf{Y}$ is Curry's fixed point combinator satisfying $\mathbf{Y}M=_\beta M(\mathbf{Y}M)$, for every $M \in \Lambda$.

\subparagraph*{B\"ohm trees}
A $\lambda$-term is in \emph{head normal form} (\emph{hnf}) if it is of the form $\lambda x_1 \ldots x_n.yM_1 \cdots M_k$ for some $n,k \in \mathbb{N}$. 
If a $\lambda$-term has a hnf, this hnf can be reached by head reductions $\to_h$, i.e., by repeatedly $\beta$-reducing its head redex. 
A $\lambda$-term with a hnf is called \emph{solvable} and a $\lambda$-term without a hnf is called \emph{unsolvable}.
    The \emph{B\"ohm tree} $\BT(M)$ of a \lam-term $M$ is a labelled ordered tree defined coinductively as follows: 
        if $M$ is unsolvable, then $\BT(M)= \bot$; 
        otherwise $M \twoheadrightarrow_h \lambda x_1 \ldots x_n.yM_1 \cdots M_k$, and
\[
            \begin{tikzpicture}[level distance=1cm, sibling distance=1cm]
                \node {$\BT(M) := \lambda x_1 \ldots x_n.y \hspace{3.1cm}$}
                child { node {$\BT(M_1)$} }
                child { node {$\cdots$} }
                child { node {$\BT(M_k)$} };
            \end{tikzpicture}
\]
The set $\mathscr{B}$ of \emph{B\"ohm-like trees} is coinductively defined as: $\bot \in \mathscr{B}$ and if $T \in \mathscr{B}$, 
then there are $k,n \in \mathbb{N}$, $x_1, \ldots, x_n, y \in \mathrm{Var}$, and $T_1, \ldots, T_k \in \mathscr{B}$ such that $T$ is 
 \begin{equation*}
            \begin{tikzpicture}[level distance=1cm, sibling distance=1cm]
                \node {$\lambda x_1 \ldots x_n.y \hspace{1.5cm}$}
                child { node {$T_1$} }
                child { node {$\cdots$} }
                child { node {$T_k$} };
            \end{tikzpicture}
        \end{equation*}
When $T \in \mathscr{B}$ is a finite tree without $\bot$, we identify $T$ with the corresponding $\lambda$-term.  
Given $T \in \mathscr{B}$, there exists $M \in \Lambda$ such that $\BT(M)=T$ exactly when $T$ is recursively enumerable and the number of free variables in $T$ is finite \cite[Thm. 10.1.23]{B84}.

\subparagraph*{Theories}
A \emph{$\lambda$-theory} is an equivalence relation $\th T \subseteq \Lambda^2$ on the set of $\lambda$-terms that contains $=_\beta$ and is compatible with abstraction and application.
The minimal $\lambda$-theory $=_\beta$ is also denoted by $\boldsymbol{\lambda}$. 
Given a $\lambda$-theory $\th T$, it is customary to write $\th T \vdash M = N$ for $M \, {\th T} \,  N$. 

A $\lambda$-theory $\th T$ is called: 
\begin{itemize}
\item \emph{consistent} if it does not equate all $\lambda$-terms; 
\item \emph{extensional} if it contains $=_{\beta \eta}$, or, equivalently, if $\th T \vdash \mathbf{I}=\mathbf{1}$; 
\item \emph{semisensible} if it does not equate a solvable and an unsolvable; 
\item \emph{sensible} if it is consistent and equates all the unsolvables.
\end{itemize}
The set of $\lambda$-theories ordered by inclusion forms a complete lattice having a rich structure~\cite{LusinS04}. 

For every theory $\th T$, we denote by $\th T \boldsymbol{\eta}$ the least extensional $\lambda$-theory containing $\th T$, 
and by $\th T {\boldsymbol{\omega}}$ the closure of $\th T$ under the so-called \emph{$\omega$-rule}: for all $M,N \in \Lambda$, if $\th T \vdash MC = NC$ for every combinator $C$, then $\th T \vdash M = N$.
The following are the most important \lam-theories:
\begin{itemize}
\item the least sensible $\lambda$-theory $\th H$. 
\item the \lam-theory ${\th B}$ induced by the B\"ohm tree equality: $M \, {\th B} \, N$ iff $\BT(M) = \BT(N)$.
\item Morris' observational theory $\th H^+$ defined by $\th H^+ \vdash M = N$ if, for all contexts $C[\,]$,
\begin{equation*}
\text{$C[M]$ has a $\beta$-nf iff $C[N]$ has a $\beta$-nf.}
\end{equation*}
\item the greatest sensible \lam-theory $\th H^*$ defined by $\th H^* \vdash M = N$ if, for all contexts $C[\,]$,
\begin{equation*}
\text{ $C[M]$ has a hnf iff $C[N]$ has a hnf.}
\end{equation*}
\end{itemize}
It can be proved that a \lam-theory $\th T$ is semisensible iff $\th T \subseteq \th H^*$. 
Since $\th B$ and $\th H^+$ are sensible, and $\th H^+$ is moreover extensional, we have $\boldsymbol{\lambda} \subseteq \th H \subseteq \th B \subseteq \th H^+ \subseteq \th H^*$, and thus $\boldsymbol{\lambda \eta} \subseteq \th H \boldsymbol{\eta} \subseteq \th B \boldsymbol{\eta} \subseteq \th H^+ \subseteq \th H^*$. By \cite[Thm. 17.4.16]{B84} and \cite{IMP17}, these inclusions are \emph{strict}.

\subparagraph*{Hereditary permutations}
A \emph{finite $\eta$-expansion} ($\mathcal{I}^{\eta}$) of $\mathbf I$ is any combinator $M$ such that $M \twoheadrightarrow_{\beta \eta} \mathbf I$.
Every $M \in \mathcal{I}^{\eta}$ has a $\beta$-normal form, that can be taken as a representative of $M$.
There is a bijection between $\beta$-normal forms of finite $\eta$-expansions of $\mathbf I$ and finite unlabelled ordered trees \cite{IN03}. 

\begin{example}
    The following $\lambda$-terms are finite $\eta$-expansions of $\mathbf I$:  
    \begin{itemize}
        \item $\mathbf{1}_n :=\lambda x x_1 \ldots x_n . x x_1 \cdots x_n$; 
        \item $\mathbf{1}^0:=\mathbf{I}$, $\mathbf{1}^{n+1}:=\lambda xz. x(\mathbf{1}^n z)$. 
    \end{itemize}
    The skeleton of $\BT(\mathbf{1}^k)$ is $\{1^n : n \le k\}$ and the union of these trees is  
the skeleton of the B\"ohm tree of Wadsworth's combinator $\mathbf J:= \mathbf Y (\lambda f g x. g (fx))$: $\sk(\BT(\mathbf{J}))=\{1^n : n \in \mathbb{N}\}$;  
$\lambda$-terms like $\mathbf J$ are called `infinite $\eta$-expansions' of the identity. 
This concept will be defined shortly.  
\end{example}

For every $n \in \mathbb{N}$, let $S_n$ be the symmetric group on $n$ elements. 
For all variables $x$, the set $\mathscr{H\!\!P}(x) \subseteq \mathscr{B}$ is defined coinductively as follows. 
If $T \in \mathscr{H\!\!P}(x)$, then there are $n \in \mathbb{N}$, $\pi \in S_n$ and $T_i \in \mathscr{H\!\!P}(x_{\pi i})$, $1 \le i \le n$ such that $T$ is 
 \begin{equation}
            \begin{tikzpicture}[level distance=1cm, sibling distance=1cm]
                \node {$\lambda x_1 \ldots x_n.x \hspace{1.5cm}$}
                child { node {$T_1 $} }
                child { node {$\cdots$} }
                child { node {$T_n $} };
            \end{tikzpicture}
        \end{equation}

    \begin{definition}
        \label{def:ietaomega}
    We define:
    \begin{itemize}
    \item $\mathscr{H\!\!P}:=\{\lambda x. T: T \in \mathscr{H\!\!P}(x)\}$; 
    \item $\HP := \{M \in \Lambda : \BT(M) \in \mathscr{H\!\!P}\}$; 
    \item $\FHP :=\{M \in \Lambda : \BT(M) \in \mathscr{H\!\!P} \text{ and } \BT(M) \text{ is finite}\}$.
    \end{itemize} 
An element of $\HP$ is called a \emph{hereditary permutation} and an element of $\FHP$ a \emph{finite hereditary permutation}. 
    \end{definition}

The set $\mathscr{I}(x) \subseteq \mathscr{H\!\!P}(x)$ is defined coinductively as follows.
If $T \in \mathscr{I}(x)$, then there are $n \in \mathbb{N}$, $T_i \in \mathscr{I}(x_{i})$, $1 \le i \le n$ such that $T$ is (2). 

\begin{definition}
    We define: 
    \begin{itemize}
        \item $\mathscr{I}:=\{\lambda x. T: T \in \mathscr{I}(x)\}$; 
        \item $\mathcal{I}^{\eta}_{\omega}:=\{M \in \Lambda : \BT(M) \in \mathscr{I}\}$. 
    \end{itemize} 
An element of $\mathcal{I}^{\eta}_{\omega}$ is called a \emph{possibly infinite $\eta$-expansion} of $\mathbf I$.  
\end{definition} 
When $M \in \mathcal{I}^{\eta}_{\omega}$ has a finite B\"ohm tree, we obtain a finite $\eta$-expansion of $\mathbf I$ as previously defined.

\subparagraph*{Invertibility}
The operation $M \circ N := \mathbf B MN$ on $\Lambda/\boldsymbol{\lambda}$ is associative, hence $(\Lambda/\boldsymbol{\lambda}, \circ)$ is a semigroup.
For a $\lambda$-theory $\th T$, a term $M \in \Lambda$ is \emph{$\th T$-invertible} if $\th T \vdash M \circ N = N \circ M = \mathbf I$ for some $N \in \Lambda$. 
This leads to the problem of characterising invertible $\lambda$-terms in a given theory; we summarise the known results, referring to \cite[Chapter 21]{B84}.
\begin{theorem}
    \label{thm:inv} Let $\th T$ be a $\lambda$-theory. 
    \begin{enumerate}
        \item If $\th T$ is $\boldsymbol{\lambda}$, $\th H$ or $\th B$, then $M$ is $\th T$-invertible iff $\th T \vdash M = \mathbf I$; 
        \item if $\th T$ is $\boldsymbol{\lambda \eta}, \boldsymbol{\lambda \omega}$, $\th H\boldsymbol{\eta}, \th H\boldsymbol{\omega}$, then $M$ is $\th T$-invertible iff $M \in \FHP$;
        \item if $\th T = \th H^*$, $M$ is $\th T$-invertible iff $M \in \HP$. 
    \end{enumerate}
\end{theorem}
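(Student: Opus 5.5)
This theorem collects known results from \cite[Chapter 21]{B84}, so the plan is to reconstruct each item from standard Böhm-tree facts. Throughout, the key observation is this: if $\th T\vdash M\circ N=\mathbf I$, then $\lambda x.M(Nx)$ is $\th T$-equal to $\mathbf I$. So the head structure of $M$ and $N$ is constrained whenever $\th T$ is semisensible (all listed theories are contained in $\th H^*$).

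\emph{Item 1.} For $\boldsymbol\lambda,\th H,\th B$, it suffices to treat $\th B$: an invertible term in a smaller theory is invertible in $\th B$, and $\th B\vdash M=\mathbf I$ implies $\boldsymbol{\lambda}\vdash M=\mathbf I$ (or $\th H$). The latter holds because $\BT(M)=\BT(\mathbf I)$ is finite and $\bot$-free, so $M$ has $\beta$-nf $\lambda x.x$. Now assume $\BT(\lambda x.M(Nx))=\lambda x.x$. Then $M$ is solvable, so $M=_\beta\lambda y_1\dots y_n.z P_1\cdots P_k$. Head-reducing $\lambda x.M(Nx)$ must produce head variable $x$ with no abstractions left over and no arguments. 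This forces $n\ge1$, $z=y_1$, $n=1$, and the head of $Nx$ to contribute exactly. Symmetrically, $N\circ M=\mathbf I$ constrains $N$. I would argue by comparing the number of leading abstractions and the number of arguments in the hnfs, as in Barendregt's $\beta$-case: one of $M,N$ must be $\lambda y.y$ at the root with no arguments, and then the other equals $\mathbf I$ too. The delicate part is that $N$ might create abstractions that $M$ consumes; a counting argument on ``$\lambda$-length minus argument-length'' handles it, since without $\eta$ the Böhm tree of $\mathbf I$ has no $\eta$-expansion slack.

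\emph{Items 2 and 3.} First I show that $\HP$-terms are $\th H^*$-invertible and $\FHP$-terms are $\boldsymbol{\lambda\eta}$-invertible. I build the inverse tree by inverting each permutation $\pi$ at every node coinductively, realise it by a term via \cite[Thm.~10.1.23]{B84} (it is r.e. and closed), and check that the composite is an (infinite) $\eta$-expansion of $\mathbf I$. That expansion equals $\mathbf I$ in $\th H^*$, and for finite trees it equals $\mathbf I$ already in $\boldsymbol{\lambda\eta}$. Since $\boldsymbol{\lambda\eta}$ is contained in $\boldsymbol{\lambda\omega}$, $\th H\boldsymbol\eta$, and $\th H\boldsymbol\omega$, this settles the ``if'' directions.

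For the converse, I would use Hyland's characterisation of $\th H^*$ via Böhm trees up to infinite $\eta$-expansion. From $M\circ N=_{\th H^*}\mathbf I$, I analyse the root: the head variable must be the first bound variable, and the arguments $P_i$ must, after substituting $N$'s structure, be hereditary invertibles of the bound variables in permuted positions. Coinductively, this yields $\BT(M)\in\mathscr{H\!\!P}$; the permutation arises because each bound variable must reappear exactly once, by a counting or injectivity argument applied on both sides.

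For item 2, invertible in $\th T\subseteq\th H\boldsymbol\omega\subseteq\th H^*$ gives $M\in\HP$. It remains to show that $\BT(M)$ is finite. This is the main obstacle. I would show that in $\th H\boldsymbol\omega$, $\mathbf I$ is not equal to any genuinely infinite $\eta$-expansion such as $\mathbf J$ (using \cite[Thm.~17.4.16]{B84}). Then, if $\BT(M)$ were infinite, composing with its $\HP$-inverse would give an infinite expansion equal to $\mathbf I$, a contradiction.
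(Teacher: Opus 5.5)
You are right that the paper does not prove this theorem; it cites \cite[Chapter 21]{B84}. Your item~1 argument and the ``if'' directions of items~2 and~3 are sound. For item~1, once $M=_\beta\lambda y.y\vec P$ and $N=_\beta\lambda y.y\vec Q$, the head normal form $x\vec Q'\vec P'$ of $M(Nx)$ forces both lists to be empty. This is more self-contained than the paper's later treatment of non-extensional theories, which goes through item~3 and Lemma~\ref{lem:giulio}.

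The genuine gap is the ``only if'' direction of item~3. This is the entire content of the Bergstra--Klop theorem, and your proposal describes its conclusion rather than proving it. Modulo $\th H^*$ you only know that $\BT(M\circ N)$ and $\BT(N\circ M)$ are \emph{some} infinite $\eta$-expansions of $\mathbf I$. So the abstractions and arguments of the head normal forms of $M$ and $N$ cannot be matched one-to-one; already $M=\mathbf I$, $N=\mathbf 1$ shows this. From the two equations you would have to derive all of the following:
\begin{itemize}
\item the head variable of $M$ does not occur in the arguments $P_i$;
\item the number of arguments equals the number of remaining abstractions;
\item each $P_i$ contains exactly one remaining bound variable, in head position, and these variables are pairwise distinct;
\item so that the coinduction applies, each $\lambda y_{\pi i}.P_i$ has an $\th H^*$-inverse extracted from the corresponding argument of $N$.
\end{itemize}
``A counting or injectivity argument applied on both sides'' does not say what is counted. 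The two equations must interact essentially: $M=\lambda yy_1y_2.y(y_1y_2)\notin\HP$ already has the left inverse $\lambda f.f\mathbf I$ modulo $\boldsymbol{\lambda\eta}$.

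For item~2, your strategy is the one the paper uses in Theorem~\ref{thm:invhplus}, but two steps are missing.

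First, the composite must be $P:=M\circ N$ with the \emph{given} $\th T$-inverse $N$. None of $\boldsymbol{\lambda\eta},\boldsymbol{\lambda\omega},\th H\boldsymbol\eta,\th H\boldsymbol\omega$ contains $\th B$, so you cannot replace $N$ by a term realising $\BT(M)^*$ and keep $\th T\vdash M\circ N=\mathbf I$. You then need two facts:
\begin{itemize}
\item $\BT(P)\in\mathscr I$: an element of $\HP$ that is $\th H^*$-equal to $\mathbf I$ is idempotent, by Theorem~\ref{thm:invhstar} and Lemma~\ref{lem:mgc}.
\item $\BT(P)$ is infinite whenever $\BT(M)$ is: in $\PT$, a node at depth $d$ in either factor yields one at depth $d$ in the product.
\end{itemize}

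Second, and this is the crux you flag yourself, you need $\th H\boldsymbol\omega\nvdash D=\mathbf I$ for \emph{every} $D\in\mathcal I^\eta_\omega$ with infinite B\"ohm tree. You have no control over which $D$ arises as $M\circ N$. Even granting that \cite[Thm.~17.4.16]{B84} separates $\mathbf J$ from $\mathbf I$, you would still need $\th H\boldsymbol\omega\vdash D=\mathbf I$ to entail $\th H\boldsymbol\omega\vdash\mathbf J=\mathbf I$, and no such reduction is given.

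The paper gets the separation for free by passing to $\th H^+$. We have $\boldsymbol{\lambda\omega}\subseteq\th H\boldsymbol\omega\subseteq\th B\boldsymbol\omega=\th H^+$ \cite{BMPR,IMP17}. Taking the empty context in the definition of $\th H^+$ shows that $\th H^+\vdash D=\mathbf I$ forces $D$ to be $\beta$-normalisable, so $\BT(D)$ is finite. This yields item~2 for every $\th T$ with $\boldsymbol{\lambda\eta}\subseteq\th T\subseteq\th H^+$, using only item~3.
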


%% file: pt.tex
In this section we introduce and study a new inverse monoid $\PT$ of labelled ordered trees,
that, as shown in Proposition \ref{prop:embed}, is isomorphic to the monoid of hereditary permutations $(\mathscr{H\!\!P}, \circ, \mathbf I)$. 
Working with $\PT$ has the advantage of revealing the underlying structure in a more transparent way.

\begin{definition}
    A \emph{permutation tree} is a labelled ordered tree $t: A \to \mathrm{Sym}(\mathbb{N}^{+})$ such that: 
    \begin{itemize}
        \item if $a \in A$ has exactly $n \ge 1$ children, then $t(a)\in S_n$; 
        \item if $a \in A$ is a leaf, then $t(a):=\iota$, where $\iota$ is the identity permutation in $\mathrm{Sym}(\mathbb{N}^{+})$. 
    \end{itemize}  
    We let $\PT$ be the set of permutation trees. 
\end{definition}

We denote by $\bar\epsilon$ the unique permutation tree with skeleton $\{\epsilon\}$, i.e.\ the tree with one node labelled $\iota$. 

We can conveniently write a permutation tree $t$ as $\< \pi; t_1, \ldots, t_n\>$ where $\pi \in S_n$ is the label of the root, $\rd(t)=n$ and $t_1, \ldots, t_n \in\PT$.

Observe that $\PT$ can be defined coinductively as the greatest set $\PT$ such that, if $t \in \PT$, then there is $n \in \mathbb N$, $\pi \in S_n$ and $t_1, \ldots, t_n \in\PT$ such that $t=\<\pi; t_1, \ldots, t_n\>$. 

\begin{definition}
 Let $t, u \in \PT$. 
We define the product $tu$ by coinduction as follows. 
First, $t \bar \epsilon = t = \bar\epsilon t$. 
If $t = \<\pi ;t_1,\ldots,t_n\>$, $u = \<\rho;u_1,\ldots,u_m\>$ and $k = \max\{n,m\}$, then we define:
\begin{equation*}
    tu := \<\pi \circ \rho; u_1 t_{\rho1} ,\ldots, u_k t_{\rho k} \> 
\end{equation*}
where, by convention, $t_j =\bar \epsilon$ if $j>n$, and $u_i =\bar \epsilon $ if $i>m$.
\end{definition}

The product operation makes $(\PT,\cdot,\bar \epsilon)$ into a monoid. 
We now show that this monoid is an inverse monoid. 

\begin{definition}
 Given $ t\in\PT$, we define by coinduction $t^*$ as follows. 
 First, $\bar\epsilon^*=\bar\epsilon$. 
 If $t=\<\pi; t_1, \ldots, t_n\>$ with $n \ge 1$, then $t^* = \<\pi^{-1}; (t_{\pi^{-1}1})^*, \ldots, (t_{\pi^{-1}n})^*\>$. 
\end{definition}

We denote by $E$ the set $\{t \in \PT: t(a)=\iota \text{ for every } a \in \sk(t)\}$. 
Every $t \in E$ is univocally determined by its skeleton and we identify $\sk(t)$ with the permutation tree $t$. 

\begin{proposition}
    \label{lem:idemp}
    The set $E(\PT)$ of idempotent elements of $\PT$ coincides with $E$. 
    Moreover, for every $d,e \in E$, $d e=d\cup e$; consequently any two idempotents commute.  
\end{proposition}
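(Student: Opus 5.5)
We prove both inclusions between $E(\PT)$ and $E$ and the union formula, arguing coinductively on the tree structure throughout.

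\emph{Products in $E$ are unions.} Let $d=\<\iota; d_1,\ldots,d_n\>$ and $e=\<\iota; e_1,\ldots,e_m\>$ lie in $E$, and put $k=\max\{n,m\}$. By definition,
\[
de=\<\iota; e_1d_1,\ldots,e_kd_k\>,
\]
with the convention that missing subtrees are $\bar\epsilon$. The root label $\iota$ is the identity of $S_k$, since $\iota\circ\iota=\iota$. The relation
\[
R=\{(de,\ d\cup e) : d,e\in E\}
\]
is then a bisimulation. At the root both trees have degree $\max\{n,m\}$ and label $\iota$. The $i$-th child of $de$ is $e_id_i$, while the $i$-th child of $d\cup e$ is $d_i\cup e_i$; these are again related by $R$. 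When one of $d_i,e_i$ is $\bar\epsilon$, the base case $t\bar\epsilon=t=\bar\epsilon t$ gives the required equality directly. The union $d\cup e$ is gapless and closed under prefixes, so it is a tree. Hence $de=d\cup e=ed$, which shows that elements of $E$ are idempotent (take $d=e$) and that any two of them commute.

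\emph{Idempotents lie in $E$.} This is the main step. Let $t=\<\pi;t_1,\ldots,t_n\>$ satisfy $tt=t$. The root of $tt$ has degree $n$ (the maximum of $n$ and $n$) and label $\pi\circ\pi$. Comparing with $t$ gives $\pi\pi=\pi$ in $S_n$, so $\pi=\iota$. With $\pi=\iota$, the children of $tt$ are $t_it_i$, so comparing children yields $t_it_i=t_i$ for every $i$.

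The set $\{t : tt=t\}$ is therefore closed under taking children, and every element of it has root label $\iota$. It follows that every node of an idempotent is labelled $\iota$. Formally, one can argue by induction on the length of the address of a node, or note that this set is contained in the greatest fixed point defining $E$. Either way, $t\in E$.

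The delicate point is to check that the degrees match correctly when $\max\{n,m\}$ is used. In $tt$ the degrees coincide, so no padding occurs. In the union case the padding by $\bar\epsilon$ is exactly what produces the union.
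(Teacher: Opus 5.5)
Your proof is correct and follows essentially the same route as the paper. At the root you cancel $\pi\circ\pi=\pi$ to get $\pi=\iota$ and pass to the children via $t_it_i=t_i$, and you read $de=d\cup e$ off the coinductive definition of the product. The only difference is cosmetic: you obtain $dd=d$ as the special case $d=e$ of the union formula, made explicit through a bisimulation and an induction on node depth, whereas the paper checks $dd=d$ separately before stating the union formula.
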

\begin{proof}
    Let $t=\<\pi; t_1, \ldots, t_n\> \in E(\PT)$ be idempotent. 
    Then $\pi \circ \pi = \pi$, and this is possible only if $\pi = \iota$. 
    The claim $t \in E$ follows from $t_i = (tt)_i = t_it_i$.
    Conversely, if $d = \<\iota;d_1,\ldots,d_n\> \in E$, then by definition of the product $dd = \<\iota;d_1d_1,\ldots,d_nd_n\>$. 
    Since by coinduction $d_i d_i = d_i$, we have $dd = d$ and $E = E(\PT)$.
    Finally, if $d,e \in E$, then, again by the coinductive definition of the product, $de=d \cup e$. 
\end{proof}

\begin{theorem}
    \label{thm:perminverse} 
    The monoid $(\PT,\cdot,(-)^*,\bar\epsilon)$ is an $E$-unitary inverse monoid. 
\end{theorem}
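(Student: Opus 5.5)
The plan is to check the four identities of \eqref{eq:inverse} directly from the coinductive definitions of product and $(-)^*$, and then to check $E$-unitarity using the characterisation $u \le v$ iff $u = vf$ for some idempotent $f$. Every equality between permutation trees will be proved by coinduction. To show $s = s'$ for all pairs in some family, I will exhibit a relation $R$ containing the pairs such that related trees have the same root label and degree and pairwise $R$-related (or equal) children. Throughout I will use one padding fact: if $t = \langle \pi; t_1,\dots,t_n\rangle$ then $\pi \in S_n$ fixes every $i > n$, and $t_i = \bar\epsilon$ for $i>n$. Consequently, in a product of degree $k = \max\{n,m\}$, the padded children behave uniformly, and I may compute with $k$ children on both sides.

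The first two identities are computations. For $(u^*)^* = u$, write $u = \langle \pi; u_1,\dots,u_n\rangle$. The root of $(u^*)^*$ is $(\pi^{-1})^{-1} = \pi$, and its $i$-th child is $((u_{\pi\pi^{-1} i})^{*})^{*} = (u_i^*)^*$. So the relation $\{((w^*)^*, w)\}$ is a bisimulation. For $(uv)^* = v^*u^*$, take $v = \langle \rho; v_1,\dots,v_m\rangle$, so that $uv = \langle \pi\rho; v_i u_{\rho i}\rangle$. The $i$-th child of $(uv)^*$ is $(v_j u_{\rho j})^*$ with $j = \rho^{-1}\pi^{-1} i$. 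On the other side, $v^*u^* = \langle \rho^{-1}\pi^{-1}; u^*_{\pi^{-1}i}\, v^*_{\rho^{-1}\pi^{-1} i}\rangle$, whose $i$-th child is $u^*_{\rho j}\, v^*_j$. The roots agree, and the children are again instances of the same pattern. Hence the relation $\{((ab)^*, b^*a^*)\}$ is a bisimulation, and both trees have degree $\max\{n,m\}$.

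For $uu^*u = u$, I will first compute $uu^* = \langle \iota; u^*_{\pi^{-1} i}\, u_{\pi^{-1}i}\rangle$. By a coinduction on the family $\{ww^*, w^*w\}$ this shows that $uu^*$ and $u^*u$ always lie in $E$. Then $(uu^*)u = \langle \pi; u_i\, (uu^*)_{\pi i}\rangle = \langle \pi; u_i u_i^* u_i\rangle$, so the relation $\{(ww^*w, w)\}$ is a bisimulation. The fourth identity follows because $uu^*, vv^* \in E$ and idempotents commute by Proposition~\ref{lem:idemp}. Hence $\PT$ is an inverse monoid with unit $\bar\epsilon$.

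For $E$-unitarity, let $e \in E$ and $e \le v$. Then $e = vf$ for some $f \in E(\PT) = E$, and I claim $v \in E$. With $v = \langle \rho; v_1,\dots,v_m\rangle$ and $f = \langle \iota; f_1,\dots\rangle$, we get $vf = \langle \rho; f_i v_{i}\rangle$ (using $\iota$ at the root of $f$). Comparing roots with $e$ gives $\rho = \iota$. The children satisfy $e_i = f_i v_i$ with $e_i, f_i \in E$, so $e_i \le v_i$, and padded children of $v$ are $\bar\epsilon \in E$ anyway. Thus the set $\{v : \exists e \in E,\ e \le v\}$ is closed under taking children and has root label $\iota$, so by coinduction all labels of $v$ are $\iota$, i.e.\ $v \in E$. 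I expect the main obstacle to be bookkeeping rather than ideas: making the coinduction principle precise (equality of permutation trees as the largest bisimulation) and handling degrees and padding carefully, especially in $(uv)^* = v^*u^*$. I would isolate the padding fact as a preliminary remark so that all computations can be done uniformly on $\max$-many children.
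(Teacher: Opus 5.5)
Your proof is correct. The $E$-unitarity half is the paper's argument: write the idempotent as a product of $v$ with some $f\in E$, read off $\iota$ at the root and $e_i=f_iv_i$ for the children, deduce $e_i\le v_i$, and conclude coinductively.

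The inverse-monoid half takes a different route. The paper proves only $tt^*t=t$. It then invokes \cite[Theorem 5.1.1]{Howie} (a monoid is inverse iff it is regular and its idempotents commute), together with Proposition~\ref{lem:idemp}. You instead verify all four identities of \eqref{eq:inverse} by bisimulation. This costs you the extra computations of $(u^*)^*=u$ and $(uv)^*=v^*u^*$, plus the coinduction on $\{ww^*,w^*w\}$ showing $uu^*,u^*u\in E$. Your index bookkeeping in each of these checks out.

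The paper's route is shorter and lets Howie's theorem do the structural work. However, it directly shows only that $\PT$ is an inverse monoid in the abstract. To know that the particular operation $(-)^*$ is the unique inverse, one also needs $t^*tt^*=t^*$. That follows from $tt^*t=t$ applied to $t^*$ together with $(t^*)^*=t$, a step the paper leaves implicit. Your equational verification makes it explicit. It therefore establishes the statement for the signature $(\PT,\cdot,(-)^*,\bar\epsilon)$ exactly as written, without appealing to an external characterisation theorem.
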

\begin{proof}
    We prove that $\PT$ is an inverse monoid by applying \cite[Theorem 5.1.1]{Howie}, which states that $S$ is an inverse monoid iff $S$ is a regular monoid whose idempotents commute.  
    For every $t=\<\pi; t_1, \ldots, t_n\> \in \PT$, we prove that $ t  t^*  t= t$:  $\pi \circ \pi^{-1} \circ \pi = \pi$ and, coinductively, 
\begin{align*}
    (t  t^*  t)_i & =(( t  t^*)  t)_i 
     = t_i  ( t  t^*)_{ \pi i} 
     = t_i   (t^*)_{ \pi i}  t_{ (\pi^{-1} \circ \pi)i} 
     = t_i   ( t_i)^*  t_i 
     =t_i\text{.}
\end{align*}
Since for every $t$ there is $t^*$ such that $t t^* t= t$, $\PT$ is regular.
We conclude that $\PT$ is an inverse monoid applying Proposition \ref{lem:idemp}.

We now prove that $\PT$ is $E$-unitary. 
Let $t=\<\pi;t_1, \ldots, t_n\>$, and $e = \<\iota; e_1, \ldots,e_m\>$ be an idempotent 
such that  $e\leq t$. 
If $e=\bar\epsilon$, then necessarily $t=\bar\epsilon$, hence $t \in E$.
If $e \neq \bar\epsilon$, then $e=td$ for some $d \in E$. 
We prove that $t \in E$. 
First, we immediately get that $\pi=\iota$ and $n\leq m$. 
Moreover, for every $1\leq i\leq n$, we have $e_i=d_i t_i$. 
Since $d_i$ is idempotent, we get that $e_i\leq t_i$, and coinductively, that  $t_i \in E$. 
We conclude that $t \in E$. 
\end{proof}

As a consequence of $E$-unitarity, if $t\leq u$ and one between $t$ and $u$ is idempotent, then both are idempotent.

We now show that $\PT$ is isomorphic to a monoid of Böhm-like trees. 
The set of B\"ohm-like trees is endowed with a composition that turns $(\mathscr{B}, \circ)$ into a semigroup \cite[Thm.~18.3.10]{B84}. 

\begin{proposition}
    \label{prop:embed}
    There is an embedding $(-)^+$ of the inverse monoid $\PT$ of permutation trees into the semigroup $\mathscr{B}$ of B\"ohm-like trees 
    whose image is $\mathscr{H\!\!P}$. 
    In particular, $(\mathscr{H\!\!P}, \circ, \mathbf I)$ is an inverse monoid.
\end{proposition}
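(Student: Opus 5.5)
We define the map $(-)^+$ by coinduction, simultaneously for every variable: for each variable $x$, let $(-)^{+x}:\PT\to\mathscr{H\!\!P}(x)$ be given by
\[
\<\pi;t_1,\ldots,t_n\>^{+x} := \lambda x_1\ldots x_n.\,x\,(t_1^{+x_{\pi 1}})\cdots(t_n^{+x_{\pi n}}),
\]
with $x_1,\ldots,x_n$ fresh. Then set $t^+ := \lambda x.t^{+x}$. We read the $i$-th subtree as being headed by $x_{\pi i}$, matching the clause defining $\mathscr{H\!\!P}(x)$.

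By construction $t^{+x}\in\mathscr{H\!\!P}(x)$. The map is injective by a coinductive argument: from the root $\lambda x_1\ldots x_n.x$ and the head variables $x_{\pi i}$ of the children we recover $n$ and $\pi$, and then we recurse on the children. It is surjective onto $\mathscr{H\!\!P}$ by unfolding the coinductive definition of $\mathscr{H\!\!P}(x)$, reading off $n$, $\pi$ and the $T_i$ at each node. If one prefers the convention $T_i\in\mathscr{H\!\!P}(x_{\pi^{-1}i})$, we replace $\pi$ by $\pi^{-1}$ throughout; this choice only has to agree with the product computation below. Moreover $\bar\epsilon^+=\lambda x.x=\mathbf I$.

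The core step is to show $(tu)^+ = t^+\circ u^+$, where $\circ$ is Böhm-tree composition, i.e.\ $\BT(\lambda z.t^+(u^+ z))$ for representatives. Take $t=\<\pi;t_1..t_n\>$, $u=\<\rho;u_1..u_m\>$ and $k=\max\{n,m\}$. Unfold $u^+z$ to $\lambda y_1\ldots y_m.\,z\,\vec U$, where $U_j=u_j^{+y_{\rho j}}$, and feed it to $t^+$, which gives $\lambda x_1\ldots x_n.\,(u^+z)\,T_1\cdots T_n$ with $T_i=t_i^{+x_{\pi i}}$.

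If $n\ge m$, the $\beta$-steps substitute $T_j$ for $y_j$, so the $j$-th argument becomes $u_j^{+}$ applied at head $T_{\rho j}$. Since $T_{\rho j}$ is itself an $\mathscr{H\!\!P}(x_{\pi\rho j})$-tree, this argument is, at the Böhm-tree level, exactly $(u_j t_{\rho j})^{+x_{(\pi\circ\rho) j}}$ by the coinductive hypothesis. The trailing arguments $T_{m+1},\ldots,T_n$ appear unchanged; these correspond to $u_j=\bar\epsilon$, for which $\bar\epsilon\, t_{\rho j}=t_{\rho j}$ and $\rho j=j$.

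If $n<m$, the surplus abstractions $\lambda y_{n+1}\ldots y_m$ remain and play the role of $x_{n+1},\ldots,x_m$. For these indices $t_j=\bar\epsilon$, and $\pi$ fixes them.

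The whole argument requires a lemma, provable by coinduction: for $T\in\mathscr{H\!\!P}(y)$ seen as $\lambda y.T$ and $S\in\mathscr{H\!\!P}(x)$, the Böhm tree of $(\lambda y.T)[S]$, i.e.\ $T[S/y]$ with $S$ placed at the head, equals the composition tree. This is the same statement one level down, so we set up the coinduction on the relation $\{(((tu)^{+x}),\ \BT(t^{+}\!\circ u^{+}\ \text{at head }x))\}$ and close it under the unfolding above.

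The expected main obstacle is exactly this index and variable bookkeeping in the composition step. It requires checking that the child ordering $u_jt_{\rho j}$ and the head label $\pi\circ\rho$ in the definition of the product match the Böhm-tree $\beta$-reduction, including the padding cases, and that Böhm trees of HP-compositions are computed productively, since every node has a head variable and no $\bot$ arises. Once $(-)^+$ is a bijective homomorphism preserving units, $\mathscr{H\!\!P}$ is a monoid isomorphic to $\PT$. By Theorem~\ref{thm:perminverse} (or Remark~\ref{rem:surj-homo}) it is then an inverse monoid, with $(t^+)^*=(t^*)^+$.
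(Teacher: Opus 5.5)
Your proof is correct and follows essentially the paper's route. It uses the same coinductive definition of $(-)^+$ (your $t_i^{+x_{\pi i}}$ is the paper's $(t_i)^+x_{\pi i}$ up to $\beta$), and gets the homomorphism property by unfolding $\lambda z.t^+(u^+z)$ one level and applying the coinductive hypothesis to the children $u_jt_{\rho j}$; you are more careful than the paper in treating $n\neq m$ separately (the paper just assumes $n=m$) and in arguing injectivity and surjectivity onto $\mathscr{H\!\!P}$.

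Two small caveats. First, the $\beta$-unfolding has to be justified inside $\mathscr{B}$ itself, since non-r.e.\ permutation trees have no $\lambda$-term representatives; the paper gets this from $\mathscr{B}$ being a lambda abstraction algebra. Second, your aside about switching to the $x_{\pi^{-1}i}$ convention would break the homomorphism for the fixed product (the root label would come out as $\rho\circ\pi$ instead of $\pi\circ\rho$), so the $x_{\pi i}$ convention you actually use is the forced one.
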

\begin{proof}
        We coinductively define an injective map $(-)^+ : \PT \to \mathscr{B}$.
        We exploit the fact that $\mathscr{B}$ is a lambda abstraction algebra \cite[Lemma~23]{S00},
        which entails that application and abstraction of Böhm-like trees are computed in the same way as for \lam-terms. 
        We define: 
        \begin{itemize}
            \item $t^+:=\lambda x.x$ if $t= \bar \epsilon$; 
            \item $t^+:=\lambda x x_1 \ldots x_n. x ((t_1)^+ x_{\pi 1}) \cdots ((t_n)^+ x_{\pi n})$ if $t=\<\pi;t_1, \ldots, t_n\>$. 
        \end{itemize}
We now prove that $(tu)^+= t^+\circ u^+$.
Let $u=\<\rho; u_1, \ldots, u_m\>$.
Without loss of generality, we assume $n=m$: 
\begin{align*}
    (tu)^+ 
    & = \lambda xx_1\dots x_n.x((u_1t_{\rho 1})^+ x_{(\pi \circ\rho)(1)}) \cdots((u_n t_{\rho n})^+x_{(\pi \circ \rho)(n)})\\ 
    & = \lambda xx_1\dots x_n.x(((u_1)^+\circ(t_{\rho 1})^+)x_{(\pi \circ \rho )(1)})\cdots(((u_n)^+\circ(t_{\rho n})^+)x_{(\pi\circ\rho)(n)}) \\
    & = \lambda xx_1\dots x_n.u^+ x((t_1)^+x_{\pi 1})\cdots((t_n)^+x_{\pi n}) \\
    & = t^+ \circ u^+. 
\end{align*}
The last statement of the proposition follows from Remark \ref{rem:surj-homo}.
\end{proof}

Note that, for $A \in \mathscr{H\!\!P}$, $A^*$ coincides with the tree $A^{-1}$ of \cite[Def. 21.2.16]{B84}.
The image of the idempotents of $\PT$ is the set $\mathscr{I}$ introduced in Definition \ref{def:ietaomega}.

\subsection{The minimum group congruence}
In this section we prove that the monoid $\PT$ is $F$-inverse. 
In the first lemma we characterise 
the natural partial order of the inverse monoid $\PT$. 

\begin{lemma}\label{lem:order}  Let $t=\<\pi;t_1, \ldots, t_n\>$ and $u=\<\rho; u_1, \ldots, u_m\>$. 
    Then $t \le u$ iff 
    \begin{enumerate}
    \item $\pi = \rho$ and $m \le n$; 
    \item $t_i \le u_i$ for $1 \le i\le m$ and $t_i \in E$ for $m < i \le n$. 
\end{enumerate} 
\end{lemma}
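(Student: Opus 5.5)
The plan is to use the characterisation of the natural order recalled in the preliminaries: $t \le u$ iff $t = ue$ for some idempotent $e$, iff $t = du$ for some idempotent $d$. By Proposition~\ref{lem:idemp} the idempotents are exactly the elements of $E$. So everything reduces to computing a product $ue$ with $e = \<\iota; e_1,\ldots,e_k\> \in E$. We also allow $e = \bar\epsilon$, i.e.\ $k=0$, which is consistent with the convention that missing subtrees are $\bar\epsilon$.

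By the definition of the product, with $K = \max\{m,k\}$,
\[ ue = \<\rho\circ\iota;\, e_1 u_{\iota 1},\ldots, e_K u_{\iota K}\> = \<\rho;\, e_1u_1,\ldots,e_Ku_K\>, \]
where $u_i = \bar\epsilon$ for $i>m$ and $e_i = \bar\epsilon$ for $i>k$. The fact that the subtrees come out in the form $e_i u_i$, with the idempotent on the left, is what makes the componentwise characterisation work. It matches item 3 of the recalled equivalences ($x = dv$ with $d$ idempotent). I would double-check the degenerate cases $u=\bar\epsilon$ and $e=\bar\epsilon$ separately, since the product is defined by the special clause $t\bar\epsilon = t = \bar\epsilon t$. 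In both cases the formula above still gives the right answer.

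($\Rightarrow$) Suppose $t = ue$ with $e\in E$.
\begin{itemize}
\item Comparing roots gives $\pi = \rho$ and $n = \max\{m,k\} \ge m$.
\item For $1\le i\le m$ we get $t_i = e_i u_i$ with $e_i \in E$, since subtrees of elements of $E$ lie in $E$ and $\bar\epsilon\in E$. Hence $t_i \le u_i$.
\item For $m< i\le n$ we get $t_i = e_i\bar\epsilon = e_i \in E$.
\end{itemize}

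($\Leftarrow$) Assume conditions 1 and 2 hold.
\begin{itemize}
\item For each $i\le m$, since $t_i\le u_i$, choose $d_i\in E$ with $t_i = d_i u_i$.
\item For $m<i\le n$, set $d_i := t_i$, which lies in $E$ by hypothesis.
\item Let $d := \<\iota; d_1,\ldots,d_n\>$. This lies in $E$ because all labels are $\iota$ and each $d_i\in E$. If $n=0$, then also $m=0$, so $t=u=\bar\epsilon$ and the claim is trivial.
\end{itemize}
The computation above, with $K=\max\{m,n\}=n$, gives $ud = \<\rho; d_1u_1,\ldots,d_mu_m, d_{m+1},\ldots,d_n\> = \<\pi; t_1,\ldots,t_n\> = t$. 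Hence $t\le u$.

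No coinduction is needed, because the order on the subtrees is taken as a hypothesis or conclusion rather than unfolded. The only real obstacle is bookkeeping: getting the orientation of the product right (that $(ue)_i = e_i u_{i}$ and not $u_i e_i$), and handling the $\bar\epsilon$ conventions when $n$, $m$ or $k$ is $0$.
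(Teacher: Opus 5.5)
Your proposal is correct and follows essentially the same argument as the paper. For $(\Rightarrow)$ you write $t=ue$ with $e\in E$ and read off $\pi=\rho$, $n=\max\{m,\rd(e)\}$, $t_i=e_iu_i$ for $i\le m$ and $t_i=e_i$ for $m<i\le n$; for $(\Leftarrow)$ you build $\<\iota; e_1,\ldots,e_m,t_{m+1},\ldots,t_n\>$ with $t_i=e_iu_i$ and check that multiplying $u$ on the right by it gives $t$.
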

\begin{proof}
        Assume that $t \le u$, meaning that $t=ue$ for some $e \in E$. 
    Then $\pi = \rho$ and $n=\max\{m,\rd(e)\}\geq m$. 
    Moreover, since $t_i = e_i u_i$, for $1\leq i\leq m$, and $t_i=e_i$, for $m< i\leq n$, the conclusion easily follows. 
    Conversely, let $e=\<\iota;e_1, \ldots, e_m, t_{m+1}, \ldots, t_n\>$, where $e_i$ is such that $t_i=e_i u_i$ with $e_i \in E$ for $1 \le i \le m$. 
    Then we have $t=ue$. 
\end{proof}

Remark that the natural order is the greatest relation on $\PT$ that satisfies
conditions (1) and (2) of Lemma \ref{lem:order}.

\begin{definition}
    \label{def:max}
  Given $t=\<\pi; t_1, \ldots, t_n\> \in\PT$, we define coinductively $t^\mathfrak{m}$ as follows: 
\begin{equation*}
    t^\mathfrak{m}:=
    \begin{cases}
        (\<\pi; t_1, \ldots, t_{n-1}\>)^{\mathfrak{m}}, & \text{ if } t_n \in E \text{ and } \pi n = n, \\
        \<\pi; (t_1)^{\mathfrak{m}}, \ldots, (t_n)^{\mathfrak{m}}\> ,& \text{ otherwise.}
    \end{cases}
\end{equation*}  
\end{definition}

Observe the similarity with the definition of Nakajima tree of a $\lambda$-term given in \cite[Definition 2.60]{BM22}. 
See also \cite[Proposition 10.2.15]{B84}. 
Remark that if $e \in E$, then $e^{\mathfrak{m}}=\bar \epsilon$.
For $t \in\PT$, let $t^\uparrow:=\{u : t \le u\}$. 

\begin{proposition}
    \label{lem:ume} For every $t,u \in\PT$, the following hold: 
    \begin{enumerate}
        \item $t \le t^{\mathfrak{m}}$; 
        \item if $t \le u$, then $u \le t^{\mathfrak{m}}$. 
    \end{enumerate}
    In particular, $t^{\mathfrak{m}}$ is the unique maximal element in $t^\uparrow$. 
\end{proposition}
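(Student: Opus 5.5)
Both items will be proved by coinduction, using the characterisation of the natural order in Lemma~\ref{lem:order}. By the remark after that lemma, $\le$ is the greatest relation satisfying its conditions (1) and (2). So to show that a relation $R$ is contained in $\le$, it suffices to check that $R$ satisfies (1) and (2) with $R$ in place of $\le$ on the subtrees.

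For item 1, I take $R=\{(t,t^{\mathfrak m}) : t\in\PT\}$ and write $t=\<\pi;t_1,\ldots,t_n\>$. Unfolding Definition~\ref{def:max}, $t^{\mathfrak m}$ is obtained as follows. Let $m\le n$ be least such that, for every $m<i\le n$, $t_i\in E$ and $\pi$ fixes $i$; equivalently, strip the maximal block of trailing fixed, idempotent children. Then $t^{\mathfrak m}=\<\pi;t_1^{\mathfrak m},\ldots,t_m^{\mathfrak m}\>$.

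I must check that $\pi$ restricted to $\{1,\ldots,m\}$ lies in $S_m$. This holds because $\pi$ fixes each $i>m$ in the range up to $n$, so it permutes $\{1,\ldots,m\}$. Then condition (1) holds since the root labels are equal and $m\le n$. Condition (2) holds since $(t_i,t_i^{\mathfrak m})\in R$ for $i\le m$, and $t_i\in E$ for $m<i\le n$. One subtlety is the iterated unfolding of the first clause of Definition~\ref{def:max}. It only recurses finitely many times, because the degree decreases each time, so it is well-founded at the root and the definition is productive.

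For item 2, I take $R=\{(u,t^{\mathfrak m}) : t\le u\}$. Let $u=\<\rho;u_1,\ldots,u_k\>$ with $t\le u$. By Lemma~\ref{lem:order} we get $\rho=\pi$, $k\le n$, $t_i\le u_i$ for $i\le k$, and $t_i\in E$ for $k<i\le n$. I need to show $(u,t^{\mathfrak m})$ satisfies (1) and (2). The root labels are both $\pi$. It remains to show $m\le k$, and that $u_i\in E$ for $m<i\le k$ and $(u_i,t_i^{\mathfrak m})\in R$ for $i\le m$. The last of these is immediate from $t_i\le u_i$.

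For $m\le k$: the children $t_i$ with $i>k$ are all in $E$, and $\pi=\rho\in S_k$ fixes every $i>k$. Hence the trailing block $k+1,\ldots,n$ gets stripped, and the minimal $m$ satisfies $m\le k$.

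For $m<i\le k$: we have $t_i\in E$ and $t_i\le u_i$, so $u_i\in E$ by $E$-unitarity (Theorem~\ref{thm:perminverse}). This is where $E$-unitarity is genuinely needed, and I expect it, together with making the unfolding of $\mathfrak m$ precise, to be the main obstacle.

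Finally, for uniqueness of the maximal element: $t^{\mathfrak m}\in t^\uparrow$ by item 1, and every $u\in t^\uparrow$ satisfies $u\le t^{\mathfrak m}$ by item 2. Hence $t^{\mathfrak m}$ is the greatest element of $t^\uparrow$, and in particular its unique maximal element.
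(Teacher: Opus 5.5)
Your proof is correct and follows essentially the same route as the paper: coinduction through the characterisation of $\le$ in Lemma~\ref{lem:order}, reading $t^{\mathfrak m}$ as stripping the trailing block of fixed idempotent children, and invoking $E$-unitarity for the children of $u$ beyond that block. Your direct argument that the stripped degree is at most the degree of $u$ makes precise the paper's ``without loss of generality $n=m$'' step, and you handle the idempotent case uniformly where the paper treats it separately.
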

\begin{proof}
    If $t$ is idempotent, (1) is trivial and (2) follows from $E$-unitarity. 
    Otherwise let $t=\<\pi; t_1, \ldots, t_n\>$ and let $k$ be the greatest $i$ such that $t_k \notin E$ or $\pi k \neq k$.
    Then $t^\mathfrak{m}=\<\pi; (t_1)^{\mathfrak{m}}, \ldots, (t_k)^{\mathfrak{m}}\>$. 
    By coinductive hypothesis $ t_i \le (t_i)^{\mathfrak{m}}$ for $1 \le i \le k$, and $t_i \in E$ for $k < i \le n$, so that by Lemma \ref{lem:order} $t \le t^{\mathfrak{m}}$, showing (1).  
    Concerning (2), let $u=\<\rho; u_1, \ldots, u_m\>$ such that $t \le u$.
    Without loss of generality, we assume that $n=m$, so that $k \le m$, where $k$ is defined above. 
    Moreover, we have $\pi = \rho$, so that by Lemma \ref{lem:order} and 
    by coinductive hypothesis, $u_i \le (t_i)^{\mathfrak{m}}$ for $1 \le i \le k$. 
    As $t_i \le u_i$ and $t_i \in E$ for $k< i \le n$, we have $u_i \in E$ for $k< i \le m$ by $E$-unitarity. 
    By Lemma \ref{lem:order} we conclude that $u \le t^{\mathfrak{m}}$. 
\end{proof}

We now prove that $\PT$ is $F$-inverse. The proof relies on the following general lemma that will be used multiple times.  

\begin{lemma}\label{lem:max} Let $S$ be a $E$-unitary inverse semigroup. 
    Assume that for every $t \in S$, the set $t^{\uparrow}$ has a unique maximal element $t^{\mathfrak{m}}$. 
    Then for all $t,u \in S$, $t \, \sigma_S\,  u$ iff $t^{\mathfrak{m}}=u^{\mathfrak{m}}$. 
\end{lemma}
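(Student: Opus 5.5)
The plan is to prove the two implications separately, using only the definition of $\sigma_S$ (namely $t \,\sigma_S\, u$ iff $t$ and $u$ have a common lower bound) and the hypothesis that each $t^\uparrow$ has a unique maximal element $t^{\mathfrak{m}}$. The first step is to upgrade ``unique maximal element'' to ``greatest element'': I expect the hypothesis to be read, as in Proposition~\ref{lem:ume}, as saying that $t \le t^{\mathfrak{m}}$ and that $u \le t^{\mathfrak{m}}$ for every $u \in t^\uparrow$. If only uniqueness of a maximal element is literally given, I would note that in the intended use (Proposition~\ref{lem:ume}) this stronger property is available, and I would state the lemma in that form.

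\emph{($\Leftarrow$)} Suppose $t^{\mathfrak{m}} = u^{\mathfrak{m}}$. Then $t \le t^{\mathfrak{m}}$ and $u \le u^{\mathfrak{m}} = t^{\mathfrak{m}}$. Hence $t$ and $u$ have a common upper bound, and I need to turn this into a common lower bound. Put $v := t^{\mathfrak{m}}$. Since $t \le v$, we have $t = v e$ for some idempotent $e$; likewise $u = v f$ for some idempotent $f$. Let $w := v e f$. Because $ef$ is idempotent, $w \le v$. Because idempotents commute, $w = (ve)f = tf$, so $w \le t$, and $w = (vf)e = ue$, so $w \le u$. Thus $w$ is a common lower bound and $t \,\sigma_S\, u$. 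This direction does not need $E$-unitarity.

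\emph{($\Rightarrow$)} Suppose $w \le t$ and $w \le u$. Then $t \in w^\uparrow$, so $t \le w^{\mathfrak{m}}$ by the greatest-element property. I then claim that $t^\uparrow \subseteq w^\uparrow$ and $w^{\mathfrak{m}} \in t^\uparrow$. The first holds by transitivity of the order. The second is exactly $t \le w^{\mathfrak{m}}$. It follows that $w^{\mathfrak{m}}$ is an element of $t^\uparrow$ lying above every element of $t^\uparrow$, since every such element is also in $w^\uparrow$. So $w^{\mathfrak{m}}$ is the greatest element of $t^\uparrow$, that is, $t^{\mathfrak{m}} = w^{\mathfrak{m}}$. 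By symmetry $u^{\mathfrak{m}} = w^{\mathfrak{m}}$, and therefore $t^{\mathfrak{m}} = u^{\mathfrak{m}}$.

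The main obstacle is the interpretational point in the first step: whether ``unique maximal'' already gives ``greatest''. If only literal uniqueness of a maximal element is assumed, I would try to use $E$-unitarity (equivalently, $\sim\, = \sigma_S$) as follows. Elements in a common upper set $w^\uparrow$ are pairwise compatible, because they are all $\sigma$-related to $w$. Compatible elements have meets, but what is needed here are joins. So I would rather restrict the lemma's use to the situation of Proposition~\ref{lem:ume}, where the greatest-element property holds, and carry out the argument above under that hypothesis.
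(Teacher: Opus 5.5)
Your proof is correct, and the point you leave open closes at once, so you do not need to restrict the lemma. Your worry would be legitimate for a single poset. Here, however, the hypothesis holds at \emph{every} element.

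Suppose $t \le u$. Then $u^{\mathfrak{m}}$ lies in $t^\uparrow$, since $t \le u \le u^{\mathfrak{m}}$. It is also maximal there. Indeed, any $z \in t^\uparrow$ with $u^{\mathfrak{m}} \le z$ satisfies $u \le z$, so $z \in u^\uparrow$, and maximality in $u^\uparrow$ gives $z = u^{\mathfrak{m}}$. Uniqueness then gives $u^{\mathfrak{m}} = t^{\mathfrak{m}}$, and hence $u \le t^{\mathfrak{m}}$.

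So ``unique maximal'' already means ``greatest''. Moreover, the identity $u^{\mathfrak{m}} = t^{\mathfrak{m}}$ for $t \le u$, applied to $w \le t$ and $w \le u$, is exactly the $(\Rightarrow)$ direction. This is what the paper's one-line ``hence $w^{\mathfrak{m}} = t^{\mathfrak{m}} = u^{\mathfrak{m}}$'' tacitly uses. The argument is purely order-theoretic, so your proposed detour through compatibility and $E$-unitarity, which only gives meets, is unnecessary.

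For $(\Leftarrow)$ you take a different route from the paper. The paper sets $w = t^{\mathfrak{m}}$ and writes $u = we$ and $t = dw$. It shows $t^*u \le w^*we$ and $tu^* \le dww^*$, so both are idempotent and $t \sim u$. It then appeals to $E$-unitarity in the form ${\sim} = \sigma_S$. You instead exhibit the common lower bound $vef$ directly, using only that idempotents commute.

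Your version is shorter and shows that $E$-unitarity plays no role. In fact the paper's argument only needs ${\sim} \subseteq \sigma_S$, which holds in every inverse semigroup, because $tu^*u$ is a common lower bound of compatible $t$ and $u$. Together with the observation above, your argument proves the lemma for any inverse semigroup satisfying the maximality hypothesis.
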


\begin{proof} 
Assuming $t \, \sigma_S\,  u$, there exists $w$ such that $w\leq t,u$. 
Hence $w^{\mathfrak{m}}=t^{\mathfrak{m}}=u^{\mathfrak{m}}$.  
Conversely, let $w=t^{\mathfrak{m}}=u^{\mathfrak{m}}$. 
Then $u=we$ for some $e\in E(S)$ and $t=dw$ for some $d\in E(S)$. 
Moreover, $t^* u=t^* w e\leq w^* we\in E(S)$ and $tu^*=dwu^*\leq dww^*\in E(S)$, so that $t,u$ are compatible, i.e., $t \sim u$.  
As $S$ is $E$-unitary, ${\sim} = \sigma_S$, so that $t \, \sigma_S\,  u$.   
\end{proof}

\begin{theorem}
The monoid $\PT$ is $F$-inverse.
\end{theorem}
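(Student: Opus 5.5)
The plan is to combine Theorem~\ref{thm:perminverse}, Proposition~\ref{lem:ume} and Lemma~\ref{lem:max}. By Theorem~\ref{thm:perminverse}, $\PT$ is an $E$-unitary inverse monoid. By Proposition~\ref{lem:ume}, for every $t \in \PT$ the set $t^\uparrow$ has the unique maximal element $t^{\mathfrak m}$. So Lemma~\ref{lem:max} applies and gives, for all $t,u\in\PT$,
\[
t \,\sigma_{\PT}\, u \iff t^{\mathfrak m} = u^{\mathfrak m}.
\]

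Fix a $\sigma_{\PT}$-class $C$ and an element $t \in C$. First I show that $t^{\mathfrak m}$ is in $C$. By Proposition~\ref{lem:ume}(1) we have $t \le t^{\mathfrak m}$, so $t$ is a common lower bound of $t$ and $t^{\mathfrak m}$. Hence $t\,\sigma_{\PT}\,t^{\mathfrak m}$ by the definition of $\sigma_S$.

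Next I show that $t^{\mathfrak m}$ is maximal in $C$. Suppose $t^{\mathfrak m} \le v$. Then $t \le v$ by transitivity, so $v \le t^{\mathfrak m}$ by Proposition~\ref{lem:ume}(2), and therefore $v = t^{\mathfrak m}$.

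Finally I show it is the only maximal element. Let $u \in C$ be maximal. Since $u\,\sigma_{\PT}\,t$, the displayed equivalence gives $u^{\mathfrak m}=t^{\mathfrak m}$. We also have $u \le u^{\mathfrak m}$, and $u^{\mathfrak m}$ lies in $C$ by the membership step above. Maximality of $u$ then forces $u = u^{\mathfrak m} = t^{\mathfrak m}$. Thus every $\sigma_{\PT}$-class has exactly one maximal element, which is the definition of $F$-inverse.

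There is no real obstacle at this stage: the substantive work is already done in Proposition~\ref{lem:ume}, through the coinductive construction of $t^{\mathfrak m}$. The only care needed is to note that $t^{\mathfrak m}$ is independent of the chosen representative $t$, which is exactly what Lemma~\ref{lem:max} provides.
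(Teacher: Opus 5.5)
Your proof is correct and takes the same approach as the paper, whose proof is the single line ``By Lemma~\ref{lem:max} and Proposition~\ref{lem:ume}''. You spell out how these two results combine, with Theorem~\ref{thm:perminverse} supplying the $E$-unitarity that Lemma~\ref{lem:max} needs, to show that $t^{\mathfrak m}$ is the unique maximal element of the $\sigma_{\PT}$-class of $t$.
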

\begin{proof}
 By Lemma \ref{lem:max} and Proposition \ref{lem:ume}.
\end{proof}

\subsection{Finite permutation trees}
We now focus on the subset of finite permutation trees ($\FPT$). 
With the operations inherited from $\PT$, this is an inverse monoid whose idempotent elements are the finite trees in $E$. 

\begin{definition}
    We denote by $E_\fin$ the set of finite permutation trees in $E$. 
\end{definition}

\begin{lemma}
    The monoid $\FPT$ is an $F$-inverse monoid. 
\end{lemma}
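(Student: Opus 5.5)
We will prove that $\FPT$ is $F$-inverse by running the same argument used for $\PT$ inside the submonoid, and then invoking Lemma \ref{lem:max} for $S=\FPT$.

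First, $\FPT$ is an inverse submonoid of $\PT$. Closure under product follows from the coinductive definition: if $t,u$ are finite, then $tu$ has root degree $\max\{n,m\}$ and children $u_i t_{\rho i}$. So by induction on height, $tu$ is finite. Similarly $t^*$ is finite, since it has the same skeleton shape up to permuting children. Hence $E(\FPT)=E\cap\FPT=E_\fin$.

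Next, the natural order on $\FPT$ is the restriction of that of $\PT$. Indeed $t\le u$ means $t=ut^*t$, and $t^*t$ already lies in $\FPT$. Consequently $\FPT$ is $E$-unitary: if $e\in E_\fin$, $u\in\FPT$ and $e\le u$, then $u\in E$ by Theorem \ref{thm:perminverse}, so $u\in E_\fin$.

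The main step is to check that for $t\in\FPT$ the element $t^{\mathfrak m}$ of Definition \ref{def:max} lies again in $\FPT$. We argue by induction on the height of the finite tree $t$. The first clause of the definition only deletes trailing children, so it preserves finiteness. Since $t$ has finitely many children, this clause is applied finitely many times before the second clause is reached. The second clause produces $\<\pi;(t_1)^{\mathfrak m},\dots,(t_k)^{\mathfrak m}\>$ with each $(t_i)^{\mathfrak m}$ finite by the induction hypothesis. Hence $t^{\mathfrak m}$ is finite, and its skeleton is a subset of the skeleton of $t$.

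By Proposition \ref{lem:ume}, $t^{\mathfrak m}$ is the greatest element of $t^\uparrow$ computed in $\PT$. Since it lies in $\FPT$ and the orders agree, it is also the unique maximal element of $t^\uparrow\cap\FPT$. Lemma \ref{lem:max} applied to $S=\FPT$ then gives $t\,\sigma_{\FPT}\,u$ iff $t^{\mathfrak m}=u^{\mathfrak m}$. Therefore each $\sigma_{\FPT}$-class has a unique maximal element, namely the common value $t^{\mathfrak m}$, which proves that $\FPT$ is $F$-inverse.

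The only delicate point is making the recursion defining $t^{\mathfrak m}$ well-founded on finite trees, and the induction on height above handles it.
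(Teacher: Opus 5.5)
Your proof is correct, but it takes a different route from the paper's.

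The paper settles the lemma in one line. It cites \cite{AKS}: once $(-)^{\mathfrak m}$ is added as a unary operation, $F$-inverse monoids form an equational class. Since $\FPT$ is closed under product, $(-)^*$, the unit and $(-)^{\mathfrak m}$, it is a subalgebra of $\PT$ and is therefore automatically $F$-inverse.

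You instead re-run, inside $\FPT$, the argument the paper uses to show that $\PT$ is $F$-inverse. You check that $\FPT$ is an inverse submonoid and that the natural order and the idempotents restrict to it. You inherit $E$-unitarity from Theorem~\ref{thm:perminverse}, show that $t^{\mathfrak m}$ stays finite, and then apply Lemma~\ref{lem:max} with $S=\FPT$.

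Both proofs hinge on the same key fact: $\FPT$ is closed under $(-)^{\mathfrak m}$. The paper asserts this without proof. You verify it by induction on height, and you also make explicit why the unguarded first clause of Definition~\ref{def:max} terminates on finite trees.

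The paper's route is shorter but relies on a nontrivial external variety theorem. Yours is self-contained and uses only results already established in the paper. It also gives a small bonus: $(-)^{\mathfrak m}$ on $\FPT$ is the restriction of $(-)^{\mathfrak m}$ on $\PT$, and both $\sigma_{\FPT}$ and $\sigma_{\PT}$ are characterised by equality of $(-)^{\mathfrak m}$. So you obtain Lemma~\ref{lem:res} for free, without the separate compatibility-and-meet argument the paper gives for it.
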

\begin{proof}
    The class of $F$-inverse monoids is an equational class when adding $(-)^{\mathfrak{m}}$ as a unary operation to the signature (see \cite{AKS}). 
    As $\FPT$ is a submonoid of $\PT$ closed under $(-)^{\mathfrak{m}}$, $\FPT$ is $F$-inverse. 
\end{proof}

The minimum group congruence $\sigma_\FPT$ is obtained restricting $\sigma_{\PT}$ to $\FPT$.  
\begin{lemma}
    \label{lem:res}
    For all $t,u \in\FPT$, $t \, \sigma_{\FPT} \, u$ iff $t \, \sigma_{\PT} \, u$. 
\end{lemma}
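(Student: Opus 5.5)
Both $\PT$ and $\FPT$ are $E$-unitary and every element has a unique maximal element $t^{\mathfrak m}$ above it. The plan is therefore to apply Lemma \ref{lem:max} to each of them and compare the two maximum-element maps on $\FPT$.

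First, for $t\in\FPT$, $t^{\mathfrak m}$ computed in $\PT$ is again finite. The coinductive clauses of Definition \ref{def:max} only delete trailing subtrees or apply $(-)^{\mathfrak m}$ to existing subtrees. So $\sk(t^{\mathfrak m})\subseteq \sk(t)$, because each step keeps a subset of the children of a node. This is the closure of $\FPT$ under $(-)^{\mathfrak m}$ that the previous lemma already uses.

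Next, we check that $t^{\mathfrak m}$ is also the unique maximal element of $t^\uparrow$ computed inside $\FPT$. The natural order of $\FPT$ is the restriction of that of $\PT$: $u\le v$ iff $u=uu^*v$, an equation that makes sense in either monoid and has the same meaning, since the operations are inherited. By Proposition \ref{lem:ume}, every $u\in\FPT$ with $t\le u$ satisfies $u\le t^{\mathfrak m}$. Moreover $t\le t^{\mathfrak m}\in\FPT$. Hence $t^{\mathfrak m}$ is the maximum of $\{u\in\FPT: t\le u\}$.

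Finally, $\FPT$ is $E$-unitary, being $F$-inverse by the preceding lemma (or directly, since its idempotents are those of $\PT$ lying in $\FPT$). Applying Lemma \ref{lem:max} in $\FPT$ gives $t\,\sigma_{\FPT}\,u$ iff $t^{\mathfrak m}=u^{\mathfrak m}$. Applying it in $\PT$, which is legitimate by Theorem \ref{thm:perminverse} and Proposition \ref{lem:ume}, gives $t\,\sigma_{\PT}\,u$ iff $t^{\mathfrak m}=u^{\mathfrak m}$. Both conditions involve the same maps, so the two relations coincide on $\FPT$.

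A more direct route is also available. The inclusion $\sigma_{\FPT}\subseteq\sigma_{\PT}$ is immediate, since a common lower bound in $\FPT$ is one in $\PT$. For the converse, given $w\in\PT$ with $w\le t,u$, we take $w':=tt^*uu^*\,t$, which is finite and lies below both $t$ and $u$ by compatibility. Either way, the only real point to verify carefully is the finiteness of $t^{\mathfrak m}$, which I expect to be straightforward from the definition.
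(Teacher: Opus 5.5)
Your main argument is correct, but it takes a different route from the paper. The paper's proof is the short one. Since $\PT$ is $E$-unitary, ${\sim}=\sigma_{\PT}$, so $t\,\sigma_{\PT}\,u$ gives $t\sim u$. Hence $t\wedge u$ exists and equals $tu^*u$, which is built from finite trees by products and $(-)^*$ and is therefore a common lower bound inside $\FPT$.

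Your secondary ``direct'' route is exactly this argument. Your $w'=tt^*uu^*t$ equals $uu^*t$, and for compatible $t,u$ this is the same meet as $tu^*u$. Note that your phrase ``by compatibility'' silently uses $t\sim u$. That fact comes from $E$-unitarity of $\PT$, not from the given $w$, which you never actually use. Also, your closing remark does not apply to this route: here the finiteness of $t^{\mathfrak m}$ plays no role, and only closure of $\FPT$ under products and $(-)^*$ matters.

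Your primary route instead uses the $F$-inverse structure, and each step checks out:
\begin{itemize}
\item $(-)^{\mathfrak m}$ preserves finiteness, since $\sk(t^{\mathfrak m})\subseteq\sk(t)$.
\item The natural order of $\FPT$ is the restriction of that of $\PT$, so $t^{\mathfrak m}$ is still the maximum of $t^\uparrow$ computed in $\FPT$.
\item $\FPT$ is $E$-unitary.
\item Applying Lemma~\ref{lem:max} in both monoids identifies both relations with $t^{\mathfrak m}=u^{\mathfrak m}$.
\end{itemize}

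The trade-off is as follows. The paper's argument is shorter and more general: it shows $\sigma_T=\sigma_S\cap(T\times T)$ for any inverse submonoid $T$ of an $E$-unitary inverse monoid $S$, with no need for maximal elements or closure under $(-)^{\mathfrak m}$. Your route needs that extra closure property. In return it gives the explicit characterisation $t\,\sigma_{\FPT}\,u\iff t^{\mathfrak m}=u^{\mathfrak m}$ at the same time, which the paper only obtains later in Proposition~\ref{prop:etasigma}.
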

\begin{proof}
    One direction is trivial; the other follows from $E$-unitarity: if $t \, \sigma_{\PT} \, u$, then $t \sim u$, so that $t \land u$ exists and is given by $tu^*u$. 
    Since $tu^* u \in \FPT$, we have $t \, \sigma_{\FPT} \, u$. 
\end{proof}

Finite permutation trees have a rich structure in their own right. 
First, we show that $\FPT$ is residually finite. 
Second, we characterise the covering relation in the natural order, a result that will be useful later. 

\subparagraph*{The monoid of finite permutation trees is residually finite}
If $t \in\FPT$ with $\sk(t)=e$, we denote by $\norm t$ the number of nodes of $e$.
\begin{lemma}\label{lem:norm} For any $t,u \in\FPT$, the following hold: 
\begin{enumerate}
    \item if $t < u$, then $\norm u < \norm t$;
    \item $\norm t=\norm{t^*}$; 
    \item $\norm t, \norm u \le \norm{tu} \le \norm t + \norm u$; 
    \item $\norm{\sk(t)}=\norm{tt^*}=\norm{t^* t}$.
\end{enumerate}
As a consequence of \emph{(1)} and \emph{(3)} the sets $t^\uparrow$ and $\{(v,w): t=vw\}$ are finite. 
\end{lemma}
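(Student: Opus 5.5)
Each item is proved by induction on the (finite) structure of the trees, reading the product and the inverse directly off their recursive definitions. The basic fact used throughout is that $\norm t$ is the number of nodes of $\sk(t)$, with $\norm{\bar\epsilon}=1$. Write $t=\<\pi;t_1,\ldots,t_n\>$; then $\norm t=1+\sum_{i\le n}\norm{t_i}$.

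Item (2) comes first. The skeleton of $t^*=\<\pi^{-1};(t_{\pi^{-1}1})^*,\ldots,(t_{\pi^{-1}n})^*\>$ has the same root degree. Its children are the $t_j^*$, merely reindexed. Induction therefore gives $\norm{t^*}=1+\sum_j\norm{t_j^*}=1+\sum_j\norm{t_j}=\norm t$.

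For (3), let $u=\<\rho;u_1,\ldots,u_m\>$ and $k=\max\{n,m\}$, so that $tu=\<\pi\circ\rho;u_1t_{\rho1},\ldots,u_kt_{\rho k}\>$. The lower bound $\norm u\le\norm{tu}$ follows from induction: $\norm{u_it_{\rho i}}\ge\norm{u_i}$, and the padding $u_i=\bar\epsilon$ for $i>m$ only adds nodes. For $\norm t\le\norm{tu}$, note that $i\mapsto\rho i$ is a bijection of $\{1,\ldots,k\}$, since $\rho\in S_m\subseteq S_k$. Hence every $t_j$ with $j\le n$ occurs exactly once as some $t_{\rho i}$, and induction gives $\norm{u_it_{\rho i}}\ge\norm{t_{\rho i}}$.

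The upper bound is the one point that needs care. By induction,
\[
\norm{tu}\le 1+\sum_{i\le k}\bigl(\norm{u_i}+\norm{t_{\rho i}}\bigr).
\]
Each padded $\bar\epsilon$ contributes $1$ rather than $0$, so this crude bound overshoots. I would strengthen the induction to $\norm{tu}\le\norm t+\norm u-1$, that is, counting $\norm{\cdot}-1$, the number of non-root nodes. Put $\mu(s)=\norm s-1$, so $\mu(\bar\epsilon)=0$ and $\mu(t)=\sum_i(1+\mu(t_i))$. The claim $\mu(tu)\le\mu(t)+\mu(u)$ then reads $\sum_{i\le k}(1+\mu(u_it_{\rho i}))\le\sum_{i\le k}(1+\mu(u_i)+\mu(t_{\rho i}))$. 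Since $k\le n+m$, this follows termwise from the inductive hypothesis. The base case $t=\bar\epsilon$ or $u=\bar\epsilon$ is trivial. This gives (3), in a stronger form.

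For (4), compute $tt^*$ from the definitions. It is the idempotent $\<\iota;(t_1t_1^*),\ldots\>$, up to reindexing: its $i$-th child is $t^*_i\,t_{\pi^{-1}i}=(t_{\pi^{-1}i})^*t_{\pi^{-1}i}$. By Proposition~\ref{lem:idemp} this is an idempotent with the skeleton of $t_{\pi^{-1}i}$, by induction. The skeleton of $tt^*$ is therefore $\sk(t)$ with children permuted, which has the same number of nodes. The case $t^*t$ is symmetric, so $\norm{\sk(t)}=\norm{tt^*}=\norm{t^*t}$.

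For (1), assume $t<u$. By Lemma~\ref{lem:order} we have $\pi=\rho$, $m\le n$, and $t_i\le u_i$ for $i\le m$. Induction gives $\norm{t_i}\ge\norm{u_i}$, with strict inequality whenever $t_i\ne u_i$. The extra children $t_{m+1},\ldots,t_n$ each contribute at least one node. Hence $\norm t\ge\norm u$, with equality only if $m=n$ and $t_i=u_i$ for all $i$, which would mean $t=u$. So the inequality is strict.

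The consequences follow directly. By (1), every $u\in t^\uparrow$ has $\norm u\le\norm t$. The labels at a node of degree $d$ lie in the finite group $S_d$, so there are only finitely many permutation trees with a bounded number of nodes, and $t^\uparrow$ is finite. Similarly, if $t=vw$, then (3) gives $\norm v,\norm w\le\norm t$, so the set of such pairs is finite.

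The main obstacle is the bookkeeping in the upper bound of (3), where padding by $\bar\epsilon$ must not be overcounted. Switching to $\mu$ handles it.
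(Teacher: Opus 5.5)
Your proof is correct. Items (1), (2), (4) and the finiteness consequences follow the paper's argument: structural induction, reading the children of $t^*$, $tt^*$ and $t^*t$ off the definitions, and Lemma~\ref{lem:order} for (1).

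The one real difference is the upper bound in (3). The paper splits into the cases $n\ge m$ and $m\ge n$. It observes that at padded positions $(tu)_i$ is literally a child of one factor: $(tu)_i=t_i$ for $m<i\le n$, and $(tu)_i=u_i$ when $\rho i>n$. So it invokes the weak hypothesis $\norm{u_it_{\rho i}}\le\norm{u_i}+\norm{t_{\rho i}}$ only where both factors are genuine children.

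You instead strengthen the invariant to $\mu(tu)\le\mu(t)+\mu(u)$ with $\mu(s)=\norm{s}-1$. Since $\mu(\bar\epsilon)=0$, padding is free and the hypothesis applies uniformly at all $k$ positions. The root-level slack is $n+m-k=\min\{n,m\}\ge 0$. Your sentence that the claim `reads' as the displayed termwise inequality is slightly off: it is exactly this use of $k\le n+m$ that closes the gap, which you do state.

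Your route is case-free and gives the sharper bound $\norm{tu}\le\norm{t}+\norm{u}-1$. The paper's route keeps the stated bound as its own induction hypothesis, at the price of a case analysis. Similarly, your lower bound, which uses that $\rho$ permutes $\{1,\ldots,k\}$, replaces the paper's index sets $A,B$.

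One intermediate claim in (4) is false as written: $(t_{\pi^{-1}i})^*t_{\pi^{-1}i}$ need not have the skeleton of $t_{\pi^{-1}i}$. We have $(s^*s)_i=s_is_i^*$ but $(ss^*)_i=(s_{\pi^{-1}i})^*s_{\pi^{-1}i}$, so the two forms alternate with depth. For example, take $s=\<\iota;\<(12);\bar\epsilon,\<\iota;\bar\epsilon\>\>\>$. Then $\sk(s^*s)=\{\epsilon,1,11,111,12\}$, whereas $\sk(s)=\{\epsilon,1,11,12,121\}$. The skeletons agree only up to hereditary permutation of children. Also, the children of $tt^*$ are the $t_j^*t_j$, not the $t_jt_j^*$.

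Your conclusion survives, because only node counts are needed. Take (4) itself as the induction hypothesis, for $tt^*$ and $t^*t$ simultaneously. Then $\norm{tt^*}=1+\sum_j\norm{t_j^*t_j}$ and $\norm{t^*t}=1+\sum_i\norm{t_it_i^*}$, and both equal $\norm{t}$.
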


Two easy, yet remarkable, consequences of the previous lemma are the following facts.

\begin{proposition}
    Every finitely generated submonoid of $\FPT$ is finite. 
\end{proposition}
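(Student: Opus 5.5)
Let $G=\{g_1,\ldots,g_k\}\subseteq\FPT$ be a finite generating set of a submonoid $M$. The plan is to bound the norm of every element of $M$ uniformly, and then observe that finite permutation trees of bounded norm form a finite set. That second point is immediate. A finite permutation tree with $\norm t\le N$ has a skeleton among the finitely many finite ordered trees with at most $N$ nodes. Each node with $n$ children is labelled by an element of $S_n$, where $n<N$, and leaves are labelled by $\iota$. So there are only finitely many such trees, and the whole task reduces to bounding $\norm{\cdot}$ on $M$.

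The naive bound from Lemma~\ref{lem:norm}(3), $\norm{tu}\le\norm t+\norm u$, grows with the length of the word and is useless by itself. The key observation I would use instead is that the skeleton of a product is contained in the union of suitably rearranged skeletons of the factors. Concretely, I would prove by coinduction (here by induction, since the trees are finite) that
\[
\sk(tu)\subseteq \sk(u)\cup \sigma_u(\sk(t)),
\]
where $\sigma_u$ relabels the nodes of $t$ along the permutations of $u$. This follows from the product formula $tu=\<\pi\circ\rho; u_1t_{\rho1},\ldots,u_kt_{\rho k}\>$: the $i$-th subtree of $tu$ is built from $u_i$ and $t_{\rho i}$ only.

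Iterating this, every node of a product $g_{j_1}\cdots g_{j_r}$ should be a path whose letters are determined by following nodes of the generators, permuted by generator labels. The main obstacle is to turn this into a bound independent of $r$, since a path could in principle alternate between generators and grow. To handle it, I would define the depth of a node and show that at each depth $d$ the product has at most $B$ children per node, where $B$ is the maximal root degree occurring in $G$. The product never creates degree beyond the maximum of the factors' degrees at corresponding positions, because of the $\max\{n,m\}$ in the definition. Similarly the depth is at most $D$, the maximal depth of the trees in $G$, because the subtree $u_it_{\rho i}$ has depth bounded by the maximum of the depths of $u_i$ and $t_{\rho i}$.

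Both bounds are proved by a straightforward induction on the structure of the product: $\mathrm{depth}(tu)\le\max(\mathrm{depth}\,t,\mathrm{depth}\,u)$, and the maximal branching of $tu$ is at most the maximum of the branchings of $t$ and $u$. Hence every element of $M$ has depth at most $D$ and branching at most $B$, so its norm is at most $\sum_{d\le D}B^d$. By the counting argument in the first paragraph, $M$ is finite.
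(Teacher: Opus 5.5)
Your argument is correct, and it takes a different route from the one the paper indicates. The paper gives no written proof and presents the statement as a consequence of Lemma~\ref{lem:norm}. As you observe, though, the bound $\norm{tu}\le\norm t+\norm u$ grows with the length of the word, and the properties listed in that lemma cannot suffice on their own. The size of Munn trees in the free inverse monoid on one generator satisfies all of them: it is monotone and subadditive under products, invariant under $(-)^*$, strictly decreasing along the natural order, and each size class is finite. Yet that monoid is infinite and generated by $a,a^*$.

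Your invariants supply the structural input that is actually needed. Since $(tu)_i=u_it_{\rho i}$ and $\rho$ permutes $\{1,\ldots,\max\{n,m\}\}$, induction gives two facts. The depth of $tu$ is the maximum of the depths of $t$ and $u$. The maximal out-degree of $tu$ is at most the larger of those of $t$ and $u$. So every element of the submonoid has its skeleton inside the complete $B$-ary tree of depth $D$. This gives an explicit bound on the size of the submonoid, and it shows more. For fixed $B,D$ these trees form a finite inverse submonoid, closed under $(-)^*$ as well, so $\FPT$ is a directed union of finite inverse submonoids.

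There are two slips to fix. First, the displayed containment $\sk(tu)\subseteq\sk(u)\cup\sigma_u(\sk(t))$, with $\sigma_u$ depending only on $u$, is false. The product formula swaps the factors at each level, so from depth~$2$ on the nodes of $u$ are moved by the labels of $t$. For example, take $t=\langle\iota;\langle(12);\bar\epsilon,\bar\epsilon\rangle\rangle$ and $u=\langle\iota;\langle\iota;\langle\iota;\bar\epsilon\rangle,\bar\epsilon\rangle\rangle$. Then $tu=\langle\iota;\langle(12);\bar\epsilon,\langle\iota;\bar\epsilon\rangle\rangle\rangle$, whose node $121$ lies in neither $\sk(u)$ nor $\sk(t)$. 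Here $\sigma_u$ is the identity, since $u$ carries only identity labels. Your final argument uses only the depth and out-degree inequalities, so simply drop that paragraph.

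Second, $B$ must be the maximal number of children of \emph{any} node of a generator, not the maximal root degree. The generator $\langle\iota;\langle\iota;\bar\epsilon,\bar\epsilon,\bar\epsilon\rangle\rangle$ has root degree $1$ but a node of degree $3$. With that definition, your inductive bound closes the proof.
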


\begin{proposition}\label{lem:orderb} For every $t\in\FPT$, the equivalence classes $L_t$, $R_t$, $H_t$
of Green's relations are all finite sets. 
\end{proposition}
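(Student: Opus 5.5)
The plan is to reduce finiteness of each Green class to the size bounds of Lemma~\ref{lem:norm}. Since $\mathcal H = \mathcal L \cap \mathcal R$, we have $H_t \subseteq L_t$, so it suffices to show that $L_t$ and $R_t$ are finite. Moreover, $u \mapsto u^*$ is a bijection from $R_t$ onto $L_{t^*}$, because $uu^* = tt^*$ iff $(u^*)^*u^* = (t^*)^* t^*$. Since $t^* \in \FPT$, it is enough to prove that $L_t$ is finite for every $t \in \FPT$.

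Fix $t$ and take $u \in L_t$, so that $u^* u = t^* t =: e$, a finite idempotent. By Lemma~\ref{lem:norm}(4) and (2), $\norm{u} = \norm{\sk(u)} = \norm{u^*u} = \norm{e}$. Here I read $\norm{u}$ as the number of nodes of $\sk(u)$, so (4) gives $\norm u = \norm{u^*u}$ directly. Hence every $u \in L_t$ has exactly $\norm{e}$ nodes.

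It remains to see that there are only finitely many finite permutation trees with a fixed number $N$ of nodes. There are finitely many ordered finite trees with $N$ nodes; they are counted by Catalan numbers, and it is the gapless condition that makes this a finite set. On each such skeleton, a node with $k$ children carries a label in $S_k$ and a leaf carries $\iota$, so there are finitely many labellings. This gives $|L_t| < \infty$, and the same finiteness then passes to $R_t$ and $H_t$.

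As an alternative route, one could write $u = u u^* u$ and use the finiteness of $\{(v,w) : x = vw\}$, but the counting argument is more direct. I expect no real obstacle; the one step needing care is the identification $\norm{u} = \norm{u^*u}$, which is exactly what Lemma~\ref{lem:norm}(4) provides.
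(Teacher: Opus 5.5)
Your proof is correct and is essentially the paper's argument: the paper also fixes the idempotent (there $e=tt^*$, for $\mathcal R$) and uses Lemma~\ref{lem:norm} to bound the size of every $u$ with $uu^*=e$, so that only finitely many such $u$ exist. The differences are minor: you make explicit the count of permutation trees with a fixed number of nodes, which the paper leaves implicit (and gaplessness is indeed what makes it finite), and you get $R_t$ from $L_{t^*}$ via the involution $u\mapsto u^*$, where the paper says the other case is similar.
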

\begin{proof}
    By induction over the complexity of $t$.
    It $t=\bar\epsilon$, then $L_{\bar\epsilon}=R_{\bar\epsilon}=\{\bar\epsilon\}$.
    Otherwise, let $e= t t^*$. By definition, $t \,\mathcal R \, u$ iff $e=u u^*$. 
    Since $\{u : e=u u^*\}$ is finite by Lemma \ref{lem:norm}, we get the conclusion. 
    A similar proof works for $\mathcal L$.
\end{proof}

\begin{definition}
    A monoid $M$ is \emph{residually finite} if for every $u,v \in M$ there is a finite monoid $N$ and a homomorphism $\alpha: M \to N$ such that $\alpha(u) \neq \alpha(v)$. 
    We say that $N$ \emph{separates} $u$ and $v$. 
\end{definition}

\begin{theorem}
    The inverse monoid $\FPT$ is residually finite. 
\end{theorem}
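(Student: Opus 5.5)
We separate two distinct elements $u \neq v$ of $\FPT$ by a truncation homomorphism onto a finite monoid. Fix $h \in \mathbb N$ larger than the heights of both $u$ and $v$, and larger than every root degree occurring in them; let $N := \max(\norm u, \norm v)$. The idea is to work in the Rees quotient of $\FPT$ by an ideal consisting of elements that are ``too big''. By Lemma~\ref{lem:norm}(3), $\norm{tu} \ge \norm t, \norm u$. Hence the set
\[
I_N := \{ t \in \FPT : \norm t > N\}
\]
is a two-sided ideal: if $\norm t > N$, then $\norm{ts} \ge \norm t > N$ and $\norm{st} \ge \norm t > N$.

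The Rees quotient $\FPT / I_N$ identifies all of $I_N$ to a single zero element $0$ and keeps every other element distinct. The projection $\alpha : \FPT \to \FPT/I_N$ is a monoid homomorphism, since $\bar\epsilon \notin I_N$ when $N \ge 1$. Because $u$ and $v$ both lie outside $I_N$, we get $\alpha(u) = u \neq v = \alpha(v)$. So $\alpha$ separates $u$ and $v$, provided the quotient is finite.

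The key remaining step is finiteness: we must show there are only finitely many $t \in \FPT$ with $\norm t \le N$.
\begin{itemize}
\item A finite ordered tree with at most $N$ nodes has root degree and all node degrees at most $N-1$. There are finitely many such gapless trees, bounded for instance by the Catalan numbers.
\item Each node of degree $k$ carries a label in $S_k$, which is finite.
\item Leaves are forced to carry the label $\iota$.
\end{itemize}
Hence the number of permutation trees with at most $N$ nodes is finite, and $\FPT/I_N$ has at most that many elements plus one.

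I expect no real obstacle beyond checking that $I_N$ is an ideal, which rests on the lower bound $\norm t, \norm u \le \norm{tu}$ of Lemma~\ref{lem:norm}(3). If that bound were doubtful, I would verify it directly from the coinductive definition of the product. The root of $tu$ has degree $\max\{n,m\}$, and its $i$-th child is $u_i t_{\rho i}$. Since $\rho$ is a permutation of $\{1,\dots,k\}$, each child subtree $t_j$ and $u_i$ is absorbed into some child of $tu$ with at least its own size, so an induction on size gives the bound. The choice of $h$ is then unnecessary: the norm bound $N$ alone does the work.
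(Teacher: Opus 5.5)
Your proof is correct, and it takes a different route from the paper.

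The paper argues structurally. It first shows (Proposition~\ref{lem:orderb}) that every $\mathcal R$-class of $\FPT$ is finite. It then invokes the external criterion \cite[Lemma~5.3]{JM}, which yields residual finiteness of an inverse monoid from the finiteness of its $\mathcal R$-classes.

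You bypass Green's relations and the external lemma entirely. Your only inputs are:
\begin{itemize}
\item the lower bound $\norm t,\norm u\le\norm{tu}$ from Lemma~\ref{lem:norm}(3), which makes $I_N=\{t:\norm t>N\}$ a two-sided ideal;
\item the elementary count showing that only finitely many permutation trees have at most $N$ nodes.
\end{itemize}
The Rees quotient $\FPT/I_N$ is then an explicit finite monoid separating $u$ and $v$.

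What each approach buys:
\begin{itemize}
\item Your argument is self-contained. It gives an explicit bound on the size of the separating monoid: the number of permutation trees with at most $N$ nodes, plus one.
\item Your argument never uses the inverse structure. It works for any monoid carrying a size function that does not decrease under products and has finite level sets.
\item The paper's route instead places the result within the general residual-finiteness theory of inverse semigroups.
\item Your construction also gives something extra. Taking $N=\norm u$, the singleton $\{u\}$ is a class of the finite-index Rees congruence of $I_N$. Hence every point is open in the profinite topology, which directly yields the discreteness the paper mentions but leaves unproved as being beyond its scope.
\end{itemize}

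One small point: your auxiliary parameter $h$ plays no role and can simply be dropped.
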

\begin{proof}
    The conclusion follows from the fact that every 
    $\mathcal{R}$-class is finite by Proposition \ref{lem:orderb}, using 
    \cite[Lemma 5.3]{JM}. 
\end{proof}

Every residually finite monoid $S$ can be equipped with a natural topology, the so-called \emph{profinite} topology. 
This topology is defined by taking as a subbasis the equivalence classes of the congruences of finite index, that is, 
those congruences $\theta$ such that $S /\theta$ is finite. 
It is a metric topology with distance given by $d(u,v) = 2^{-r(u,v)}$ for all $u \neq v$, where $r(u,v)$ is the minimum cardinality of a monoid $M$ separating $u$ and $v$. 
With this topology, the monoid $S$ embeds into its profinite completion $\hat{S}$, which is obtained as the inverse limit of all its finite quotients. 
The monoid $\FPT$ is residually finite and therefore admits such a profinite topology; 
this topology turns out to be discrete.  
The proof of this fact, however, relies on technical arguments that go beyond the scope of the present article.
We leave the study of the completion of $\FPT$ for future work.

\subparagraph*{The covering relation}
Given two distinct $t$ and $u$ in $\FPT$, we write $t\prec u$ and say that $u$ \emph{covers} $t$ if $t< u$ and there is no $z$ such that $t<  z<  u$. 
The next two lemmas are proven by structural induction on $\FPT$s. 
The covering relation is characterised as follows.

\begin{lemma}\label{lem:singleton2} Let $t=\<\pi; t_1, \ldots, t_n\>$ and $u=\<\rho; u_1, \ldots, u_m\>$ be finite permutation trees. 
Then $t\prec u$ iff one of the following conditions holds:
	\begin{enumerate}
	\item $\pi = \rho$, $n=m$, and there is $k \in \mathbb N$ such that $t_k\prec u_k$ and $t_i= u_i$ for every $1\leq i\neq k\leq m$;
	\item $\pi = \rho$, $n=m+1$, $t_n=\bar\epsilon$ and $t_i= u_i$ for every $1\leq i\leq m$.
	\end{enumerate}
\end{lemma}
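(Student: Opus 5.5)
The plan is to work entirely with the description of the natural order in Lemma~\ref{lem:order}. To certify that some $z$ lies strictly between $t$ and $u$, it is enough to check conditions (1) and (2) of that lemma twice, once for $t\le z$ and once for $z\le u$, and then to observe $z\neq t,u$. Two facts will be used repeatedly.
\begin{itemize}
\item $\bar\epsilon$ is the unit, hence the top idempotent. So every $e\in E$ satisfies $e\le\bar\epsilon$, and this inequality is strict unless $e=\bar\epsilon$.
\item If $\bar\epsilon\le z$, then $z$ is idempotent by $E$-unitarity (Theorem~\ref{thm:perminverse}), so $z\le\bar\epsilon$ and therefore $z=\bar\epsilon$.
\end{itemize}
Also, a permutation $\pi\in S_m$ viewed in $S_n$ with $n>m$ fixes every index above $m$.

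\emph{($\Leftarrow$).} In case (1), Lemma~\ref{lem:order} gives $t<u$, using $t_k<u_k$ and $t_i=u_i$ for $i\neq k$. Now let $t\le z\le u$ with $z=\<\pi;z_1,\ldots,z_p\>$. The lemma gives $m\le p\le n=m$, so $p=n$. It also gives $t_i\le z_i\le u_i$ for every $i$. For $i\neq k$ this forces $z_i=t_i$. For $i=k$, the hypothesis $t_k\prec u_k$ forces $z_k\in\{t_k,u_k\}$. Hence $z\in\{t,u\}$.

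In case (2), $t<u$ holds because $t_n=\bar\epsilon\in E$ and $t\neq u$. An intermediate $z$ has degree $p\in\{m,m+1\}$, and $z_i=u_i$ for $i\le m$. If $p=m+1$, then $\bar\epsilon=t_n\le z_n$, so $z_n=\bar\epsilon$ and $z=t$. Otherwise $z=u$.

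\emph{($\Rightarrow$).} Assume $t\prec u$. By Lemma~\ref{lem:order}, $\pi=\rho$, $m\le n$, $t_i\le u_i$ for $i\le m$, and $t_i\in E$ for $i>m$. Suppose first that $n>m$. Then proceed in three steps.
\begin{enumerate}
\item If $t_n\neq\bar\epsilon$, let $z$ be $t$ with $t_n$ replaced by $\bar\epsilon$. Then $t<z\le u$, and $z\neq u$ because they have different degrees. This contradicts $t\prec u$, so $t_n=\bar\epsilon$.
\item If $n>m+1$, let $z:=\<\pi;t_1,\ldots,t_{n-1}\>$. This is well defined because $\pi$ fixes $n$. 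It lies strictly between $t$ and $u$, again by Lemma~\ref{lem:order} and degree counting. So $n=m+1$.
\item If some $t_i<u_i$ with $i\le m$, let $z:=\<\pi;u_1,\ldots,u_m,\bar\epsilon\>$. It is strictly between $t$ and $u$, since it differs from $t$ at index $i$ and from $u$ in degree. Hence $t_i=u_i$ for all $i\le m$, which is condition (2).
\end{enumerate}

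Suppose next that $n=m$. Since $t\neq u$, some index $k$ has $t_k<u_k$. If a second index $l\neq k$ also had $t_l<u_l$, then replacing $t_k$ by $u_k$ in $t$ would produce a strictly intermediate tree, which is impossible. So $t_i=u_i$ for all $i\neq k$. Finally, any $z_k$ with $t_k<z_k<u_k$ would lift to the intermediate tree obtained from $t$ by replacing $t_k$ with $z_k$. Therefore $t_k\prec u_k$, which is condition (1).

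The main obstacle is the bookkeeping in the case $n>m$. The order of the three steps matters: one must first establish $t_n=\bar\epsilon$, then collapse the degree to $m+1$, and only then compare the first $m$ subtrees. Each auxiliary $z$ must also be checked to be a genuine permutation tree, which is where the fact that $\pi\in S_m$ fixes indices above $m$ is needed. No induction beyond these one-level comparisons is required, since Lemma~\ref{lem:order} already handles the subtrees.
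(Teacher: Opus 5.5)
Your proof is correct and follows essentially the same route as the paper. Both characterise $\le$ via Lemma~\ref{lem:order}, rule out non-covering configurations by exhibiting explicit intermediate trees, and prove the converse by pinning down any $z$ with $t\le z\le u$; you treat $u=\bar\epsilon$ uniformly as the case $m=0$ and write out the $n>m$ constructions that the paper covers with ``similarly''. One small remark: your three steps in the case $n>m$ are actually independent, since each auxiliary $z$ only needs $t_j\in E$ for $j>m$, which Lemma~\ref{lem:order} already gives, so the claim that their order matters is unnecessary, though harmless.
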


\begin{lemma}\label{lem:leqprec} 
 Let $t,u \in\FPT$ such that $t<  u$. 
 Then there exist $n \in \mathbb{N}$ and $z_0,z_1,\dots,z_{n+1} \in \FPT$
 such that $z_0=t\prec z_1\prec\dots\prec z_{n}\prec z_{n+1}=u$. 
\end{lemma}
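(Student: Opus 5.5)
We argue by strong induction on $\norm t - \norm u$, which is a positive integer by Lemma~\ref{lem:norm}(1) since $t<u$. The set $t^\uparrow$ is finite by Lemma~\ref{lem:norm}. Hence the interval $[t,u]=\{z : t\le z\le u\}$ is a finite poset.

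If there is no $z$ with $t<z<u$, then $t\prec u$ and we take $n=0$. Otherwise, pick any $z$ with $t<z<u$. By Lemma~\ref{lem:norm}(1) we have $\norm u<\norm z<\norm t$, so both $\norm z-\norm u$ and $\norm t-\norm z$ are strictly smaller than $\norm t-\norm u$. The induction hypothesis then gives a covering chain from $t$ to $z$ and another from $z$ to $u$. Concatenating them yields the required chain $z_0=t\prec z_1\prec\dots\prec z_{n+1}=u$.

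An equivalent route avoids choosing $z$ arbitrarily. Among the elements of the finite nonempty set $\{z : t<z\le u\}$, pick one that is minimal for $\le$; call it $z_1$. By minimality, $t\prec z_1$. We then iterate with $z_1$ in place of $t$. The process terminates because $\norm{\cdot}$ strictly decreases along the chain and is bounded below by $\norm u$.

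The only point needing care is the strictness in Lemma~\ref{lem:norm}(1), which guarantees termination. If one prefers not to rely on it directly, one can instead use the structural characterisation of Lemma~\ref{lem:order}. The argument is a structural induction on $t$: when $t<u$, either some child inequality $t_i<u_i$ is strict, or $t$ has extra children $t_i\in E$ beyond the degree of $u$.
\begin{itemize}
\item In the first case, we refine the child chain $t_i<u_i$ by induction and lift each step with clause (1) of Lemma~\ref{lem:singleton2}.
\item In the second case, we first collapse each extra idempotent child to $\bar\epsilon$ through a chain of covers, again using clause (1). We then delete the resulting trailing $\bar\epsilon$ children one at a time using clause (2).
\end{itemize}
This constructive version is the main obstacle, because its bookkeeping is delicate. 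The finiteness argument via $\norm{\cdot}$ is much shorter, and we adopt it.
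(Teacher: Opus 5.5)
Your proof is correct and takes essentially the paper's route. The paper only observes that $t^{\uparrow}$ is finite by Lemma~\ref{lem:norm}. Your strong induction on $\norm{t}-\norm{u}$, and equally your minimal-element iteration, spell out that finiteness argument, using the same strict monotonicity from Lemma~\ref{lem:norm}(1).
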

\begin{proof}
    The conclusion follows because by Lemma \ref{lem:norm} $t^{\uparrow}$ is finite. 
\end{proof}

We shall see that the covering relation, modulo the embedding of Proposition \ref{prop:embed}, corresponds to $\eta$-reduction.

\subsection{Permutation trees and finite idempotents}
In the next section, the interaction between infinite permutation trees and idempotent $\FPT$s will be fundamental. 
This leads us to study a new order on $\PT$ inspired by the natural order and to investigate its maximal elements.
Recall that $E_\fin=E(\FPT)$. 

\begin{definition} We define ${\le_{\fin}}$ as the greatest relation on $\PT$ such that if $t \le_{\fin} u$ with 
    $t=\<\pi;t_1, \ldots, t_n\> \text{ and } u=\<\rho; u_1, \ldots, u_m\>\text{,}$ 
    then:
\begin{enumerate}
    \item $\pi = \rho$ and $m \le n$; 
    \item $t_i \le_{\fin} u_i$ for $1 \le i\le m$ and $t_i \in E_\fin$ for $m < i \le n$.  
\end{enumerate} 
\end{definition}

By coinduction, it is not difficult to see that this is a compatible partial order. 
Moreover, as $E_\fin \subseteq E$, if $t \le_{\fin} u$ then $t \le u$. 
In the next lemma we establish some properties of the order ${\le_{\fin}}$. 
The last one asserts some kind of unitarity with respect to $E_\fin$. 

\begin{lemma}
    \label{lem:oprime}
    Let $t,u \in\PT$. Then the following conditions hold: 
    \begin{enumerate}
        \item \label{lem:order-prime} there is $e \in E_\fin$ such that $t=eu$ iff there is $f \in E_\fin$ such that $t=uf$;  
        \item \label{lem:order-prime2} if there is $e \in E_\fin$ such that $t=ue$, then $t \le_{\fin} u$;
        \item  \label{lem:funi} if $f \le_{\fin} t$ with $f \in E_\fin$, then $t \in E_\fin$.
    \end{enumerate}
\end{lemma}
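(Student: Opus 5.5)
We prove the three items separately. Items (1) and (2) go by coinduction on $u$, mirroring the proof of Lemma \ref{lem:order}. Item (3) goes by an induction on depth that exploits the finiteness of $f$.

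For (1), assume $t=eu$ with $e\in E_\fin$, and write $u=\<\rho;u_1,\ldots,u_m\>$ and $e=\<\iota;e_1,\ldots,e_p\>$. By the definition of the product, $eu=\<\rho; u_1e_{\rho1},\ldots,u_ke_{\rho k}\>$ with $k=\max\{p,m\}$. We want a finite idempotent $f$ with $uf=eu$. Since $uf=\<\rho; f_1u_{1},\ldots,f_ku_{k}\>$ after padding, componentwise we need $f_iu_i=u_ie_{\rho i}$. For $i>m$ we have $u_i=\bar\epsilon$, so we put $f_i:=e_{\rho i}$. For $i\le m$ we apply the statement recursively to the pair $(u_ie_{\rho i},u_i)$, using its right-to-left analogue, and obtain $f_i\in E_\fin$ with $f_iu_i=u_ie_{\rho i}$.

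The recursion must terminate, and finiteness of $f$ is what guarantees this. It is cleanest to argue by induction on the depth of $e$ rather than by coinduction. When $e=\bar\epsilon$ we take $f=\bar\epsilon$. Otherwise every $e_{\rho i}$ has smaller depth, so the induction hypothesis applies to each component. The $f_i$ obtained are finite and have depth bounded by that of $e$, so $f$ is a finite tree and hence lies in $E_\fin$. The converse direction is symmetric; alternatively one can apply $(-)^*$, using $(eu)^*=u^*e$ and $(uf)^*=fu^*$ together with $e^*=e$.

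For (2), assume $t=ue$ with $e\in E_\fin$. We check that the relation $\{(ue,u): u\in\PT,\ e\in E_\fin\}$ satisfies the two clauses in the definition of $\le_\fin$; since $\le_\fin$ is the greatest such relation, this relation is contained in it. By the computation in Lemma \ref{lem:order}, $ue$ has root label $\rho$ and root degree $\max\{m,\rd(e)\}\ge m$. Its components are $(ue)_i=e_iu_i$ for $i\le m$ and $(ue)_i=e_i\in E_\fin$ for $i>m$. For $i\le m$, item (1) turns $e_iu_i$ into $u_if_i$ with $f_i\in E_\fin$, so the pair $(e_iu_i,u_i)$ lies in the relation again. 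This closes the coinduction.

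For (3), we use induction on the finite depth of $f$. From $f\le_\fin t$ with $f=\<\iota;f_1,\ldots,f_n\>$ and $t=\<\pi;t_1,\ldots,t_m\>$, the first clause gives $\pi=\iota$ and $m\le n$. For each $i\le m$ we have $f_i\le_\fin t_i$ with $f_i\in E_\fin$ of smaller depth, so the induction hypothesis gives $t_i\in E_\fin$. Hence $t$ is finite and all its labels are $\iota$, that is, $t\in E_\fin$.

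The only genuinely delicate point is the finiteness bookkeeping in (1). One must make sure that the witnesses $f$ produced by the recursion are finite, and this is why the argument is organised as an induction on the depth of $e$ rather than as a bare coinduction.
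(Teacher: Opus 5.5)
Your proof is correct and follows the paper's argument: induction on the finite idempotent $e$ for (1), coinduction on $\le_{\fin}$ using (1) for (2), and induction on $f$ for (3). In (1) you are a bit more careful than the paper: you make explicit that the inductive step needs the converse direction on the subtrees, handled either by a simultaneous induction or by the $(-)^*$ symmetry, and you track the $\rho$-reindexing of the components correctly.
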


\begin{definition}
    \label{def:maxprime}
    Let $t=\<\pi; t_1, \ldots, t_n\> \in\PT$. 
    We define coinductively $t^{\mathfrak{m}'}$ as: 
\begin{equation*}
    t^{\mathfrak{m}'}:=
    \begin{cases}
        (\<\pi; t_1, \ldots, t_{n-1}\>)^{\mathfrak{m}'}, & \text{ if } t_n \in E_\fin \text{ and } \pi n = n,\\
        \<\pi; (t_1)^{\mathfrak{m}'}, \ldots, (t_n)^{\mathfrak{m}'}\>, & \text{ otherwise.}
    \end{cases}
\end{equation*}
\end{definition}
\begin{proposition}
    \label{lem:ume'} For every $t,u \in\PT$, the following hold: 
    \begin{enumerate}
        \item $t \le_{\fin} t^{\mathfrak{m}'}$; 
        \item if $t \le_{\fin} u$, then $u \le_{\fin} t^{\mathfrak{m}'}$. 
    \end{enumerate}
    In particular, $t^{\mathfrak{m}'}$ is the unique maximal element in $t^{\uparrow'}:=\{u : t \le_{\fin} u\}$. 
\end{proposition}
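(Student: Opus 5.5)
I would follow the proof of Proposition~\ref{lem:ume} almost verbatim, replacing the natural order $\le$ by $\le_\fin$, $E$ by $E_\fin$, $(-)^{\mathfrak m}$ by $(-)^{\mathfrak{m}'}$, and $E$-unitarity by the finite unitarity property of Lemma~\ref{lem:oprime}(\ref{lem:funi}). Since $\le_\fin$ is the greatest relation satisfying conditions (1) and (2), both claims are proved by coinduction. For each claim I exhibit a relation containing the pairs in question and check that it satisfies the two defining clauses, using the coinductive hypothesis on the subtrees.

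First I would unfold the definition of $t^{\mathfrak{m}'}$. If $t \in E_\fin$, then by induction on the finite skeleton every child is eventually removed, so $t^{\mathfrak{m}'} = \bar\epsilon$. In this case (1) is immediate from the clauses, since $\bar\epsilon$ has root degree $0$ and all children of $t$ lie in $E_\fin$. For (2), $t \le_\fin u$ with $t\in E_\fin$ gives $u \in E_\fin$ by a dual unitarity argument: the root label of $u$ is $\iota$, the subtrees $u_i$ satisfy $t_i \le_\fin u_i$ with $t_i\in E_\fin$, so by induction on the finite $t$ each $u_i \in E_\fin$, and $u$ is finite because it has no more nodes than $t$. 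Then $u \le_\fin \bar\epsilon$ holds trivially.

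Otherwise, I claim the recursion in the first clause terminates after finitely many steps, because it strips at most $n$ trailing children. Let $k$ be the greatest index with $t_k\notin E_\fin$ or $\pi k\neq k$. One subtlety needs checking: after the last stripped child, $\pi$ must still lie in $S_k$. This holds since $\pi i = i$ for all $i>k$. Hence $t^{\mathfrak{m}'}=\<\pi;(t_1)^{\mathfrak{m}'},\ldots,(t_k)^{\mathfrak{m}'}\>$. For (1), the relation $\{(t,t^{\mathfrak{m}'})\}$ satisfies clause (1) since $k\le n$, and clause (2) since $t_i\in E_\fin$ for $i>k$, with the pairs $(t_i,(t_i)^{\mathfrak{m}'})$ again in the relation.

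For (2), take $u=\<\rho;u_1,\ldots,u_m\>$ with $t\le_\fin u$, so $\rho=\pi$ and $m\le n$. The main obstacle is the bookkeeping that the indices $k<i\le m$ cause no trouble. For these indices $t_i\in E_\fin$ and $t_i\le_\fin u_i$, so $u_i\in E_\fin$ by Lemma~\ref{lem:oprime}(\ref{lem:funi}). For $i\le\min(k,m)$ the coinductive hypothesis gives $u_i\le_\fin (t_i)^{\mathfrak{m}'}$.

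If $m\ge k$, this is exactly what clause (2) requires for $u \le_\fin t^{\mathfrak{m}'}$. If $m<k$, I would show this case is impossible: the clauses of $t\le_\fin u$ force $t_i\in E_\fin$ for all $i>m$, and $\pi\in S_m$ forces $\pi i=i$ for all $i>m$. This contradicts the choice of $k$.

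Thus the relation $\{(u,t^{\mathfrak{m}'}) : t\le_\fin u\}$ is a $\le_\fin$-simulation, which proves (2). Uniqueness of the maximum in $t^{\uparrow'}$ then follows from (1), (2), and antisymmetry of $\le_\fin$.
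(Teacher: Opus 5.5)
Your proof is correct and follows essentially the same route as the paper's. You unfold $t^{\mathfrak{m}'}$ via the largest index $k$ with $t_k\notin E_\fin$ or $\pi k\neq k$, get (1) by coinduction, and get (2) by coinduction plus Lemma~\ref{lem:oprime}(\ref{lem:funi}) for the indices $k<i\le m$. Your version is a bit tighter than the paper's: you take $t\in E_\fin$ as the base case (the paper says ``idempotent'', which would not cover infinite idempotents), you check $\pi\in S_k$ after stripping, and you explicitly rule out $m<k$ where the paper just assumes $n=m$ without loss of generality.
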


\begin{lemma}
    \label{lem:trans}
    For every $t, u,v \in \PT$, if $t \le_{\fin} v $ and $u \le_{\fin}v$, then there is $w \in \PT$ such that $w \le_{\fin} t$ and $w \le_{\fin} u$. 
\end{lemma}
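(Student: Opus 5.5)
The plan is to construct $w$ explicitly by coinduction, mirroring Lemma~\ref{lem:order}. The candidate is the ``meet-like'' tree obtained by keeping the common shape of $v$ and padding with the extra finite idempotent subtrees of both $t$ and $u$. Concretely, write $t=\<\pi;t_1,\ldots,t_n\>$, $u=\<\pi';u_1,\ldots,u_{n'}\>$ and $v=\<\rho;v_1,\ldots,v_m\>$. From $t\le_\fin v$ and $u\le_\fin v$ we get $\pi=\rho=\pi'$ and $m\le n$, $m\le n'$. For $1\le i\le m$ we also have $t_i\le_\fin v_i$ and $u_i\le_\fin v_i$. For $m<i\le n$ we have $t_i\in E_\fin$, and for $m<i\le n'$ we have $u_i\in E_\fin$.

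Set $k=\max\{n,n'\}$ and define coinductively $w:=\<\pi;w_1,\ldots,w_k\>$, where:
\begin{itemize}
\item for $1\le i\le m$, $w_i$ is the tree produced by the coinductive hypothesis applied to $t_i\le_\fin v_i$ and $u_i\le_\fin v_i$;
\item for $m<i\le k$, $w_i:=t_iu_i$, with the convention $t_i=\bar\epsilon$ for $i>n$ and $u_i=\bar\epsilon$ for $i>n'$.
\end{itemize}
By Proposition~\ref{lem:idemp}, for $m<i\le k$ the tree $w_i=t_i\cup u_i$ is a finite union of finite skeletons, hence lies in $E_\fin$. Moreover $\pi\in S_m\subseteq S_k$, so the root label is legitimate. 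Formally, $w$ is the coinductively defined function of the triple, which is fine since we only need some witness. Alternatively, take $w:=tu^*u$ and check the conditions coinductively; the explicit construction is cleaner.

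It remains to show $w\le_\fin t$ and $w\le_\fin u$. Since $\le_\fin$ is the greatest relation satisfying conditions (1) and (2), it suffices to exhibit a relation $R$ that contains these pairs and satisfies (1)--(2). Take $R$ to be the set of pairs $(w,t)$ and $(w,u)$ produced by the construction from any triple with $t,u\le_\fin v$, united with ${\le_\fin}$ itself. We check $(w,t)$; the case $(w,u)$ is symmetric.
\begin{itemize}
\item The root labels agree, and $n\le k$.
\item For $1\le i\le m$, the pair $(w_i,t_i)$ is in $R$ by construction.
\item For $m<i\le n$, we have $w_i=t_iu_i\le t_i$ in $E_\fin$. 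We need $w_i\le_\fin t_i$. Both are finite idempotents with $t_i\subseteq w_i$ as skeletons, and the definition of $\le_\fin$ on idempotents reduces to this inclusion. This uses the fact that for $d,e\in E_\fin$ with $e\subseteq d$ we have $d\le_\fin e$: the extra children of $d$ are finite idempotents, and we proceed by induction on the finite tree. Equivalently, $d=ed$ with $d\in E_\fin$, so Lemma~\ref{lem:oprime}(\ref{lem:order-prime2}) applies.
\item For $n<i\le k$, the requirement is only $w_i\in E_\fin$, which holds.
\end{itemize}

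The main obstacle is this bookkeeping for indices beyond $m$. There one must check that the extra children on the $w$ side are finite idempotents and that they are $\le_\fin$ the corresponding children of $t$. Lemma~\ref{lem:oprime}(\ref{lem:order-prime2}), with $e=u_i\in E_\fin$, gives exactly $t_iu_i\le_\fin t_i$, and finiteness is guaranteed because $E_\fin$ is closed under union.
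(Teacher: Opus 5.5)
Your proof is correct and follows the paper's own argument: the paper builds the same $w$, namely coinductive lower bounds for $i\le \rd(v)$, then $t_iu_i$ on the extra indices common to $t$ and $u$, then the leftover children of the longer of the two trees, which your $\bar\epsilon$-padding absorbs into $t_iu_i$. It checks the extra indices via Lemma~\ref{lem:oprime}(\ref{lem:order-prime2}) exactly as you do, and your explicit relation $R$ simply spells out the coinduction that the paper leaves implicit.
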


In analogy with the minimum group congruence, we define a relation ${\sigma'_{\PT}}$ by letting $t \, \sigma'_{\PT} \, u$ if there is $w \in\PT$ such that $w \le_{\fin} t$ and $w \le_{\fin} u$. 
\begin{lemma}
    \label{lem:primecong}
    The relation ${\sigma'_{\PT}}$ is a congruence on $PT$ such that $\sigma'_{\PT} \subseteq \sigma_{\PT}$. 
\end{lemma}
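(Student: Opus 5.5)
We need to show three things: $\sigma'_{\PT}$ is an equivalence relation, it is compatible with multiplication, and $\sigma'_{\PT} \subseteq \sigma_{\PT}$. The inclusion is immediate: we observed that $t \le_{\fin} u$ implies $t \le u$, so a common $\le_{\fin}$-lower bound of $t$ and $u$ is also a common lower bound in the natural order, i.e.\ $t \,\sigma_{\PT}\, u$.

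\emph{Equivalence relation.} Reflexivity holds because $\le_{\fin}$ is reflexive (take $w=t$), and symmetry is built into the definition. For transitivity, suppose $w \le_{\fin} t, u$ and $w' \le_{\fin} u, v$. Then $w$ and $w'$ are both $\le_{\fin}$-below $u$, so Lemma~\ref{lem:trans} gives $w''$ with $w'' \le_{\fin} w$ and $w'' \le_{\fin} w'$. Since $\le_{\fin}$ is transitive (it is a partial order), $w'' \le_{\fin} t$ and $w'' \le_{\fin} v$, hence $t \,\sigma'_{\PT}\, v$. This step is exactly what Lemma~\ref{lem:trans} is designed for.

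\emph{Compatibility.} Assume $w \le_{\fin} t$ and $w \le_{\fin} u$, and let $s \in \PT$. We use the remark after the definition of $\le_{\fin}$ that it is compatible with multiplication: if $a \le_{\fin} b$ and $c \le_{\fin} d$ then $ac \le_{\fin} bd$. Taking $c=d=s$ and using reflexivity gives $ws \le_{\fin} ts$ and $ws \le_{\fin} us$, so $ts \,\sigma'_{\PT}\, us$; the left case $sw \le_{\fin} st, su$ is symmetric.

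\emph{Main obstacle.} The compatibility of $\le_{\fin}$ was only asserted (``by coinduction, it is not difficult to see''). If it must be verified, I would proceed by coinduction. Write $a=\<\pi;a_1,\ldots,a_n\> \le_{\fin} b=\<\pi;b_1,\ldots,b_m\>$ and $c=\<\rho;c_1,\ldots,c_p\> \le_{\fin} d=\<\rho;d_1,\ldots,d_q\>$. Then both products have root label $\pi\circ\rho$, and the arity condition holds. The children are $c_i a_{\rho i}$ and $d_i b_{\rho i}$, which are related coinductively whenever both factors are related. The delicate cases are indices where some factor on the $bd$ side is the padding $\bar\epsilon$ while the corresponding factor of $ac$ lies in $E_\fin$. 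This needs two auxiliary facts. First, $x \le_{\fin} y$ and $e \in E_\fin$ imply $xe \le_{\fin} y$ and $ex \le_{\fin} y$; this follows from Lemma~\ref{lem:oprime}(\ref{lem:order-prime},\ref{lem:order-prime2}) and transitivity. Second, for children of $ac$ beyond the arity of $bd$, both factors lie in $E_\fin$, so the child is in $E_\fin$ because $E_\fin$ is closed under products (union of finite skeletons, Proposition~\ref{lem:idemp}). Checking that the permutation $\rho$ sends padded positions to padded positions, given that the arities on the right are no larger, is the only bookkeeping I expect to require care.
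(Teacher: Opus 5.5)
Your proof is correct and follows the paper's argument: reflexivity and symmetry are immediate, transitivity comes from Lemma~\ref{lem:trans} applied to the two lower bounds lying $\le_{\fin}$-below the shared middle element, and the congruence property is inherited from the compatibility of $\le_{\fin}$. Beyond the paper's proof, you make explicit the inclusion $\sigma'_{\PT} \subseteq \sigma_{\PT}$ via $\le_{\fin} \subseteq \le$, which the paper leaves implicit, and you sketch the compatibility of $\le_{\fin}$ that the paper only asserts; that sketch is sound, using Lemma~\ref{lem:oprime} at the padded positions and closure of $E_{\fin}$ under products.
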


\begin{proof}
    Reflexivity and symmetry are immediate and transitivity follows from Lemma \ref{lem:trans}. 
    Moreover, $\sigma'_{\PT}$ is a congruence because $\leq_{\fin}$ is compatible.
\end{proof}

\begin{example}
    \label{ex:fininf}
    The infinite permutation tree $e=\{1^n : n \in \mathbb{N}\}$ 
    is idempotent. 
    Now, $e^{\mathfrak{m}}=\bar \epsilon$ while $e^{\mathfrak{m}'}=e$. 
    This implies that $e \nleq_{\fin} \bar \epsilon$
    and that $\sigma'_{\PT}$ is strictly contained in $\sigma_{\PT}$.
\end{example}

\begin{remark}
    \label{rem:fininf}
    We observe that if $t \in \PT$ is infinite, then $t^{\mathfrak{m}'}$ is infinite, while $t^{\mathfrak{m}}$ may be finite. 
    As a consequence, for an idempotent $e \in E$, $e \, \sigma'_{\PT} \, \bar \epsilon$ iff $e \in E_{\fin}$. 
\end{remark}

The congruences ${\sigma'_{\PT}}$ and ${\sigma_{\PT}}$ coincide on finite permutation trees. 
In particular, $\FPT /{\sigma_{\PT}} \simeq \FPT /{\sigma'_{\PT}}$. 
We characterise the congruence $\sigma'_{\PT}$ in a way analogous to $\sigma_{\PT}$. 

\begin{proposition}
    \label{prop:maxprime}
    For all $t,u \in\PT$, we have: $t \, \sigma'_{\PT} \, u$ iff $t^{\mathfrak{m}'} = u^{\mathfrak{m}'}$. 
\end{proposition}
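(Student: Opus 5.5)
The plan is to imitate the proof of Lemma \ref{lem:max}, but to argue directly with $\le_{\fin}$ instead of passing through compatibility. The reason is that we cannot invoke an ``$E_\fin$-unitary'' analogue of ${\sim} = \sigma$ without first proving one.

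For the forward direction, assume $t \, \sigma'_{\PT} \, u$, and fix $w$ with $w \le_{\fin} t$ and $w \le_{\fin} u$. I will show that $w^{\mathfrak{m}'} = t^{\mathfrak{m}'}$, and symmetrically $w^{\mathfrak{m}'} = u^{\mathfrak{m}'}$. Proposition \ref{lem:ume'}(1) gives $t \le_{\fin} t^{\mathfrak{m}'}$. Since $\le_{\fin}$ is a partial order, and in particular transitive, we get $w \le_{\fin} t^{\mathfrak{m}'}$. Proposition \ref{lem:ume'}(2) then gives $t^{\mathfrak{m}'} \le_{\fin} w^{\mathfrak{m}'}$. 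In the other direction, $w \le_{\fin} t$ together with Proposition \ref{lem:ume'}(2) gives $t \le_{\fin} w^{\mathfrak{m}'}$, and a further application of (2), now with $t$ in place of $w$, gives $w^{\mathfrak{m}'} \le_{\fin} t^{\mathfrak{m}'}$. Antisymmetry yields $w^{\mathfrak{m}'} = t^{\mathfrak{m}'}$. This is exactly the statement that $t^{\mathfrak{m}'}$ is the unique maximal element of $t^{\uparrow'}$, which coincides with $w^{\uparrow'}$ once the two maxima agree.

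For the converse, assume $v := t^{\mathfrak{m}'} = u^{\mathfrak{m}'}$. Proposition \ref{lem:ume'}(1) gives $t \le_{\fin} v$ and $u \le_{\fin} v$. Lemma \ref{lem:trans} then produces $w$ with $w \le_{\fin} t$ and $w \le_{\fin} u$, which is precisely the definition of $t \, \sigma'_{\PT} \, u$.

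Both directions are therefore formal consequences of Proposition \ref{lem:ume'}, Lemma \ref{lem:trans}, and the fact that $\le_{\fin}$ is a partial order. The only point needing care is the last of these: transitivity and antisymmetry of $\le_{\fin}$ were asserted ``by coinduction'' in the paper. If I need to justify them, I would do so as follows. For transitivity, show that the composite relation satisfies conditions (1)--(2) of the definition of $\le_{\fin}$; here one uses that $t_i \in E_\fin$ and $t_i \le_{\fin} u_i$ imply $u_i \in E_\fin$, by Lemma \ref{lem:oprime}(\ref{lem:funi}). For antisymmetry, note that the degrees are forced to be equal, and then the relation $\{(t,u) : t\le_{\fin} u,\ u \le_{\fin} t\}$ is a bisimulation. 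I expect this verification to be the main (minor) obstacle. The rest of the argument is immediate.
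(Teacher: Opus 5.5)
Your proof is correct and follows the paper's argument. The converse is exactly the paper's appeal to Proposition~\ref{lem:ume'}(1) and Lemma~\ref{lem:trans}. In the forward direction, you spell out the step $w^{\mathfrak{m}'}=t^{\mathfrak{m}'}=u^{\mathfrak{m}'}$, which the paper asserts in one line; you do this via Proposition~\ref{lem:ume'} together with transitivity and antisymmetry of $\le_{\fin}$.

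One slip in your optional aside on transitivity. Showing the composite relation satisfies the definition of $\le_{\fin}$ needs the \emph{downward} closure of $E_\fin$: if $t_i \le_{\fin} u_i$ and $u_i \in E_\fin$, then $t_i \in E_\fin$, which follows by induction on the finite tree $u_i$. It does not need the upward closure given by Lemma~\ref{lem:oprime}(\ref{lem:funi}).
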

\begin{proof}
$(\Rightarrow)$
    If $t \, \sigma'_{\PT} \, u$ then, by definition, there is $w \in\PT$ such that $w \le_{\fin} t$ and $w \le_{\fin} u$. 
    Therefore $w^{\mathfrak{m}'}=t^{\mathfrak{m}'}=u^{\mathfrak{m}'}$. 
$(\Leftarrow)$
    If $t^{\mathfrak{m}'} = u^{\mathfrak{m}'}$ holds, then $t \le_{\fin} t^{\mathfrak{m}'}$ and $u\le_{\fin}u^{\mathfrak{m}'}=t^{\mathfrak{m}'}$. 
    By Lemma \ref{lem:trans}, there exists $w$ such that $w \le_{\fin} t$ and $w \le_{\fin}u$, whence $t \, \sigma'_{\PT} \, u$. 
\end{proof}

By Remark \ref{rem:fininf}, the group of units of $\PT/\sigma'_{\PT}$ is the group $\FPT/\sigma_{\PT}$. 

%% file: ptlambda.tex
In this section we apply and adapt the previous results to the study of inverse monoids and groups in the $\lambda$-calculus.
The natural order ${\le}$ and the order ${\le_{\fin}}$ on permutation trees are closely related to notions of $\eta$-expansion.
The general construction extracting a canonical group from an inverse monoid recovers the invertible $\lambda$-terms in various $\lambda$-theories.

By Proposition \ref{prop:embed}, the image $\mathscr{H\!\!P}$ of the embedding $(-)^+: \PT \to \mathscr{B}$
is an $F$-inverse monoid isomorphic to the monoid $\PT$ of permutation trees. 
Thus, we will transfer all the notions introduced for $\PT$ to $\mathscr{H\!\!P}$, including the natural order, the maximal elements, and the minimum group congruence. 

Restricting the embedding $(-)^+$ to recursively enumerable permutation trees yields an $E$-unitary inverse monoid $\mathscr{H\!\!P}_{\mathrm{rec}}$, where $\mathscr{H\!\!P}_{\mathrm{rec}} = \{\BT(M) : M \in \HP\}$. 
This monoid is not $F$-inverse, since $\BT(M)^\mathfrak{m}$ is not in general recursively enumerable.
We note that $\mathscr{H\!\!P}_{\mathrm{rec}} \simeq \HP/\th B$. 
The set of idempotents of $\HP/\th B$ is the set $\mathcal{I}^{\eta}_{\omega}/\th B$ of possibly infinite $\eta$-expansions of $\mathbf I$ modulo $\th B$
and it forms a $\land$-semilattice (see also \cite[Thm.~11.22]{BM22}).

Restricting $(-)^+$ to finite permutation trees yields an inverse monoid, $\FHP/\boldsymbol{\lambda}$, since $\th B$ and $\boldsymbol{\lambda}$ coincide on $\beta$-normalisable $\lambda$-terms. 
We recover the fact that the idempotent elements $\mathcal{I}^{\eta}/\boldsymbol{\lambda}$ form a $\land$-semilattice \cite{IN03}.

\begin{proposition}
    \label{thm:zeta}
     There exist minimal $\lambda$-theories $\th T, \th T'$ such that $\HP / \th T$ and $\FHP / \th T'$ are groups. 
     Moreover, $\th T'=\boldsymbol{\lambda \eta} \subsetneq \th T \subseteq\th H^*$. 
\end{proposition}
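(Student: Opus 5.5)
The plan is to build both theories explicitly as the least $\lambda$-theories that collapse the idempotents, and to use the fact that a group has exactly one idempotent. For $\th T'$ the candidate is $\boldsymbol{\lambda\eta}$. By Theorem \ref{thm:inv}(2), $\FHP/\boldsymbol{\lambda\eta}$ is a group. For minimality, let $\th S$ be any $\lambda$-theory such that $\FHP/\th S$ is a group. The term $\mathbf 1$ lies in $\FHP$ and satisfies $\mathbf 1\circ\mathbf 1=_\beta\mathbf 1$, so it is an idempotent of the monoid $\FHP/\th S$. In a group the only idempotent is the unit, hence $\th S\vdash \mathbf 1=\mathbf I$. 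Therefore $\th S$ is extensional and $\boldsymbol{\lambda\eta}\subseteq\th S$. So $\boldsymbol{\lambda\eta}$ is the least such theory.

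For $\th T$, I would take the least $\lambda$-theory containing the equations $M=\mathbf I$ for every $M\in\mathcal I^\eta_\omega$. To show $\HP/\th T$ is a group, take $M\in\HP$. Since $\BT(M)$ is recursively enumerable with no free variables, so is $\BT(M)^*$, the image of $t^*$ under Proposition \ref{prop:embed}. Hence there is a combinator $M^*$ with $\BT(M^*)=\BT(M)^*$. By Proposition \ref{prop:embed}, $\BT(M\circ M^*)=\BT(M)\circ\BT(M)^*$ is of the form $(tt^*)^+$ with $tt^*\in E$. It therefore lies in $\mathscr I$, so $M\circ M^*\in\mathcal I^\eta_\omega$ and $\th T\vdash M\circ M^*=\mathbf I$. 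Symmetrically $\th T\vdash M^*\circ M=\mathbf I$. Thus every element of $\HP/\th T$ has a two-sided inverse, and $\HP/\th T$ is a group.

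Minimality of $\th T$ is the step I expect to be the main obstacle. Let $\th S$ make $\HP/\th S$ a group and let $M\in\mathcal I^\eta_\omega$. Then $M\circ M$ and $M$ have the same Böhm tree, by Proposition \ref{lem:idemp} transported through $(-)^+$, but they need not be $\beta$-convertible. So we cannot directly conclude that $M$ is idempotent in $\HP/\th S$.
\begin{itemize}
\item My first attempt is to avoid needing $M\circ M=M$: I would use the $\th S$-inverse $N$ of $M$ and try to show $\th S\vdash M=\mathbf I$ by exhibiting $M$ as a fixed point. Take $M=\mathbf Y F$ for a suitable $F$, since r.e. trees in $\mathscr I$ are generated by recursion, e.g.\ $\mathbf J=\mathbf Y(\lambda fgx.g(fx))$. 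Then cancel $M$ in the group $\HP/\th S$ against an equation of the form $M=_\beta \mathbf 1$-like$\circ(\ldots M\ldots)$.
\item If the fixed-point route fails in general, the fallback is to interpret ``minimal'' among theories containing $\th B$. There $M\circ M\,\th B\,M$ holds, so $M$ is idempotent and must equal $\mathbf I$, and the argument is identical to the one for $\boldsymbol{\lambda\eta}$.
\end{itemize}

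Finally, the inclusions. First, $\boldsymbol{\lambda\eta}\subseteq\th T$: indeed $\mathbf 1\in\mathcal I^\eta_\omega$ gives $\th T\vdash\mathbf 1=\mathbf I$. The inclusion is strict because $\th T\vdash\mathbf J=\mathbf I$, whereas $\mathbf J$ has no $\beta$-normal form and so $\boldsymbol{\lambda\eta}\nvdash\mathbf J=\mathbf I$. Second, $\th T\subseteq\th H^*$, which holds because $\th H^*$ contains all the generating equations of $\th T$. Indeed, $\HP/\th H^*$ is a group by Theorem \ref{thm:inv}(3), and each $M\in\mathcal I^\eta_\omega$ is idempotent modulo $\th H^*$ since $\th B\subseteq\th H^*$. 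Hence $\th H^*\vdash M=\mathbf I$.
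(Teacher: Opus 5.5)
The gap is the minimality of $\th T$, which you leave open. Your first proposed route cannot close it.

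Several parts are correct. Your treatment of $\th T'$ is fine and more direct than the paper's, which reads $\th T'$ off Lemma~\ref{lem:mgc} as the theory axiomatised by $\{M=\mathbf I : M\in\mathcal I^\eta\}$ and then checks that this is $\boldsymbol{\lambda\eta}$; you only need the single $\beta$-idempotent $\mathbf 1$ to force extensionality. Your proof that $\HP/\th T$ is a group, and the chain $\boldsymbol{\lambda\eta}\subsetneq\th T\subseteq\th H^*$, are also fine.

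Why the fixed-point route fails: an arbitrary $M\in\mathcal I^\eta_\omega$ is only B\"ohm-tree-equal, not $\beta$-convertible, to a fixed point $\mathbf YF$ of convenient shape. Passing from one to the other is exactly the $\th B$-step that is unavailable in an arbitrary $\th S$. Even for $M=\mathbf J$ itself there is nothing to cancel. Write $\Phi(X):=\lambda g.\,g\circ X$; then $\mathbf J=_\beta\Phi(\mathbf J)$ and $\Phi(A)\circ\Phi(B)=_\beta\Phi(B\circ A)$. The group equations give that the $\th S$-inverse of $\mathbf J$ is again a fixed point of $\Phi$ modulo $\th S$, and so is $\mathbf J\circ\mathbf J$. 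But group cancellation says nothing about uniqueness of fixed points of $\Phi$. Getting $\mathbf J=\mathbf I$ this way needs $\mathbf J\circ\mathbf J=\mathbf J$, which is precisely the B\"ohm-tree identification you were trying to avoid.

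Your fallback is the right route, and it is in substance the paper's proof. The paper's one-line argument gets minimality from Lemma~\ref{lem:mgc}, which only applies inside an $E$-unitary inverse monoid. Here that monoid is $\HP/\th B\simeq\mathscr{H\!\!P}_{\mathrm{rec}}$, whose idempotents are exactly $\mathcal I^\eta_\omega/\th B$. So what is actually established is minimality among theories in which $M\circ M$ and $M$ are identified for every $M\in\mathcal I^\eta_\omega$, not among all $\lambda$-theories. For $\th T'$ no restriction is needed, because finite $\eta$-expansions of $\mathbf I$ are already $\beta$-idempotent; that is why your unrestricted argument works there.

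To turn the fallback into a proof, make the class match the object. As you defined it, $\th T$ is generated over $\boldsymbol\lambda$ by equations between solvable terms, and there is no reason for it to contain $\th B$. So it is not itself a member of the class of $\lambda$-theories $\supseteq\th B$ over which you minimise. You have two options:
\begin{itemize}
\item Let $\th T$ be generated by $\th B$ together with the equations $M=\mathbf I$. Your group argument and your inclusion proofs then go through unchanged, using $\th B\subseteq\th H^*$.
\item Keep your $\th T$ and minimise over the $\lambda$-theories $\th S$ such that $\HP/\th S$ is a group and every $M\in\mathcal I^\eta_\omega$ is idempotent modulo $\th S$. Your $\th T$ belongs to this class, your fallback argument shows it lies below every member, and the class contains every extension of $\th B$ that makes $\HP$ a group.
\end{itemize}
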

\begin{proof}
    By Lemma \ref{lem:mgc} the $\lambda$-theory $\th T$ is  axiomatised by $\{M=\mathbf I: M\in \mathcal I^\eta_\omega \}$ and the $\lambda$-theory $\th T'$  by $\{M=\mathbf I: M\in \mathcal I^\eta\}$. 
    We have $\th T'=\boldsymbol{\lambda\eta}$, because $\th T'\vdash \mathbf{1}= \mathbf I$ and $\boldsymbol{\lambda\eta} \vdash M=\mathbf I$ for every $M\in \mathcal I^\eta$. 
    Moreover, $\th T'$ is strictly contained into $\th T$, because every $M\in \mathcal I^\eta_\omega\setminus \mathcal I^\eta$ has no $\beta\eta$-normal form, and therefore $M=\mathbf I$ cannot be proven in $\th T'$. 
\end{proof}

We conjecture that the theory $\th T$ of the previous proposition is strictly contained in $\th H^*$. 

Proposition \ref{prop:key} is central to determining invertible elements modulo $\lambda$-theories.
In a $F$-inverse monoid $S$, the minimum group congruence $\sigma_S$ collapses all idempotents in the $\sigma_S$-class of the unit $1$, so every $t \in S$ becomes invertible modulo $\sigma_S$ with inverse $t^*$.
Since maximal elements are canonical representatives of $\sigma_S$-classes, they form a group under $t \cdot u = (tu)^{\mathfrak m}$, which is isomorphic to $S/\sigma_S$ by Lemma \ref{lem:max}.
It is then natural to ask what $\sigma_S$ corresponds to in the $\lambda$-calculus.

\subsection{Hereditary permutations and the theory $\th H^*$}
Two $\lambda$-terms are equal in $\th H^*$ whenever their Böhm trees coincide up to infinite $\eta$-expansions \cite[Thm.~16.2.7]{B84}.
We denote by ${\le^\eta_\omega}$ the preorder on $\mathscr{B}$ such that 
$T\le^\eta_\omega U$ holds precisely when $U$ is obtained from $T$ by performing countably many possibly infinite $\eta$-expansions \cite[Def.~11.11]{BM22}. 
Moreover, we denote by $\nf_{\eta !}(T)$ the Nakajima tree of a Böhm-like tree $T$ \cite[Def.~2.60]{BM22}.
One calculates $\nf_{\eta !}(T)$ by eliminating from $T$ all finite and infinite $\eta$-expansions.

\begin{theorem}\emph{\cite[Prop.~11.12]{BM22}}
    \label{thm:hstar}
    For every $M, N \in \Lambda$ the following are equivalent: 
    \begin{enumerate}
        \item $\th H^* \vdash M = N$; 
        \item $\nf_{\eta !}(\BT(M)) = \nf_{\eta !}(\BT(N))$;  
        \item there exists $T \in \mathscr{B}$ such that 
        $\BT(M) \le^\eta_\omega T \text{ and } \BT(N) \le^\eta_\omega T$.
    \end{enumerate}
\end{theorem}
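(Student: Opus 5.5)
This theorem is cited from \cite[Prop.~11.12]{BM22}, so within this paper it is taken as a known result. If I had to reprove it, I would show $(1)\Leftrightarrow(2)$ via Hyland's characterisation of $\th H^*$, and $(2)\Leftrightarrow(3)$ via a confluence argument for infinite $\eta$-expansion.

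\textbf{Step 1: $(2)\Leftrightarrow(3)$.} First I would show two facts about $\nf_{\eta !}$ on $\mathscr{B}$.

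(a) $\nf_{\eta !}$ is invariant under $\le^\eta_\omega$: if $T\le^\eta_\omega U$ then $\nf_{\eta !}(T)=\nf_{\eta !}(U)$.

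(b) $T\le^\eta_\omega \nf_{\eta !}(T)$ fails in general. However, $\nf_{\eta !}(T)\le^\eta_\omega T'$ for a suitable expansion $T'$ of $T$, and in fact $\nf_{\eta !}(T)\le^\eta_\omega T$ itself, since $T$ is obtained from its Nakajima tree by reinserting the erased (possibly infinite) $\eta$-expansions.

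Given (a), $(3)\Rightarrow(2)$ is immediate: both $\nf_{\eta !}(\BT(M))$ and $\nf_{\eta !}(\BT(N))$ equal $\nf_{\eta !}(T)$.

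For $(2)\Rightarrow(3)$, put $U=\nf_{\eta !}(\BT(M))=\nf_{\eta !}(\BT(N))$. Then $U\le^\eta_\omega \BT(M)$ and $U\le^\eta_\omega \BT(N)$, so I need a common upper bound of the two expansions of $U$. I would build it coinductively by merging, at each node, the two sets of expansions. At a node with extra abstracted variables, take the longer list of $\lambda$'s, which is the analogue of $\max\{n,m\}$ in the product of $\PT$. The arguments that are $\eta$-expanded in one tree and not in the other are expanded in the merge by their own expansion. This is essentially the join of idempotents $d\cup e$ from Proposition~\ref{lem:idemp}, acting along the tree.

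\textbf{Step 2: $(1)\Leftrightarrow(2)$.} This is the classical Hyland--Wadsworth theorem \cite[Thm.~16.2.7]{B84}. For $(2)\Rightarrow(1)$ I would use that $\th H^*$ is the theory of $\mathcal D_\infty$ and that $\mathcal D_\infty$ validates infinite $\eta$ and the approximation theorem. For $(1)\Rightarrow(2)$ I would use a Böhm-out argument: if the Nakajima trees differ at some finite position, a context extracts that position and separates $M$ and $N$ with respect to head normalisation. This separation step is the main obstacle. It needs a careful treatment of the $\eta$-adjustments, with the number of abstractions and arguments aligned by padding with fresh variables before selecting subterms. Since it is standard, I would simply invoke \cite{B84,BM22}.
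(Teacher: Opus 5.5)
Your proposal is correct and matches the paper: the paper gives no proof of this statement and simply imports it from \cite[Prop.~11.12]{BM22}, just as you do. Your additional sketch is the standard argument and is sound in outline: $(2)\Leftrightarrow(3)$ via invariance of $\nf_{\eta !}$ under $\le^\eta_\omega$, the fact $\nf_{\eta !}(T)\le^\eta_\omega T$, and a coinductive merge of two expansions; and the link to $\th H^*$ via the Hyland--Wadsworth characterisation. One small point: as the paper's own gloss of \cite[Thm.~16.2.7]{B84} (``coincide up to infinite $\eta$-expansions'') suggests, that theorem most directly gives $(1)\Leftrightarrow(3)$, and it is your Step~1 that brings in $(2)$.
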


\begin{lemma}
    \label{lem:order-eta}
    If $M, N \in \HP$, then we have: 
    \begin{enumerate}
        \item $\BT(M) \geq \BT(N)$ iff $\BT(M) \le^\eta_\omega \BT(N)$; 
        \item $\nf_{\eta !}(\BT(M))=\BT(M)^{\mathfrak{m}}$.
    \end{enumerate} 
\end{lemma}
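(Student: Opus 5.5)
Both statements will be transferred to $\PT$ through the isomorphism $(-)^+$ of Proposition~\ref{prop:embed}. On the permutation-tree side the natural order is already characterised by Lemma~\ref{lem:order}, and the maximal element by Definition~\ref{def:max}. So the work is to show that, read through $(-)^+$, these two descriptions coincide with infinite $\eta$-expansion and with Nakajima normalisation. Write $\BT(M)=t^+$ and $\BT(N)=u^+$ with $t,u\in\PT$. Then $\BT(M)\ge\BT(N)$ means $u\le t$ in $\PT$.

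\textbf{Item (1).} The key observation is how a node of $t^+=\lambda x x_1\ldots x_n.x((t_1)^+x_{\pi1})\cdots((t_n)^+x_{\pi n})$ is $\eta$-expanded. Adding a trailing argument $x_{n+1}$ with a subtree $e^+x_{n+1}$, where $e\in E$, is exactly one (possibly infinite) $\eta$-expansion, because $e^+\in\mathscr I$ is an infinite $\eta$-expansion of $\mathbf I$. Such a step corresponds to passing from $\<\pi;t_1,\ldots,t_n\>$ to $\<\pi;t_1,\ldots,t_n,e\>$, with $\pi$ extended by fixing $n+1$. This is precisely the clause of Lemma~\ref{lem:order} in which the children $u_i$ with $m<i\le n$ lie in $E$ and the permutations agree.

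For ($\Rightarrow$) I will use coinduction. Given $u\le t$, Lemma~\ref{lem:order} gives $\pi=\rho$, $u_i\le t_i$ for $i\le m$, and $u_i\in E$ beyond $m$. I then build the $\le^\eta_\omega$ witness node by node: first expand each child $t_i$ coinductively, then append the extra $\eta$-expanded arguments $u_i^+x_{\pi i}$ with $\pi i=i$.

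For ($\Leftarrow$) I will use the coinductive characterisation of $\le^\eta_\omega$ from \cite[Def.~11.11]{BM22}. Whenever $T\le^\eta_\omega U$, the root of $U$ is the root of $T$ with the same head, together with extra abstracted variables $y_{1},\ldots,y_k$ appended as arguments in the form of infinite $\eta$-expansions of the $y_j$, and the common children are related by $\le^\eta_\omega$. Comparing this with the shape of elements of $\mathscr{H\!\!P}$ forces the head variable and the positions of the old arguments to coincide, so $\pi=\rho$ on $\{1,\ldots,m\}$. The new arguments then are $y_j$ in position $m+j$ with subtree in $\mathscr I$, so $\rho(m+j)=m+j$ and $u_{m+j}\in E$. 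Since the natural order is the greatest relation satisfying the conditions of Lemma~\ref{lem:order}, coinduction concludes.

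\textbf{Main obstacle.} The delicate point is the ($\Leftarrow$) direction. An $\eta$-expansion of an element of $\mathscr{H\!\!P}$ might a priori leave $\mathscr{H\!\!P}$, or place the new variable at a position other than the end. I will rule this out using that both trees lie in $\mathscr{H\!\!P}$. Their root abstractions consist exactly of the argument variables, in order, and a variable added by $\eta$-expansion is always the last abstracted variable and the last argument.

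\textbf{Item (2).} I will compare Definition~\ref{def:max} with the coinductive computation of $\nf_{\eta!}$ in \cite[Def.~2.60]{BM22}. That computation removes a trailing abstraction $x_n$ exactly when the last argument is an (infinite) $\eta$-expansion of $x_n$ and $x_n$ does not occur elsewhere, and otherwise recurses on the children. In $t^+$ the last argument is $(t_n)^+x_{\pi n}$. It is an infinite $\eta$-expansion of $x_n$ iff $\pi n=n$ and $t_n\in E$, and $x_n$ occurs nowhere else because $\pi$ is a bijection. So the two definitions unfold identically, and a bisimulation argument gives $\nf_{\eta!}(t^+)=(t^{\mathfrak m})^+$. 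Alternatively, item (2) also follows from item (1), Proposition~\ref{lem:ume} and Theorem~\ref{thm:hstar}: the Nakajima tree is the $\le^\eta_\omega$-minimal element in the $\th H^*$-class, and this corresponds to the $\le$-maximal element $t^{\mathfrak m}$.
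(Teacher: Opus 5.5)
Your proposal is correct and follows essentially the paper's route: (1) comes from matching the clauses of Lemma~\ref{lem:order} with the coinductive definition of $\le^\eta_\omega$ in \cite[Def.~11.11]{BM22}, and (2) from unfolding Definition~\ref{def:max} against the definition of $\nf_{\eta!}$ in \cite[Def.~2.60]{BM22} through $(-)^+$. Two small caveats: with $t$ having $n$ children, Lemma~\ref{lem:order} gives $u_i\le t_i$ for $i\le n$ and $u_i\in E$ for $n<i\le m$ (you swapped the indices); and your ``alternative'' for (2) needs (1) for arbitrary trees in $\mathscr{H\!\!P}$, not only for Böhm trees of terms in $\HP$, since $\BT(M)^{\mathfrak m}$ and the Nakajima tree need not be recursively enumerable.
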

\begin{proof}
    (1) follows from the characterisation of the natural order, Lemma \ref{lem:order}, and \cite[Def.~11.11]{BM22}. 
    (2) follows from (1), Proposition \ref{prop:embed}, Definition \ref{def:max} and \cite[Def.~2.60]{BM22}.
\end{proof}

Remark that the natural order ${\le}$ in $\mathscr{H\!\!P}$ is the opposite of ${\le^\eta_\omega}$. 

\begin{theorem}
    \label{prop:etainf} 
    For any $M,N \in \HP$, the following are equivalent: 
    \begin{enumerate}
    \item $\th H^* \vdash M = N$; 
    \item $\BT(M) \, \sigma_{\mathscr{H\!\!P}} \, \BT(N)$; 
    \item $\BT(M)^{\mathfrak{m}} = \BT(N)^{\mathfrak{m}}$. 
    \end{enumerate}   
\end{theorem}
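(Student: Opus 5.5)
The plan is to establish the equivalence (1)$\Leftrightarrow$(3) through Theorem \ref{thm:hstar} and Lemma \ref{lem:order-eta}, and the equivalence (2)$\Leftrightarrow$(3) through the $F$-inverse structure of $\mathscr{H\!\!P}$.

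First, (2)$\Leftrightarrow$(3). By Proposition \ref{prop:embed}, $\mathscr{H\!\!P}$ is isomorphic to $\PT$ via $(-)^+$. All the relevant notions transfer along this isomorphism: the natural order, the operation $(-)^{\mathfrak{m}}$, and $\sigma$. Since $\PT$ is $E$-unitary (Theorem \ref{thm:perminverse}) and every $t^\uparrow$ has the unique maximum $t^{\mathfrak{m}}$ (Proposition \ref{lem:ume}), Lemma \ref{lem:max} applies. It gives $t\,\sigma_{\PT}\,u$ iff $t^{\mathfrak{m}}=u^{\mathfrak{m}}$, and transporting this to $\mathscr{H\!\!P}$ yields (2)$\Leftrightarrow$(3) for $\BT(M),\BT(N)\in\mathscr{H\!\!P}$.

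Next, (1)$\Leftrightarrow$(3). Theorem \ref{thm:hstar} states that $\th H^*\vdash M=N$ iff $\nf_{\eta!}(\BT(M))=\nf_{\eta!}(\BT(N))$. For $M,N\in\HP$, Lemma \ref{lem:order-eta}(2) rewrites both sides as $\BT(M)^{\mathfrak{m}}$ and $\BT(N)^{\mathfrak{m}}$, which gives the equivalence directly.

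As a cross-check, one could instead use Theorem \ref{thm:hstar}(3) together with Lemma \ref{lem:order-eta}(1). Condition (3) of Theorem \ref{thm:hstar} becomes: there is a common $\le^\eta_\omega$-upper bound, which corresponds to a common lower bound $w$ in the natural order, i.e.\ $\sigma_{\mathscr{H\!\!P}}$. The difficulty is that the witness $T\in\mathscr{B}$ need not a priori lie in $\mathscr{H\!\!P}$. One would have to argue that an infinite $\eta$-expansion of a hereditary permutation is again a hereditary permutation, since $\eta$-expansions only add identity-labelled subtrees. This is why I prefer to route through $\nf_{\eta!}$.

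The main obstacle, to the extent there is one, is checking that the transfer of $(-)^{\mathfrak{m}}$ and $\sigma$ along $(-)^+$ is legitimate. It is, because $(-)^+$ is a monoid isomorphism onto $\mathscr{H\!\!P}$, and all these notions are defined purely from the monoid structure and the inverse operation. The remaining step, that $\nf_{\eta!}$ agrees with $(-)^{\mathfrak{m}}$, is exactly Lemma \ref{lem:order-eta}(2), which I take as given.
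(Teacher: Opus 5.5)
Your proposal is correct and follows the paper's proof: (1)$\Leftrightarrow$(3) via Theorem~\ref{thm:hstar} combined with Lemma~\ref{lem:order-eta}(2), and (2)$\Leftrightarrow$(3) via Lemma~\ref{lem:max}. You also make explicit what the paper leaves implicit, namely that the hypotheses of Lemma~\ref{lem:max} ($E$-unitarity and a unique maximum in each $t^\uparrow$) come from Theorem~\ref{thm:perminverse} and Proposition~\ref{lem:ume}, transported to $\mathscr{H\!\!P}$ along $(-)^+$.
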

\begin{proof}
    $(1) \iff (3)$ by Theorem \ref{thm:hstar} and Lemma \ref{lem:order-eta}(2). 
    $(2) \iff (3)$ by Lemma \ref{lem:max}. 
\end{proof}

The preceding analysis gives a characterisation of the invertible elements in $\th H^*$.
Recall that the monoid $\HP/\th B$ is isomorphic to $\mathscr{H\!\!P}_{\mathrm{rec}}$. 

\begin{theorem} 
    \label{thm:invhstar}
    Let $S:=\mathscr{H\!\!P}_{\mathrm{rec}}$. 
    The following groups are isomorphic:
\begin{enumerate}
\item the group $\HP/\th H^*$ of invertible elements of $\th H^*$;
\item the group $S/\sigma_S$;
\item the group of maximal elements of $S$ in the $F$-inverse monoid $\mathscr{H\!\!P}$. 
\end{enumerate}
\end{theorem}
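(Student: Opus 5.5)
The plan is to prove $(1)\cong(2)$ and $(2)\cong(3)$ separately, using the map $M \mapsto \BT(M)$ and its maximal element $\BT(M)^{\mathfrak m}$ as the bridge.

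\textbf{Step 1: $(1)\cong(2)$.} Consider the surjective monoid homomorphism
\[
\Phi\colon \HP \to S/\sigma_S,\qquad M \mapsto [\BT(M)]_{\sigma_S}.
\]
It is well defined and multiplicative because $\HP/\th B \simeq S$ via $M\mapsto \BT(M)$, and $\BT$ sends $\circ$ on terms to $\circ$ on Böhm-like trees. Its kernel should be exactly $\th H^*$ restricted to $\HP$. By Theorem~\ref{prop:etainf}, $\th H^*\vdash M=N$ iff $\BT(M)\,\sigma_{\mathscr{H\!\!P}}\,\BT(N)$. This is stated for $\sigma$ of the full monoid $\mathscr{H\!\!P}$, whereas (2) uses $\sigma_S$. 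So we must show that $\sigma_S$ is the restriction of $\sigma_{\mathscr{H\!\!P}}$ to $S$. One inclusion is trivial. For the other, we argue as in Lemma~\ref{lem:res}. Since $S$ is $E$-unitary, and so is $\mathscr{H\!\!P}$, $t\,\sigma_{\mathscr{H\!\!P}}\,u$ gives $t\sim u$, so $t\wedge u = tu^*u$ exists. This element lies in $S$, because $S$ is a submonoid closed under $(-)^*$: products and inverses of r.e.\ trees are r.e. Hence $t\,\sigma_S\,u$. We then need to check that $\HP/\th H^*$ is indeed the group of $\th H^*$-invertible terms; this is Theorem~\ref{thm:inv}(3). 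Its group structure is inherited through $\Phi$, with $S/\sigma_S$ a group by the general theory. Surjectivity of $\Phi$ holds by the definition of $S$.

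\textbf{Step 2: $(2)\cong(3)$.} Let $G$ be the set of maximal elements $\{t^{\mathfrak m}: t\in S\}\subseteq\mathscr{H\!\!P}$, with product $t\cdot u=(tu)^{\mathfrak m}$. Note that these maximal elements need not be r.e., which is why (3) lives in $\mathscr{H\!\!P}$. Define
\[
\Psi\colon S/\sigma_S \to G,\qquad [t]\mapsto t^{\mathfrak m}.
\]
By Lemma~\ref{lem:max} applied in the $F$-inverse monoid $\mathscr{H\!\!P}$ (via Proposition~\ref{lem:ume}), together with Step~1's identification of $\sigma_S$ with the restriction of $\sigma_{\mathscr{H\!\!P}}$, the map $\Psi$ is well defined and injective. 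It is a homomorphism because $\sigma_{\mathscr{H\!\!P}}$ is a congruence: $tu\,\sigma\,t^{\mathfrak m}u^{\mathfrak m}$, hence $(tu)^{\mathfrak m}=(t^{\mathfrak m}u^{\mathfrak m})^{\mathfrak m}$. It is surjective onto $G$ by definition. The group axioms on $G$ transfer along $\Psi$; for instance, the unit is $\bar\epsilon^{\mathfrak m}=\mathbf I$ and inverses are $(t^*)^{\mathfrak m}$.

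\textbf{Main obstacle.} The delicate point is the restriction argument in Step~1. It requires $S$ to be closed under products and $(-)^*$. For products this uses that $\circ$ on trees corresponds to application of terms, so $\BT(M)\circ\BT(N)=\BT(M\circ N)$. For $(-)^*$ we need an $N\in\HP$ with $\BT(N)=\BT(M)^*$, obtained because $\BT(M)^*$ is r.e.\ with no free variables. We must also ensure that $\sigma_S$ is genuinely not coarser merely because fewer lower bounds exist; meets handle exactly this. If the closure under $(-)^*$ proves troublesome, the fallback is to invoke the inverse-tree construction $A^{-1}$ of \cite[Def.~21.2.16]{B84}, which the paper notes coincides with $A^*$ and is realised by a $\lambda$-term.
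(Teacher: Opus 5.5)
Your proposal is correct and follows essentially the paper's route. You get $(1)\cong(2)$ from Theorem~\ref{prop:etainf} and $(2)\cong(3)$ from Lemma~\ref{lem:max} and Proposition~\ref{lem:ume} applied in $\mathscr{H\!\!P}$. You also make explicit two steps the paper's terse proof leaves implicit: that $\sigma_S$ is the restriction of $\sigma_{\mathscr{H\!\!P}}$ to $S$ (your Lemma~\ref{lem:res}-style argument via $tu^*u$, using closure of $S$ under products and $(-)^*$), and that $[t]\mapsto t^{\mathfrak m}$ is a homomorphism onto the group $\{t^{\mathfrak m} : t\in S\}$.
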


\begin{proof}
    By Theorem \ref{prop:etainf} there is a bijective correspondence between the equivalence classes of $\sigma_S$ and the equivalence classes of $\HP$ in $\th H^*$. 
    By Lemma \ref{lem:max} there is a bijective correspondence between maximal elements of $\mathscr{H\!\!P}$ and equivalence classes of $\sigma_{\mathscr{H\!\!P}}$. 
    The maximal elements of $\mathscr{H\!\!P}_{\mathrm{rec}}$ in $\mathscr{H\!\!P}$ form a subgroup of the group of maximal elements of $\mathscr{H\!\!P}$.
\end{proof}

\begin{theorem}
    \label{thm:hpovert}
    Let $\th T$ be a $\lambda$-theory with $\th B \subseteq \th T$. 
    Then $\HP/\th T$ is an $E$-unitary inverse submonoid of the semigroup $\Lambda / \th T$ with composition. 
\end{theorem}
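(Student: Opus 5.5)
The plan is to realise $\HP/\th T$ as a surjective homomorphic image of $\HP/\th B \simeq \mathscr{H\!\!P}_{\mathrm{rec}}$. First we check that $\HP/\th T$ is a submonoid of $(\Lambda/\th T,\circ)$. It contains $\mathbf I$, since $\BT(\mathbf I)=\lambda x.x=\bar\epsilon^+$. It is closed under $\circ$: for $M,N\in\HP$, the Böhm tree of $M\circ N$ is the composition $\BT(M)\circ\BT(N)$ in $\mathscr{B}$ (by \cite[Thm.~18.3.10]{B84}, $\BT$ commutes with composition), and this lies in $\mathscr{H\!\!P}$ by Proposition~\ref{prop:embed}, so $M\circ N\in\HP$.

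Next, the inverse operation. Given $M\in\HP$, write $\BT(M)=t^+$. The tree $(t^*)^+$ is recursively enumerable, because $t^*$ is obtained from $t$ by a computable relabelling, and it is closed. Hence there is a combinator $M^*$ with $\BT(M^*)=(t^*)^+$, so $M^*\in\HP$. Thus $\mathscr{H\!\!P}_{\mathrm{rec}}$ is closed under $(-)^*$, and $\HP/\th B$ is an inverse monoid (the paper already records that it is even $E$-unitary). The quotient map $\HP/\th B\to\HP/\th T$ is well defined because $\th B\subseteq\th T$, it is a surjective monoid homomorphism, and its image is exactly the submonoid above. By Remark~\ref{rem:surj-homo}, $\HP/\th T$ is then an inverse monoid, with $[M]^*=[M^*]$.

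The main obstacle is $E$-unitarity, since it is not in general preserved by homomorphic images. The key observation will be that $\th T$ identifies two hereditary permutations only if $\th H^*$ does, because $\th T\subseteq\th H^*$. For this, $\th T$ must be semisensible, i.e.\ consistent; we assume consistency, since otherwise $\HP/\th T$ is trivial and the claim is immediate. Now let $[E_0]$ be idempotent in $\HP/\th T$ and suppose $[E_0]\le[M]$. Then $[E_0]=[M]\circ[F]$ for some idempotent $[F]$, so $\th T\vdash E_0=M\circ F$ and hence $\th H^*\vdash E_0=M\circ F$.

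We then need to show that an idempotent of $\HP/\th T$ is $\th H^*$-equal to $\mathbf I$. If $\th T\vdash D\circ D=D$ with $D\in\HP$, then $\th H^*\vdash D\circ D=D$. Since $\HP/\th H^*$ is a group (Theorem~\ref{thm:invhstar}), this gives $\th H^*\vdash D=\mathbf I$. Applying this to $E_0$ and $F$ yields $\th H^*\vdash M=\mathbf I$, so $\BT(M)^{\mathfrak m}=\bar\epsilon$ by Theorem~\ref{prop:etainf}, that is, $\BT(M)\,\sigma\,\bar\epsilon$. By Lemma~\ref{lem:mgc} applied to the $E$-unitary monoid $\mathscr{H\!\!P}$, $\BT(M)$ is then an idempotent tree. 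Consequently $[M]$ is idempotent already in $\HP/\th B$, and therefore in $\HP/\th T$, which proves $E$-unitarity.
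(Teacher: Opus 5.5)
Your proof is correct and follows the paper's route. Like the paper, you realise $\HP/\th T$ as a surjective image of $\HP/\th B$, push the relation ``idempotent $\le [M]$'' down to the group $\HP/\th H^*$ to get $\th H^*\vdash M=\mathbf I$, and pull idempotency back via the minimum group congruence and Lemma~\ref{lem:mgc}. The differences are minor: you apply Lemma~\ref{lem:mgc} in the full monoid $\mathscr{H\!\!P}$ via Theorem~\ref{prop:etainf} rather than via $\ker(g\circ f)=\sigma_{\HP/\th B}$, and you spell out points the paper takes for granted (the inconsistent case, closure under $\circ$, and closure of $\mathscr{H\!\!P}_{\mathrm{rec}}$ under $(-)^*$).
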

\begin{proof}
    Since $\th H \subseteq \th B \subseteq \th T$, $\th T$ is sensible and thus $\th T \subseteq \th H^*$. 
Therefore, we have the following chain of surjective homomorphisms of monoids: 
$\HP/ \th B \xrightarrow{f} \HP /\th T \xrightarrow{g} \HP /\th H^*$. 
As $\HP / \th B$ is an inverse monoid, $\HP / \th T$ is such by Remark \ref{rem:surj-homo}.  
We now prove that $\HP /\th T$ is $E$-unitary. 
If $e \le t$ in $\HP /\th T$ with $e$ idempotent, then $g(e) \le g(t)$ in $\HP/\th H^*$ with $g(e)$ idempotent. 
Since $\HP/\th H^*$ is a group, $g(e) = g(t)$ and $g(e)=\mathbf I$, so that $g(t)=\mathbf I$. 
Let $t' \in \HP/\th B$ be such that $f(t')=t$. 
Then $(g \circ f)(t')=\mathbf I$. 
By Theorem \ref{thm:invhstar} $\ker (g \circ f) = \sigma_{\HP/\th B}$, so that $t' \, \sigma_{\HP/\th B} \, \mathbf I$. 
By Lemma \ref{lem:mgc}, this implies that $t'$ is idempotent in $\HP/\th B$. 
Consequently, $t$ is idempotent in $\HP /\th T$ as desired. 
\end{proof}

\subsection{Finite hereditary permutations and the theory $\boldsymbol{\lambda \eta}$}
Finite permutation trees are in bijection with $\beta$-normal forms of finite hereditary permutations. 
We show that the natural order on the finite permutation trees perfectly captures $\eta$-reduction.

\begin{proposition} 
    \label{prop:eta}
    Let $t,u \in \FPT$. Then: 
    \begin{enumerate}
        \item $t\prec u$ iff $t^+ \rightarrow_\eta u^+$; 
        \item $t \le u$ iff $t^+ \twoheadrightarrow_{\eta} u^+$. 
    \end{enumerate}
\end{proposition}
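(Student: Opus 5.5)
Part (1) is proved by structural induction on $t$, using Lemma~\ref{lem:singleton2}. Part (2) then follows from (1) together with Lemma~\ref{lem:leqprec}.

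Here $t^+$ is the $\beta$-normal form
\[
\lambda x x_1\ldots x_n.\,x((t_1)^+x_{\pi1})\cdots((t_n)^+x_{\pi n}),
\]
where each $(t_i)^+x_{\pi i}$ is read as the $\beta$-normal form obtained by substituting into the body of $(t_i)^+$. Each subterm at depth one is then itself of the shape $\lambda y_1\ldots y_k.\,x_{\pi i}(\cdots)$, i.e.\ the hereditary-permutation tree rooted at the variable $x_{\pi i}$.

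For the direction ($\Rightarrow$) of (1), assume $t\prec u$ and split according to Lemma~\ref{lem:singleton2}.
\begin{itemize}
\item In case (2), $t=\<\pi;u_1,\ldots,u_m,\bar\epsilon\>$ with $\pi n=n$. Then $t^+=\lambda xx_1\ldots x_m x_n.\,x(\cdots)(\mathbf I x_n)$, and $\mathbf I x_n$ is just $x_n$ in normal form. So $t^+=\lambda x\vec x\,x_n.\,(x\cdots)\,x_n$ with $x_n$ not free elsewhere, and one $\eta$-step yields $u^+$.
\item In case (1), the subtree $k$ differs by a covering. By induction $(t_k)^+\to_\eta(u_k)^+$, and since $\eta$-reduction is compatible with substitution and contexts, the corresponding subterms $(t_k)^+x_{\pi k}\to_\eta(u_k)^+x_{\pi k}$ in normal form. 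Here one must check that an $\eta$-step inside $\lambda z.\,B$ with $B$ in $\beta$-nf survives the substitution $z:=x_{\pi k}$ without creating $\beta$-redexes. This holds because the head variable is a variable.
\end{itemize}

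For ($\Leftarrow$), suppose $t^+\to_\eta N$. Then $N$ is again in $\beta$-nf, since $\eta$ on a $\beta$-nf cannot create a $\beta$-redex here: every head is a variable. We locate the contracted redex $\lambda y.\,Py$.
\begin{itemize}
\item If it is at the root level, it must be the last abstracted variable $x_n$, with the last argument being exactly $x_n$. This forces $\pi n=n$ and $(t_n)^+x_n=x_n$, i.e.\ $t_n=\bar\epsilon$. Then $N=u^+$ with $u$ as in case (2).
\item Otherwise the redex lies inside the $i$-th argument. Abstracting $x_{\pi i}$ back, we get $(t_i)^+\to_\eta$ some $V$, and by induction $V=(u_i)^+$ with $t_i\prec u_i$. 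So $N=u^+$ for $u$ as in case (1).
\end{itemize}
The variable $x$ itself is never $\eta$-contractible because of the outermost $\lambda x$ convention. Here $\lambda x.\,x$ applied to nothing is $\mathbf I$ and is not a redex.

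For (2), ($\Rightarrow$): if $t=u$ there is nothing to prove. If $t<u$, Lemma~\ref{lem:leqprec} gives a chain of coverings, and (1) turns each covering into an $\eta$-step. For ($\Leftarrow$), induct on the length of $t^+\twoheadrightarrow_\eta u^+$. Each intermediate term is the $\eta$-reduct of some $z^+$, and by the ($\Leftarrow$) analysis of (1) it is again of the form $z'^+$ with $z\prec z'$. Then transitivity of $\le$ closes the argument.

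I expect the main obstacle to be the ($\Leftarrow$) direction of (1): showing that every $\eta$-redex of $t^+$ is one of the two shapes above. This needs a careful description of how $(t_i)^+x_{\pi i}$ looks in normal form, so I would first state a small auxiliary description of $t^+$ with the first variable substituted, proved by the same induction.
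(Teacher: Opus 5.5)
Your proposal is correct and follows the paper's route. Part (1) reads the two covering shapes of Lemma~\ref{lem:singleton2} through the embedding $(-)^+$ of Proposition~\ref{prop:embed}, and part (2) chains coverings via Lemma~\ref{lem:leqprec}, with your strengthened ($\Leftarrow$) of (1) --- every $\eta$-reduct of $t^+$ is $z^+$ for some $z$ with $t\prec z$ --- making explicit the closure of the image of $\FPT$ under $\eta$-reduction that the paper's one-line proof of the ($\Leftarrow$) direction of (2) leaves implicit.
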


\begin{proof}
 (1) follows from Proposition \ref{prop:embed} and Lemma \ref{lem:singleton2}; (2) from (1) and Lemma \ref{lem:leqprec}.
\end{proof}

By the above proposition, when $t \in\FPT$, the \lam-term $(t^{\mathfrak{m}})^+$ is the $\beta\eta$-normal form of $t^+$.  
\begin{proposition}
    \label{prop:etasigma}
    Let $t,u \in \FPT$. The following are equivalent: 
        \begin{enumerate}
    \item $\boldsymbol{\lambda \eta} \vdash t^+ = u^+$; 
    \item $ t \, \sigma_{\FPT} \, u$; 
    \item $t^{\mathfrak{m}} = u^{\mathfrak{m}}$. 
    \end{enumerate}   
\end{proposition}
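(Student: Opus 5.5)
The plan is to prove the cycle through (3), using Proposition \ref{prop:eta} for the $\lambda$-calculus side and the $F$-inverse structure of $\FPT$ for the algebraic side.

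\emph{(2) $\iff$ (3).} $\FPT$ is $F$-inverse, hence $E$-unitary, and for every $t\in\FPT$ the element $t^{\mathfrak m}$ lies again in $\FPT$. We must check that $t^{\mathfrak m}$ is the unique maximal element of $t^\uparrow$ computed inside $\FPT$. For $t\in\FPT$, the order on $\FPT$ is the restriction of that of $\PT$, so this follows from Proposition \ref{lem:ume}. Lemma \ref{lem:max} applied to $S=\FPT$ then gives $t\,\sigma_{\FPT}\,u$ iff $t^{\mathfrak m}=u^{\mathfrak m}$. Alternatively, one can combine Lemma \ref{lem:res} with Lemma \ref{lem:max} for $\PT$.

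\emph{(3) $\Rightarrow$ (1).} By Proposition \ref{lem:ume}(1) we have $t\le t^{\mathfrak m}$. Proposition \ref{prop:eta}(2) then gives $t^+\twoheadrightarrow_\eta (t^{\mathfrak m})^+$, and likewise $u^+\twoheadrightarrow_\eta (u^{\mathfrak m})^+$. If $t^{\mathfrak m}=u^{\mathfrak m}$, both terms $\eta$-reduce to the same term, so $\boldsymbol{\lambda\eta}\vdash t^+=u^+$.

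\emph{(1) $\Rightarrow$ (3).} Assume $t^+=_{\beta\eta}u^+$.
\begin{itemize}
\item By Church--Rosser for $\beta\eta$, they have the same $\beta\eta$-normal form, since both are $\beta$-normal and normalisable.
\item The key claim is that $(t^{\mathfrak m})^+$ is the $\beta\eta$-normal form of $t^+$. Since $t^+$ is $\beta$-normal and $\eta$-reduction preserves $\beta$-normality, its $\beta\eta$-nf is obtained by $\eta$-reduction alone. The terms $\eta$-reachable from $t^+$ are exactly the $v^+$ with $t\le v$, by Proposition \ref{prop:eta}(2). Note that an $\eta$-step on $t^+$ yields a term of the form $v^+$, which is the content of Proposition \ref{prop:eta}(1) via Lemma \ref{lem:singleton2}.
\item An $\eta$-normal such $v^+$ admits no covering $v\prec v'$, so $v$ is maximal in $t^\uparrow$. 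By Proposition \ref{lem:ume}, $v=t^{\mathfrak m}$.
\item Hence $(t^{\mathfrak m})^+=(u^{\mathfrak m})^+$, and injectivity of $(-)^+$ (Proposition \ref{prop:embed}) gives $t^{\mathfrak m}=u^{\mathfrak m}$.
\end{itemize}

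The main obstacle is this last step: showing that every $\eta$-redex in $t^+$ corresponds to a covering. In particular, we must rule out $\eta$-reductions of $t^+$ that leave the image of $\FPT$. Proposition \ref{prop:eta}(1) asserts this only as an equivalence for pairs already of the form $t^+,u^+$. We therefore need to check directly that contracting any $\eta$-redex in $t^+=\lambda x x_1\dots x_n.x((t_1)^+x_{\pi1})\cdots((t_n)^+x_{\pi n})$ yields some $v^+$. The argument is by induction on $t$. A redex is either internal to some $(t_i)^+$, which is handled by the induction hypothesis, or the outer $x_n$ abstraction. The latter is a redex only when $(t_n)^+x_{\pi n}$ is $\eta$-equivalent to $x_n$ in head position, which forces $\pi n=n$ and $t_n=\bar\epsilon$. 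This is exactly case (2) of Lemma \ref{lem:singleton2}.
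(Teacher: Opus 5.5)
Your proof is correct. For (2)$\iff$(3) you do exactly what the paper does (Lemma~\ref{lem:max}). The difference is where you attach (1).

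The paper links (1) directly to (2). A common lower bound $w\le t,u$ is, by Proposition~\ref{prop:eta}(2), a common $\eta$-expansion $w^+$ of $t^+$ and $u^+$, so (2)$\Rightarrow$(1) is immediate. The converse is only gestured at (``the definition of $\sigma_{\FPT}$''). In effect it rests on Church--Rosser plus the remark just before the proposition, that $(t^{\mathfrak m})^+$ is the $\beta\eta$-normal form of $t^+$.

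You link (1) to (3) instead, and in doing so you actually prove that remark. You use uniqueness of $\beta\eta$-normal forms, closure of $\{v^+ : v\in\FPT\}$ under single $\eta$-steps, maximality via Lemma~\ref{lem:leqprec} and Proposition~\ref{lem:ume}, and injectivity of $(-)^+$. The paper's route is shorter and shows $\sigma_{\FPT}$ directly as ``having a common $\eta$-expansion''. Yours makes explicit the $\eta$-closure of the image, which Proposition~\ref{prop:eta} silently relies on, since it is stated only for pairs already in the image.

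Three details in your closure argument need tightening:
\begin{enumerate}
\item The displayed $t^+$ must be read as its $\beta$-normal form (as in $\mathscr B$), since literally each $(t_i)^+x_{\pi i}$ is a $\beta$-redex.
\item The outer $\lambda x_n$ is an $\eta$-redex only when the last argument is literally the variable $x_n$, not merely $\eta$-equivalent to it. For $t_n=\langle\iota;\bar\epsilon\rangle$ and $\pi n=n$ the argument is $\lambda y.x_ny$, and there is no redex at $\lambda x_n$. Your conclusion $\pi n=n$, $t_n=\bar\epsilon$ is right; only the stated reason is off.
\item You should say why $\lambda x,\lambda x_1,\dots,\lambda x_{n-1}$ are never redexes: for $n\ge 1$ their bodies are abstractions.
\end{enumerate}
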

\begin{proof}
    The equivalence between (1) and (2) follows from Proposition \ref{prop:eta} and the definition of $\sigma_{\FPT}$; 
    the equivalence between (2) and (3) follows from Lemma \ref{lem:max}. 
\end{proof}

\begin{theorem} 
    \label{thm:lambdaeta}
    Let $S$ be $\FHP /\boldsymbol{\lambda}$.
    The following groups are isomorphic:
\begin{enumerate}
\item the group $\FHP/\boldsymbol{\lambda\eta}$ of invertible elements of $\boldsymbol{\lambda\eta}$; 
\item the group $S/\sigma_S$; 
\item the group of maximal elements of $S$.
\end{enumerate}
\end{theorem}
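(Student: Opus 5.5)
The plan is to mirror the proof of Theorem~\ref{thm:invhstar}, with $\FPT$ in place of $\mathscr{H\!\!P}_{\mathrm{rec}}$, Proposition~\ref{prop:etasigma} in place of Theorem~\ref{prop:etainf}, and $\boldsymbol{\lambda\eta}$ in place of $\th H^*$.

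First we identify $S=\FHP/\boldsymbol{\lambda}$ with $\FPT$. Every $M\in\FHP$ has a finite Böhm tree without $\bot$, so it is $\beta$-normalisable. On $\beta$-normalisable terms $\boldsymbol{\lambda}$ and $\th B$ coincide. Hence $M\mapsto \BT(M)$ is a well-defined injection $S\to\mathscr{H\!\!P}$, and its image is exactly the set of finite trees in $\mathscr{H\!\!P}$, i.e.\ the image of $\FPT$ under $(-)^+$. By Proposition~\ref{prop:embed} this bijection respects composition and unit, so $S\simeq\FPT$ as inverse monoids. It therefore suffices to work in $\FPT$, which is $F$-inverse, with $\sigma_{\FPT}$ described by Lemma~\ref{lem:max} via $(-)^{\mathfrak m}$.

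Next, (1)$\simeq$(2). By Theorem~\ref{thm:inv}(2), the $\boldsymbol{\lambda\eta}$-invertible terms are exactly the $\FHP$s. Modulo $\boldsymbol{\lambda\eta}$ they form a group, namely the group of units of $\Lambda/\boldsymbol{\lambda\eta}$. Consider the surjective monoid homomorphism $S\to \FHP/\boldsymbol{\lambda\eta}$ given by the quotient, which is well defined since $\boldsymbol{\lambda}\subseteq\boldsymbol{\lambda\eta}$. By Proposition~\ref{prop:etasigma}, (1)$\iff$(2), two elements $t^+,u^+$ are identified in $\boldsymbol{\lambda\eta}$ iff $t\,\sigma_{\FPT}\,u$. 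So the kernel of this homomorphism is exactly $\sigma_S$, and the induced map $S/\sigma_S\to\FHP/\boldsymbol{\lambda\eta}$ is a monoid isomorphism between groups, hence a group isomorphism.

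Finally, (2)$\simeq$(3). $\FPT$ is closed under $(-)^{\mathfrak m}$, so each $\sigma_S$-class contains a unique maximal element, and it lies in $S$. By Lemma~\ref{lem:max}, $t\mapsto t^{\mathfrak m}$ induces a bijection between $S/\sigma_S$ and the set of maximal elements. We equip that set with the product $t\cdot u=(tu)^{\mathfrak m}$. Since $\sigma_S$ is a congruence and $(tu)^{\mathfrak m}$ is the representative of the class of $tu$, the bijection becomes a homomorphism. Concretely, via Proposition~\ref{prop:eta} the maximal elements are the $\beta\eta$-normal forms of $\FHP$s.

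The only delicate step is checking that the kernel of $S\to\FHP/\boldsymbol{\lambda\eta}$ is exactly $\sigma_S$. This needs Proposition~\ref{prop:etasigma} to apply to arbitrary representatives, not only to $\beta$-normal forms. That holds because $\boldsymbol{\lambda\eta}$ contains $\beta$, so every $M\in\FHP$ can be replaced by $t^+$ with $t\in\FPT$.
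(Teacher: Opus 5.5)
Your proposal is correct and follows the same route as the paper: Proposition~\ref{prop:etasigma} identifies the kernel of the quotient map $S\to\FHP/\boldsymbol{\lambda\eta}$ with $\sigma_S$, which gives (1)$\simeq$(2), and Lemma~\ref{lem:max}, together with the closure of $\FPT$ under $(-)^{\mathfrak m}$, gives (2)$\simeq$(3). You also make explicit what the paper's two-line proof leaves implicit, namely the identification $S\simeq\FPT$ and the check that $[t]\mapsto t^{\mathfrak m}$ respects the product $t\cdot u=(tu)^{\mathfrak m}$.
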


\begin{proof}
 By Proposition \ref{prop:etasigma}, $\FHP/\boldsymbol{\lambda\eta} \simeq S/\sigma_S$ and by Lemma \ref{lem:max} the group $S/\sigma_S$ is isomorphic to the the group of maximal elements of $S$. 
\end{proof}

\begin{theorem}
    \label{thm:fhpovert}
    For every $\lambda$-theory $\th T$, $\FHP/\th T$ is an $F$-inverse submonoid of $(\Lambda / \th T, \circ)$. 
\end{theorem}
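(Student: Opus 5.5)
We prove Theorem~\ref{thm:fhpovert} by transporting the $F$-inverse structure of $\FPT$ along a surjective homomorphism, and then checking that the maximal elements survive the quotient.

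\textbf{Inverse submonoid.} By Proposition~\ref{prop:embed}, restricting $(-)^+$ gives a monoid isomorphism $\FPT \cong \FHP/\boldsymbol{\lambda}$. Every $\FHP$ has a $\beta$-normal form, and $\FHP$ is closed under $\circ$, since the product of finite permutation trees is finite. Hence $\FHP/\th T$ is a submonoid of $(\Lambda/\th T,\circ)$ with unit $\mathbf I$. The canonical map $\FHP/\boldsymbol{\lambda}\to\FHP/\th T$ is a surjective monoid homomorphism, so $\FHP/\th T$ is an inverse monoid by Remark~\ref{rem:surj-homo}. The inverse of $[t^+]$ is $[(t^*)^+]$.

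\textbf{$E$-unitarity.} We argue as in Theorem~\ref{thm:hpovert}, with $\boldsymbol{\lambda\eta}$ playing the role of $\th H^*$. If $\th T$ is consistent, I would first check that $\th T\boldsymbol{\eta}$ is consistent; this should be a known fact for consistent $\lambda$-theories. We then compose to get the surjections
\[
\FHP/\boldsymbol{\lambda}\xrightarrow{f}\FHP/\th T\xrightarrow{g}\FHP/\th T\boldsymbol{\eta}.
\]
The map $g$ must be compared with $\FHP/\boldsymbol{\lambda\eta}$, which is a group by Theorem~\ref{thm:lambdaeta}. I expect this to be the main obstacle: the image of $\FHP$ in an arbitrary extensional theory is a quotient of that group, but it need not reflect $\sigma$. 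Suppose $e\le t$ in $\FHP/\th T$ with $e$ idempotent. Then $g(t)=\mathbf I$, but we need $t$ idempotent modulo $\th T$, not merely modulo $\th T\boldsymbol{\eta}$.

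For this reason I would switch strategy and aim directly at the $F$-inverse property, which implies $E$-unitarity, via the equational characterisation of \cite{AKS} used for $\FPT$. In that characterisation $F$-inverse monoids form a variety in the signature $(\cdot,{}^*,1,{}^{\mathfrak m})$.

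\textbf{$F$-inverse.} Varieties are closed under homomorphic images. It therefore suffices to show that the kernel of $f$, namely the restriction of $\th T$ to $\FPT$ through $(-)^+$, is a congruence also for ${}^{\mathfrak m}$. Concretely, if $\th T\vdash t^+=u^+$, we must show $\th T\vdash (t^{\mathfrak m})^+=(u^{\mathfrak m})^+$.

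If $\th T$ is extensional, then $(t^{\mathfrak m})^+=_{\beta\eta}t^+$ by Proposition~\ref{prop:eta}, since $t\le t^{\mathfrak m}$. The claim is then immediate.

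If $\th T$ is not extensional, the plan is a case split on whether $\th T$ equates two distinct $\beta$-normal forms of $\FHP$s.

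First, suppose it does not. Then $\th T$ restricted to $\FHP$ is just $\beta$-equality, $f$ is an isomorphism, and the statement reduces to the $\FPT$ lemma.

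Second, suppose it equates distinct $t^+,u^+$. Then I would use a Böhm-out argument: since the trees differ, a suitable context separates them in the strong sense available for $\beta$-normal forms, up to $\eta$. If $t^{\mathfrak m}\neq u^{\mathfrak m}$, then $t^+$ and $u^+$ are distinct $\beta\eta$-normal forms after $\eta$-reduction, and Böhm's theorem makes $\th T$ inconsistent, so everything collapses and the claim is trivial. Otherwise $t^{\mathfrak m}=u^{\mathfrak m}$ by Proposition~\ref{prop:etasigma}, and again the required equation holds.

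In every case ${}^{\mathfrak m}$ descends to the quotient. Therefore $\FHP/\th T$ lies in the variety and is $F$-inverse. The delicate point throughout is the Böhm-out step for the non-extensional case.
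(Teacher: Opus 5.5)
Your proof is correct, but it is organised differently from the paper's.

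The paper builds the chain $\FHP/\boldsymbol{\lambda}\xrightarrow{f}\FHP/\th T\xrightarrow{g}\FHP/\boldsymbol{\lambda\eta}$. The map $g$ exists because a (consistent) $\th T$ cannot equate $\beta$-normal forms with distinct $\beta\eta$-normal forms. The paper then checks directly that $f(t^{\mathfrak m})$ is maximal, using that the target is a group and that $\ker(g\circ f)=\sigma_{\FHP/\boldsymbol\lambda}$ (Theorem~\ref{thm:lambdaeta}).

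You instead reuse the equational characterisation of \cite{AKS}, exactly as the paper does for $\FPT$, and reduce everything to showing that $\ker f$ respects ${}^{\mathfrak m}$. The key fact is the same in both routes: B\"ohm's theorem, which gives $\ker f\subseteq\sigma_{\FPT}$ for consistent $\th T$. Your packaging gives you the maximum of each $\sigma$-class for free from the variety axioms. It also treats inconsistent $\th T$ explicitly, whereas the paper's $g$ tacitly assumes consistency. The paper's route is more elementary and avoids the AKS theorem.

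Two simplifications are available to you.

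\begin{itemize}
\item Your extensional/non-extensional case split is redundant. Your last sub-case already covers every $\th T$: if $\th T\vdash t^+=u^+$, then either $\th T$ is inconsistent, or $t^+=_{\beta\eta}u^+$, whence $t^{\mathfrak m}=u^{\mathfrak m}$ by Proposition~\ref{prop:etasigma}.
\item That step is not delicate. It is the standard form of B\"ohm's theorem for $\beta$-normalisable terms that are not $\beta\eta$-convertible, and its separating context uses only $\beta$, so it applies to non-extensional $\th T$ as well.
\end{itemize}

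The same observation removes the obstacle you hit in the $E$-unitarity paragraph. There is no need to pass through $\th T\boldsymbol{\eta}$, whose consistency you should not take for granted. For consistent $\th T$ you can map directly onto $\FHP/\boldsymbol{\lambda\eta}$, which is precisely the paper's $g$.
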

\begin{proof}
Since $\FHP$s are $\beta$-normal forms, and $\th T$ cannot identifty $\beta$-normal forms that have distinct $\beta \eta$-normal forms, 
we have the following chain of surjective homomorphisms of monoids: 
$\FHP/ \boldsymbol{\lambda} \xrightarrow{f} \FHP /\th T \xrightarrow{g} \FHP /\boldsymbol{\lambda \eta}$. 
As $\FHP/ \boldsymbol{\lambda}$ is an inverse monoid, $\FHP / \th T$ is such. 
We prove that $\FHP /\th T$ is $F$-inverse. 
    Let $t \in \FHP/\boldsymbol{\lambda}$. 
    We prove that $f(t^\mathfrak{m})$ is maximal. 
    If, towards a contradiction, there is $u \in \FHP/\boldsymbol{\lambda}$ such that $f(t^{\mathfrak{m}}) < f(u)$ in $\FHP /\th T$, 
    then $(g \circ f)(t^{\mathfrak{m}}) \le (g \circ f)(u)$ in $\FHP/\boldsymbol{\lambda \eta}$, so that, as $\FHP/\boldsymbol{\lambda \eta}$ is a group, $(g \circ f)(t^{\mathfrak{m}}) = (g \circ f)(u)$. 
    By Theorem \ref{thm:lambdaeta} $\ker (g \circ f) = \sigma_{\FHP/\boldsymbol{\lambda}}$, so that $t^{\mathfrak{m}} \, \sigma_{\FHP/\boldsymbol{\lambda}} \, u$. 
    By definition of $\sigma_{\FHP/\boldsymbol{\lambda}}$, there is $w \in \FHP/\boldsymbol{\lambda}$ such that $w \le u, t^{\mathfrak{m}}$.
    This implies that $u \le t^{\mathfrak{m}}$ in ${\FHP/\boldsymbol{\lambda}}$, so that $f(u) \le f(t^{\mathfrak{m}})$ in ${\FHP/\th T}$. 
    Contradiction. 
\end{proof}

\subsection{Invertible terms and the kite}
\begin{figure}
  \begin{tikzpicture}[scale=0.75]
    \node at (-10,0) {~};
    \node (u) at (0,4) {$\boldsymbol{\lambda}$};
    \node (ul) at (-1,3) {$\boldsymbol{\lambda \eta}$};
    \node (l) at (-2,2) {$\boldsymbol{\lambda \omega}$};
    \node (ur) at (1,3) {$\th H$};
    \node (c) at (0,2) {$\th H \boldsymbol{\eta}$};
    \node (dl) at (-1,1) {$\th H \boldsymbol{\omega}$};
    \node (r) at (2,2) {$\th B$};
    \node (dr) at (1,1) {$\th B \boldsymbol{\eta}$}; 
    \node (d) at (0,0) {$\th B \boldsymbol{\omega}$};
    \node (dd) at (0,-1.3) {$\th H^*$};

    \draw (u) -- (ul) -- (l) -- (dl) -- (d) -- (dd)
          (u) -- (ur) -- (r) -- (dr) -- (d)
          (ul) -- (c) -- (dl)
          (ur) -- (c) -- (dr);
  \end{tikzpicture}
  \caption{The kite}
  \label{fig:kite}
\end{figure}

A long-standing question concerned the inclusion relationship between $\th H^+$ and $\th B {\boldsymbol{\omega}}$.
This question has now been settled: the inclusion $\th B {\boldsymbol{\omega}} \subseteq \th H^+$ was proved in \cite{BMPR}, and the equality $\th B {\boldsymbol{\omega}} = \th H^+$ was subsequently established in \cite{IMP17}.
Figure~\ref{fig:kite} depicts a subset of $\lambda$-theories ordered by reverse inclusion, commonly referred to as Barendregt's `kite'.
In this section we complete Theorem \ref{thm:inv} by characterising invertible terms for all the $\lambda$-theories of the kite.  

Similarly to $\th H^*$, equality in $\th H^+$ can also be characterised via a preorder on the set of B\"ohm-like trees.
We denote by ${\le^\eta}$ the preorder on $\mathscr{B}$ such that 
$T\le^\eta U$ holds precisely when $U$ is obtained from $T$ by performing countably many finite $\eta$-expansions \cite[Def.~11.8]{BM22}. 
Moreover, we denote by $\nf_{\eta}(T)$ the 
$\eta$-normal form of a Böhm-like tree $T$ (see \cite[Def.~4]{BB02} and \cite[Def. 2.56]{BM22}). 
Then, for every $W \in \mathscr{B}$, if $W \le^\eta T$, then $\nf_{\eta}(T) \le^\eta W$. 

The next result was proved in \cite{CDZ}. 
See also \cite[Prop.~11.2.20]{DRP} and \cite[Prop.~11.9]{BM22}. 

\begin{theorem}
    \label{thm:hplus}
    For every $M, N \in \Lambda$ the following are equivalent: 
    \begin{enumerate}
        \item $\th H^+ \vdash M = N$; 
        \item $\nf_{\eta}(\BT(M)) = \nf_{\eta}(\BT(N))$;  
        \item there exists $T \in \mathscr{B}$ such that 
        $\BT(M) \le^\eta T$ and $\BT(N) \le^\eta T$.
    \end{enumerate}
\end{theorem}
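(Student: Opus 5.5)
The plan is to prove $(2)\Leftrightarrow(3)$ combinatorially on B\"ohm-like trees, and then the two directions linking $\th H^+$ to Böhm trees, $(3)\Rightarrow(1)$ by a semantic/approximation argument and $(1)\Rightarrow(2)$ by a Böhm-out separation argument.

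For $(2)\Rightarrow(3)$ I would take $T:=\BT(M)$ and use the fact recalled just before the statement: $W\le^\eta T$ implies $\nf_\eta(T)\le^\eta W$. With $W=\BT(M)$ this gives $\nf_\eta(\BT(M))\le^\eta \BT(M)$. Then $\BT(N)$ and $\BT(M)$ are both finite-$\eta$-expansions (in the sense of $\le^\eta$) of the common tree $\nf_\eta(\BT(M))=\nf_\eta(\BT(N))$. This yields a common lower bound rather than an upper bound, so I still need a \emph{confluence} property: two countable finite-$\eta$-expansions of the same tree admit a common $\le^\eta$-upper bound. I would construct it coinductively node by node. At each node, the two expansions add finitely many trailing $\eta$-expanded arguments. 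I would pad the shorter side with further finite $\eta$-expansions of variables (the finite trees of $\mathscr I$) and recurse on the children.

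For $(3)\Rightarrow(2)$, I would show that $\nf_\eta$ is invariant under $\le^\eta$, i.e.\ $W\le^\eta T$ implies $\nf_\eta(W)=\nf_\eta(T)$. The plan is coinductive: each finite $\eta$-expansion is contracted by the top-down $\eta$-reduction defining $\nf_\eta$. Because only \emph{finite} expansions are allowed, each contraction terminates at every node, so the coinduction is well founded locally.

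For $(3)\Rightarrow(1)$, since $\th B\subseteq\th H^+$ and $\th H^+$ is extensional, I need to show that $\BT(M)\le^\eta \BT(N)$ implies $\th H^+\vdash M=N$. This is not immediate, because the expansions may be infinitely many (at infinitely many positions). I would argue through approximants: a context $C[M]$ has a $\beta$-nf iff it normalises using only a finite approximant of $\BT(M)$, by the continuity/approximation theorem. A finite approximant of $\BT(N)$ differs from the matching approximant of $\BT(M)$ only by finitely many finite $\eta$-expansions, plus $\bot$'s placed at corresponding positions. Hence $\beta\eta$-conversion, together with the fact that $\eta$ preserves existence of $\beta$-nf, transfers normalisability of $C[M]$ to $C[N]$ and back.

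The main obstacle is $(1)\Rightarrow(2)$. Assuming $\nf_\eta(\BT(M))\neq\nf_\eta(\BT(N))$, I must build a context $C[\,]$ such that exactly one of $C[M]$, $C[N]$ has a $\beta$-nf. I would locate a minimal position where the $\eta$-normal trees differ and use a Böhm-out technique to bring it to the root. There are three cases for the difference: a $\bot$ against a node, differing head variables, or differing arities not explained by finite $\eta$. In the first two cases, classical Böhm-out followed by substitution of a normalising versus diverging selector works. The delicate case is the third one, where one side is an \emph{infinite} $\eta$-expansion (e.g.\ $\mathbf J$ versus $\mathbf I$). There I would substitute a term that feeds the extra argument an unsolvable, so that the infinite expansion propagates it forever and fails to have a $\beta$-nf, while the $\eta$-normal side discards it and normalises. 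This mirrors the separation arguments of \cite{CDZ} and \cite[Prop.~11.9]{BM22}, and I expect managing the permutation/duplication of variables during Böhm-out to be where most of the work lies.
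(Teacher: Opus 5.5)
The paper gives no proof of this statement: it imports it from \cite{CDZ} (see also \cite[Prop.~11.2.20]{DRP} and \cite[Prop.~11.9]{BM22}), so your outline has to stand on its own. Your $(2)\Leftrightarrow(3)$ and the approximation argument for $(3)\Rightarrow(1)$ are sound in spirit. However, the one concrete idea you give for the hard direction $(1)\Rightarrow(2)$ fails.

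\textbf{The infinite $\eta$-expansion case.} At the critical node one tree is $\lambda\vec x.\,y\,M_1\cdots M_k$ and the other is $\lambda\vec x z.\,y\,N_1\cdots N_k\,Z$, with $Z$ an infinite $\eta$-expansion of $z$ (up to further finite expansions on both sides). Two things go wrong with feeding the extra argument an unsolvable.
\begin{itemize}
\item Once the abstractions are aligned, both sides hand the extra argument to the \emph{same} head variable in the \emph{same} position. So no substitution can make the $\eta$-normal side discard it while the other side keeps it.
\item If that argument is unsolvable, then $Z[\Omega/z]$ has root $\lambda u_1\ldots u_m.\,\Omega\,U_1\cdots U_m$ with $m\ge 1$, so it is unsolvable too. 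Since $\th H\subseteq\th H^+$, the difference you want to expose is erased. Concretely, $\mathbf J X\Omega=_\beta X(\mathbf J\Omega)$, and $\mathbf J\Omega=_\beta\lambda z.\Omega(\mathbf J z)$ is unsolvable. Hence $\th H\vdash\mathbf J X\Omega=\mathbf I X\Omega$ for every $X$, and no context feeding $\Omega$ to the extra argument separates $\mathbf I$ from $\mathbf J$.
\end{itemize}

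\textbf{The fix.} What $\th H^+$ sees and $\th H^*$ does not is finiteness of the B\"ohm tree, not solvability. So the move must be the opposite one: instantiate the extra argument with a fresh variable $a$, and let the B\"ohm-out replace the head variable by the projection onto that argument position. One side then yields $a$ or a finite $\eta$-expansion of $a$, which is a $\beta$-normal form. The other side yields an infinite $\eta$-expansion of $a$, whose B\"ohm tree is infinite, so it has no $\beta$-normal form. For $\mathbf I$ versus $\mathbf J$ the empty context already does this.

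\textbf{The case split.} You also need a case analysis that reaches this situation by a well-founded descent. ``Differing arities'' also covers extra arguments that are not $\eta$-expansions of the abstracted variable at all, or whose variable occurs elsewhere. Those must be pushed down to a finite, B\"ohm-style difference before you can claim that what remains is a genuine infinite $\eta$-expansion.

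\textbf{A smaller point in $(3)\Rightarrow(1)$.} You reduce to the case $\BT(M)\le^\eta\BT(N)$. But the $T$ in (3) is an arbitrary B\"ohm-like tree, not necessarily recursively enumerable, so it need not be $\BT(P)$ for any term $P$. There are two easy repairs:
\begin{itemize}
\item Run the approximant argument with $T$ in the middle. Finite approximants of $T$ are ordinary terms with $\Omega$'s. After enlarging them so they do not cut inside the finitely many added $\eta$-expansion subtrees, they match approximants of $\BT(M)$ and of $\BT(N)$ up to finitely many $\eta$-steps.
\item Alternatively, note that your positional padding construction, applied directly to $\BT(M)$ and $\BT(N)$, yields a recursively enumerable upper bound with finitely many free variables, hence a B\"ohm tree of some term.
\end{itemize}
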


We now establish when two hereditary permutations are equal in $\th H^+$.  

\begin{lemma}
    \label{lem:orderprime-eta}
    If $M, N \in \HP$, then we have: 
    \begin{enumerate}
        \item $\BT(N) \le_{\fin} \BT(M)$ iff $\BT(M) \le^\eta \BT(N)$; 
        \item $\nf_{\eta}(\BT(M))=\BT(M)^{\mathfrak{m}'}$.
    \end{enumerate} 
\end{lemma}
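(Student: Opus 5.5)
Both parts mirror the proof of Lemma~\ref{lem:order-eta}, with $\le$, $E$ and $\mathfrak{m}$ replaced by $\le_{\fin}$, $E_\fin$ and $\mathfrak{m}'$, and with infinite $\eta$-expansions replaced by finite ones. Through the isomorphism $(-)^+$ of Proposition~\ref{prop:embed}, write $\BT(M)=t^+$ and $\BT(N)=u^+$ with $t,u\in\PT$. The first thing to pin down is what a finite $\eta$-expansion looks like inside $\mathscr{H\!\!P}$. At a node $\lambda x_1\ldots x_n.x\,(t_1^+x_{\pi1})\cdots(t_n^+x_{\pi n})$, a finite $\eta$-expansion appends fresh arguments $x_{n+1},\dots,x_m$. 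It applies the head to $e_{n+1}^+x_{n+1},\dots,e_m^+x_m$, where each $e_j^+$ is a finite $\eta$-expansion of $\mathbf I$, i.e.\ $e_j\in E_\fin$, and it extends $\pi$ by fixed points. The subtrees $t_i$ remain subject to further expansions.

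For (1) I would argue by coinduction in both directions. Suppose $u\le_{\fin} t$, with $u=\<\pi;u_1,\dots,u_n\>$ and $t=\<\pi;t_1,\dots,t_m\>$, $m\le n$. Then $u_i\le_{\fin} t_i$ for $i\le m$, and $u_i\in E_\fin$ for $i>m$. Since $\pi\in S_m\subseteq S_n$, we get $\pi i=i$ for $i>m$. So $u^+$ is obtained from $t^+$ by one finite $\eta$-expansion at the root, namely appending the finite $\eta$-expansions $u_i^+$ of $\mathbf I$, followed by coinductively obtained expansions inside the first $m$ subtrees. Countably many such expansions assemble into a witness of $t^+\le^\eta u^+$, by the definition in \cite[Def.~11.8]{BM22}.

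Conversely, assume $t^+\le^\eta u^+$. I would show that the relation $\{(u,t):t^+\le^\eta u^+\}$ satisfies the closure conditions defining $\le_{\fin}$, and then conclude by maximality of $\le_{\fin}$. At the root, a $\le^\eta$ step can only add trailing abstractions $x_{m+1},\dots,x_n$ together with arguments that are finite $\eta$-expansions of the new variables. Hence the labels agree, $m\le n$, the extra subtrees lie in $E_\fin$, and $t_i^+\le^\eta u_i^+$ holds for $i\le m$.

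The delicate point is exactly this decomposition of a $\le^\eta$ derivation into a root step and subtree steps. The expansions in a derivation may be interleaved, and an expansion may occur inside a subtree that is itself newly created. I expect this to be the main obstacle. I would handle it using the characterisation of $\le^\eta$ in \cite[Def.~11.8]{BM22} as a coinductive relation given node-by-node, as is done for $\le^\eta_\omega$ in Lemma~\ref{lem:order-eta}. I would also check that an argument appended by a finite expansion of a variable $x_j$ is, by definition, a finite $\eta$-expansion of $x_j$, hence of the form $e^+x_j$ with $e\in E_\fin$.

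For (2), observe that $\nf_\eta$ \cite[Def.~2.56]{BM22} removes, coinductively, a trailing abstraction $\lambda x_n$ whenever the last argument is a finite $\eta$-expansion of $x_n$ and $x_n$ does not occur elsewhere. In $\mathscr{H\!\!P}$ this side condition reads $t_n\in E_\fin$ and $\pi n=n$: the last argument is $t_n^+x_{\pi n}$, and since $\pi$ is a permutation $x_n$ occurs only there, so the condition is exactly $\pi n=n$. This is precisely the first clause of Definition~\ref{def:maxprime}. Otherwise $\nf_\eta$ recurses on the subtrees, as in the second clause. Thus $\nf_\eta(t^+)=(t^{\mathfrak{m}'})^+$ follows by coinduction.

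Alternatively, (2) follows from (1) together with Proposition~\ref{lem:ume'}. Both $\nf_\eta(T)$ and $t^{\mathfrak{m}'}$ are characterised as the unique minimum in the relevant order ($\nf_\eta(T)\le^\eta W$ for every $W\le^\eta T$). One still has to check that $\nf_\eta(t^+)$ lies in $\mathscr{H\!\!P}$, and this is immediate from the description above.
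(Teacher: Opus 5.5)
Your proposal is correct and takes the same route as the paper. The paper proves (1) by directly unfolding the definitions of $\le_{\fin}$ and of $\le^\eta$ from \cite[Def.~11.8]{BM22} through the embedding $(-)^+$, and proves (2) by comparing Definition~\ref{def:maxprime} with the definition of $\nf_\eta$, using (1). You spell out the details the paper leaves implicit: root permutations match via head variables, appended arguments have the form $e^+x_j$ with $e\in E_\fin$, and freshness is automatic in $\mathscr{H\!\!P}$.
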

\begin{proof}
    (1) is immediate from the definition of ${\le_{\fin}}$ and \cite[Def.~11.8]{BM22}. 
    (2) follows from (1), Proposition \ref{prop:embed}, Definition \ref{def:maxprime} and \cite[Def.~4]{BB02}.
\end{proof}

Remark that the order ${\le_{\fin}}$ in $\mathscr{H\!\!P}$ is the opposite of ${\le^\eta}$. 

\begin{theorem}
    \label{thm:etainfplus} 
    For any $M,N \in \HP$, the following are equivalent: 
    \begin{enumerate}
    \item $\th H^+ \vdash M = N$; 
    \item $\BT(M) \, \sigma'_{\mathscr{H\!\!P}} \, \BT(N)$; 
    \item $\BT(M)^{\mathfrak{m}'} = \BT(N)^{\mathfrak{m}'}$. 
    \end{enumerate}   
\end{theorem}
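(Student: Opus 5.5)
The proof will mirror that of Theorem~\ref{prop:etainf}, with $\le$, $\mathfrak{m}$ and $\sigma_{\mathscr{H\!\!P}}$ replaced by $\le_{\fin}$, $\mathfrak{m}'$ and $\sigma'_{\mathscr{H\!\!P}}$. Throughout, we transfer the order $\le_{\fin}$, the map $(-)^{\mathfrak{m}'}$ and the congruence $\sigma'_{\PT}$ from $\PT$ to $\mathscr{H\!\!P}$ along the isomorphism of Proposition~\ref{prop:embed}. We then establish $(1)\iff(3)$ and $(2)\iff(3)$ separately.

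For $(2)\iff(3)$ no $\lambda$-calculus is needed. Transported through $(-)^+$, Proposition~\ref{prop:maxprime} says that for all $A,B\in\mathscr{H\!\!P}$ we have $A\,\sigma'_{\mathscr{H\!\!P}}\,B$ iff $A^{\mathfrak{m}'}=B^{\mathfrak{m}'}$. Applying this with $A=\BT(M)$ and $B=\BT(N)$ gives the equivalence directly. Note that $\BT(M)^{\mathfrak{m}'}$ need not be recursively enumerable, but this does not matter here: we work in the full monoid $\mathscr{H\!\!P}$, and the witness $w$ in the definition of $\sigma'$ is allowed to range over all of $\mathscr{H\!\!P}$.

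For $(1)\iff(3)$ we use Theorem~\ref{thm:hplus}: $\th H^+\vdash M=N$ iff $\nf_\eta(\BT(M))=\nf_\eta(\BT(N))$. By Lemma~\ref{lem:orderprime-eta}(2), $\nf_\eta(\BT(M))=\BT(M)^{\mathfrak{m}'}$ and likewise for $N$, since both terms lie in $\HP$. Combining the two yields the claim.

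The only step that carries real content is the identification $\nf_\eta(\BT(M))=\BT(M)^{\mathfrak{m}'}$ in Lemma~\ref{lem:orderprime-eta}, which we take as given. If one wants to double-check it, the point is this: the coinductive clause of Definition~\ref{def:maxprime} removes a trailing child $t_n$ exactly when $\pi n=n$ and $t_n\in E_\fin$. Under $(-)^+$, these are precisely the trailing abstractions $\lambda x_n$ whose argument is $(t_n)^+x_n$ with $(t_n)^+$ a finite $\eta$-expansion of $\mathbf I$, i.e.\ a finite $\eta$-redex. This is exactly what $\nf_\eta$ on Böhm-like trees contracts, while infinite expansions are left untouched, as Example~\ref{ex:fininf} illustrates. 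Alternatively, one can avoid normal forms and argue via item (3) of Theorem~\ref{thm:hplus} together with Lemma~\ref{lem:orderprime-eta}(1). A common upper bound $T$ for $\le^\eta$ must itself lie in $\mathscr{H\!\!P}$, because $\eta$-expanding a hereditary permutation by finite expansions of the identity preserves the shape. Such a $T$ is then a common $\le_{\fin}$-lower bound of $\BT(M)$ and $\BT(N)$, which is exactly the definition of $\sigma'_{\mathscr{H\!\!P}}$. Verifying the closure of $\mathscr{H\!\!P}$ under $\le^\eta$ is the one point needing care on this route, which is why I prefer the normal-form route above.
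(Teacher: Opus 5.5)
Your proposal is correct and follows the paper's proof: $(1)\iff(3)$ comes from Theorem~\ref{thm:hplus} together with Lemma~\ref{lem:orderprime-eta}(2), and $(2)\iff(3)$ from Proposition~\ref{prop:maxprime} transported along $(-)^+$. Your alternative route via item (3) of Theorem~\ref{thm:hplus} would also work, but it needs Lemma~\ref{lem:orderprime-eta}(1) for arbitrary elements of $\mathscr{H\!\!P}$ rather than only B\"ohm trees of terms, and closure of $\mathscr{H\!\!P}$ under $\le^\eta$, so preferring the normal-form route is sensible.
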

\begin{proof}
    $(1) \iff (3)$ by Theorem \ref{thm:hplus} and Lemma \ref{lem:orderprime-eta}(2). $(2) \iff (3)$ by Proposition \ref{prop:maxprime}. 
\end{proof}

We now characterise the invertible elements in $\th H^+$. 

\begin{theorem} 
    \label{thm:invhplus}
    Let $M \in \Lambda$. Then $M$ is $\th H^+$-invertible iff $M \in \FHP$. 
\end{theorem}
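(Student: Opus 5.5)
The plan is to prove the two directions separately. Sufficiency is immediate: $\boldsymbol{\lambda\eta} \subseteq \th H^+$, and by Theorem~\ref{thm:inv}(2) every $M \in \FHP$ is $\boldsymbol{\lambda\eta}$-invertible, hence $\th H^+$-invertible. For necessity, let $M$ be $\th H^+$-invertible with inverse $N$. Since $\th H^+ \subseteq \th H^*$, $M$ is also $\th H^*$-invertible, so Theorem~\ref{thm:inv}(3) gives $M \in \HP$, and likewise $N \in \HP$. We may therefore write $\BT(M) = t^+$ and $\BT(N) = u^+$ for permutation trees $t,u \in \PT$. Because $\BT$ of a composition is the composition of Böhm trees and $(-)^+$ is a monoid embedding (Proposition~\ref{prop:embed}), we have $\BT(M\circ N) = (tu)^+$ and $\BT(N \circ M) = (ut)^+$. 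Since $\BT(\mathbf I) = \bar\epsilon^+$, Theorem~\ref{thm:etainfplus} turns $\th H^+ \vdash M\circ N = \mathbf I$ into $tu \, \sigma'_{\PT} \, \bar\epsilon$, and similarly $ut \, \sigma'_{\PT} \, \bar\epsilon$.

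The core of the argument is to show that these relations force $t$ to be finite. By Lemma~\ref{lem:primecong}, $\sigma'_{\PT} \subseteq \sigma_{\PT}$, so $tu \, \sigma_{\PT} \, \bar\epsilon$. By $E$-unitarity and Lemma~\ref{lem:mgc}, $tu$ is then idempotent, and likewise $ut$. Working in $\PT/\sigma'_{\PT}$, which is a monoid by Lemma~\ref{lem:primecong}, the class of $t$ is a unit with inverse the class of $u$. But $tt^*t = t$ gives $[tt^*][t] = [t]$, and multiplying on the right by $[u]$ yields $[tt^*] = [\bar\epsilon]$. Now $tt^* \in E$, so Remark~\ref{rem:fininf} gives $tt^* \in E_\fin$. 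By Lemma~\ref{lem:norm}(4), or more directly because $\sk(tt^*)$ contains $\sk(t)$ (the root degree of $tt^*$ equals that of $t$, and coinductively its children are $t_i t_i^*$), finiteness of $tt^*$ forces $\sk(t)$ to be finite. Hence $t \in \FPT$, so $\BT(M)$ is a finite element of $\mathscr{H\!\!P}$, and therefore $M \in \FHP$ by Definition~\ref{def:ietaomega}.

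I expect the main obstacle to be the step showing that $\sk(t) \subseteq \sk(tt^*)$ for infinite trees, since Lemma~\ref{lem:norm} is only stated for $\FPT$. I would settle it by a direct coinductive computation: write $(tt^*)_i = (t^*)_{\pi i}\, t_i$ with $(t^*)_{\pi i} = t_i^*$, so $(tt^*)_i = t_i^* t_i$. Since the root degree of $t^*$ is that of $t$, and the product takes the maximum of the root degrees, one shows by induction on depth that every node of $t$ is a node of $tt^*$. This uses only the definition of the product, which never shrinks skeletons, so $\sk(v) \subseteq \sk(vw)$ holds by induction on node length. A secondary check is the compositionality $\BT(M \circ N) = \BT(M)\circ\BT(N)$, which is \cite[Thm.~18.3.10]{B84}.
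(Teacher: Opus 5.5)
Your overall strategy is sound but differs from the paper's. The paper never uses $\sigma'_{\PT}$ or Theorem~\ref{thm:etainfplus} in this proof. After passing to $\th H^*$, it uses Theorem~\ref{thm:invhstar} and Lemma~\ref{lem:mgc} to see that $P=M\circ N$ and $Q=N\circ M$ lie in $\mathcal{I}^\eta_\omega$. It then plugs them into the \emph{empty context} in the definition of $\th H^+$: since $\mathbf I$ has a $\beta$-normal form, so do $P$ and $Q$, hence they are finite $\eta$-expansions of $\mathbf I$. So $M$ is $\boldsymbol{\lambda\eta}$-invertible and Theorem~\ref{thm:inv}(2) gives $M\in\FHP$; the paper leaves this last step implicit. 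You instead go through the B\"ohm-tree characterisation of $\th H^+$ (Theorem~\ref{thm:etainfplus}, hence Theorem~\ref{thm:hplus}), the quotient $\PT/\sigma'_{\PT}$, and Remark~\ref{rem:fininf}. This costs more prerequisites, but it shows exactly how the finite idempotents $E_{\fin}$ govern invertibility modulo $\th H^+$. In effect it proves the observation after Proposition~\ref{prop:maxprime} that the units of $\PT/\sigma'_{\PT}$ come from $\FPT$. The paper's empty-context trick is the cheaper way to obtain finiteness.

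However, the auxiliary claim you use to finish is false as stated. In general $\sk(v)\not\subseteq\sk(vw)$ and $\sk(t)\not\subseteq\sk(tt^*)$, because the product places $w_j v_{\rho j}$ at position $j$, so subtrees get relocated by the permutations. The correct identity is $(tt^*)_{\pi i}=t_i^*t_i$, not $(tt^*)_i=t_i^*t_i$. Also, the $t_it_i^*$ are the children of $t^*t$, not of $tt^*$. For $t=\langle (1\,2);\bar\epsilon,\langle\iota;\bar\epsilon\rangle\rangle$ one gets $tt^*=\langle\iota;\langle\iota;\bar\epsilon\rangle,\bar\epsilon\rangle$, so $21\in\sk(t)\setminus\sk(tt^*)$.

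What does hold is that $v$ and $w$ each have at most as many nodes at depth $d$ as $vw$. This follows by induction on $d$, simultaneously for both factors, using $(vw)_j=w_jv_{\rho j}$ with $\rho$ a bijection of $\{1,\dots,k\}$. It suffices to conclude that $t$ is finite when $tt^*$ is.

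Even simpler, you can avoid skeletons altogether. You already derived $tu\in E$ and $tu\,\sigma'_{\PT}\,\bar\epsilon$, so Remark~\ref{rem:fininf} gives $tu\in E_{\fin}$, and likewise $ut\in E_{\fin}$. Hence $M\circ N=_{\beta\eta}\mathbf I=_{\beta\eta}N\circ M$, and Theorem~\ref{thm:inv}(2) yields $M\in\FHP$.
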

\begin{proof}
    If $M \in \FHP$, then $M$ is $\th H^+$-invertible, since $\th H^+$ is extensional.  
    Conversely, let $M$ be $\th H^+$-invertible, i.e., there is $N$ such that $\th H^+ \vdash M \circ N = N \circ M = \mathbf I$.
    As $\th H^+ \subseteq \th H^*$, this implies that $\th H^* \vdash M \circ N = N \circ M = \mathbf I$. 
    By Theorem \ref{thm:inv} $M,N \in \HP$. 
    Let $P:=M \circ N$ and $Q:=N \circ M$.
    Let $S$ be the inverse monoid $\mathscr{H\!\!P}_{\mathrm{rec}}$.
    By Theorem \ref{thm:invhstar}, $\BT(P) \, \sigma_S \, \BT(\mathbf I)$ and $\BT(Q) \, \sigma_S \, \BT(\mathbf I)$. 
    By Lemma \ref{lem:mgc}, this implies that $\BT(P)$ and $\BT(Q)$ are idempotent in $\mathscr{H\!\!P}_{\mathrm{rec}}$, i.e., $P$ and $Q$ are possibly infinite $\eta$-expansions of $\mathbf I$. 
    Since $\th H^+ \vdash \mathbf I = P$ and $\th H^+ \vdash \mathbf I = Q$, then, by definition of $\th H^+$, for every context $C[\,]$, 
    $C[P]$ has a $\beta$-nf iff  $C[Q]$ has a $\beta$-nf iff $C[\mathbf I]$ has a $\beta$-nf.
    If we choose the empty context, we derive that $P, Q$ are finite $\eta$-expansions of $\mathbf I$. 
    Therefore $M,N \in \FHP$. 
\end{proof}

\begin{lemma}
    \label{lem:giulio}
    Let $\th T$ be non-extensional. If $\th T \vdash D = \mathbf{I}$, for some $D \in \mathcal{I}^\eta_\omega$, 
    then $D =_{\beta} \mathbf I$.  
\end{lemma}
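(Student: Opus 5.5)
We argue by contraposition: assuming $D \neq_\beta \mathbf I$, we derive $\th T \vdash \mathbf 1 = \mathbf I$, which contradicts non-extensionality.

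\emph{Step 1 (shape of $D$).} Since $D \in \mathcal I^\eta_\omega$, the tree $\BT(D)$ lies in $\mathscr I$. If its root has no children, then $\BT(D) = \lambda x.x$, which is a finite $\bot$-free tree. Hence $D$ head-reduces to $\lambda x.x$, i.e.\ $D =_\beta \mathbf I$. So assume the root has $n \ge 1$ children. Then $D \twoheadrightarrow_h \lambda x x_1\ldots x_n.x N_1\cdots N_n$ with $\BT(N_i) \in \mathscr I(x_i)$. Put $D_i := \lambda x_i.N_i$. Each $D_i$ is a combinator, since the other variables cannot occur in $\BT(N_i)$ and a term equal in $\th B$ to a combinator can be replaced by one. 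We have $N_i =_\beta D_i x_i$, so
\[ D =_\beta \lambda x x_1\ldots x_n. x(D_1x_1)\cdots(D_nx_n). \]

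\emph{Step 2 (peeling off the subtrees).} Let $P := \lambda u_1\ldots u_n w. w u_1\cdots u_n$. From $\th T \vdash D = \mathbf I$, compatibility gives $\th T \vdash D P x_1\cdots x_n = \mathbf I P x_1 \cdots x_n$, that is,
\[ \th T \vdash \lambda w. w(D_1x_1)\cdots(D_nx_n) = \lambda w.w x_1\cdots x_n. \]
Applying both sides to the $i$-th projection $\lambda y_1\ldots y_n.y_i$ yields $\th T \vdash D_i x_i = x_i$. Abstracting $x_i$ gives $\th T \vdash \lambda x_i.D_ix_i = \mathbf I$. Since $D_i$ is itself an abstraction, $\lambda x_i.D_ix_i =_\beta D_i$, so $\th T \vdash D_i = \mathbf I$. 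Substituting these equalities into the expression of Step 1 gives
\[ \th T \vdash \mathbf I = D = \lambda xx_1\ldots x_n.xx_1\cdots x_n = \mathbf 1_n. \]

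\emph{Step 3 (from $\mathbf 1_n$ to $\mathbf 1$).} From $\th T \vdash \mathbf 1_n x = x$ we get $\th T \vdash \lambda x_1\ldots x_n.x x_1\cdots x_n = x$. Substitute $x := \lambda c_1\ldots c_{n-1}.f$, where $f$ is fresh and the $c_j$ are distinct variables. The left-hand side $\beta$-reduces to $\lambda x_1\ldots x_{n-1}.\lambda x_n. f x_n$, and the right-hand side is $\lambda c_1\ldots c_{n-1}.f$. Apply both sides to $n-1$ arbitrary variables and $\beta$-reduce. This gives $\th T \vdash \lambda x_n.fx_n = f$, and abstracting $f$ gives $\th T \vdash \mathbf 1 = \mathbf I$. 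So $\th T$ is extensional, a contradiction.

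The only delicate point is Step 1: passing from the Böhm tree of $D$ to a genuine $\beta$-equality with the displayed head normal form, using combinators $D_i$ whose Böhm trees are the subtrees. This is the standard head-reduction argument, where $N_i$ is the $i$-th argument of the head normal form. Steps 2 and 3 are direct calculations with explicit contexts.
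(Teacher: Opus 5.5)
\textbf{There is a genuine gap.} It is in Step 1, in the claim that $D_i:=\lambda x_i.N_i$ is a combinator. That $\BT(N_i)\in\mathscr I(x_i)$ has only $x_i$ free says nothing about the \emph{term} $N_i$. When $\BT(N_i)$ is infinite, the head variable $x$ can be carried along forever without ever surfacing in the tree.

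Here is an example. Take $\mathbf J':=\mathbf Y(\lambda fygz.\,g(fyz))$, so that $\mathbf J'yg=_\beta\lambda z.\,g(\mathbf J'yz)$. Then $D:=\lambda xx_1.\,x(\mathbf J'xx_1)$ is a combinator in $\mathcal I^\eta_\omega$, already in head normal form, with $N_1=\mathbf J'xx_1$. Since $\lambda x_1.\mathbf J'xx_1$ is a $\lambda I$-term, $x$ remains free in all its $\beta$-reducts. By Church--Rosser, it is therefore not $\beta$-convertible to any combinator.

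Replacing $D_i$ by a $\th B$-equal combinator does not help. The displayed equation then holds only modulo $\th B$, and $\th T$ is merely assumed non-extensional, so it need not contain $\th B$.

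This breaks Step 2. Plugging $P$ in for $x$ also substitutes inside the $D_i$, so what you actually obtain is $\th T\vdash D_i[P/x]=\mathbf I$. In the example this reads $\th T\vdash \mathbf J'Px_1=x_1$, which does not let you rewrite $D$ into $\mathbf 1_n$. Your argument is sound when $\th B\subseteq\th T$, but not in the generality of the statement.

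\textbf{The repair is to skip the peeling altogether.} Step 1 already shows that if $D\neq_\beta\mathbf I$, then the head normal form of $D$ begins with at least two abstractions ($n\ge 1$). Hence $\lambda xy.Dxy=_\beta D$ for fresh $x,y$. Compatibility then gives $\th T\vdash \mathbf 1=\lambda xy.\mathbf Ixy=\lambda xy.Dxy=D=\mathbf I$, so $\th T$ is extensional.

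This is exactly the paper's proof. It needs no information about the arguments $N_i$, and it makes your Step 3 (which is correct) unnecessary.
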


\begin{proof}
    For the sake of contradiction, if $D \neq_{\beta} \mathbf{I}$, then $D=_{\beta}\lambda uv.D'$ for some $D'$. 
    Then $\th T \vdash \mathbf{1} = \lambda xy. \mathbf I xy = \lambda x y. Dxy = \lambda x y. (\lambda uv.D') xy = D = \mathbf I$, so that $\th T$ is extensional.  
\end{proof}

\begin{proposition} Let $\th T$ be a \lam-theory. 
    \begin{enumerate}
        \item If $\th T$ is semisensible and $\th T$ is not extensional, then $M \in \Lambda$ is $\th T$-invertible iff $M=_{\beta}\mathbf I$.
        \item If $\boldsymbol{\lambda \eta} \subseteq \th T \subseteq \th H^{+}$, then $M \in \Lambda$ is $\th T$-invertible iff $M \in \FHP$.
    \end{enumerate}
\end{proposition}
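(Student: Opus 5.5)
Both parts share one reduction: move the problem to $\th H^*$, where Theorem~\ref{thm:inv}(3) describes the invertibles, and then use the fact that the composites $M\circ N$ and $N\circ M$ are possibly infinite $\eta$-expansions of $\mathbf I$.

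\textbf{Part (1).} The direction from right to left is trivial, since $\mathbf I$ is its own inverse modulo $\boldsymbol{\lambda}$ and hence modulo $\th T$.

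For the other direction, let $\th T\vdash M\circ N=N\circ M=\mathbf I$. The steps are as follows.
\begin{enumerate}
\item Since $\th T$ is semisensible, $\th T\subseteq\th H^*$, so the same equations hold in $\th H^*$. By Theorem~\ref{thm:inv}(3), $M,N\in\HP$.
\item Put $P:=M\circ N$ and $Q:=N\circ M$. Exactly as in the proof of Theorem~\ref{thm:invhplus}, Theorem~\ref{thm:invhstar} and Lemma~\ref{lem:mgc} show that $\BT(P),\BT(Q)$ are idempotents of $\mathscr{H\!\!P}_{\mathrm{rec}}$, that is, $P,Q\in\mathcal I^\eta_\omega$.
\item Because $\th T\vdash P=\mathbf I$ and $\th T$ is not extensional, Lemma~\ref{lem:giulio} gives $P=_\beta\mathbf I$, and likewise $Q=_\beta\mathbf I$.
\item So $M$ is $\boldsymbol{\lambda}$-invertible. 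By Theorem~\ref{thm:inv}(1) for $\boldsymbol{\lambda}$, $M=_\beta\mathbf I$.
\end{enumerate}

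\textbf{Part (2).} If $M\in\FHP$, then $M$ is $\boldsymbol{\lambda\eta}$-invertible by Theorem~\ref{thm:inv}(2). Since $\boldsymbol{\lambda\eta}\subseteq\th T$, it is also $\th T$-invertible.

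Conversely, if $M$ is $\th T$-invertible, the same witness $N$ shows that $M$ is $\th H^+$-invertible, because $\th T\subseteq\th H^+$. Theorem~\ref{thm:invhplus} then gives $M\in\FHP$.

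The one step needing care is in part (1): passing from $\th T\vdash P=\mathbf I$ to $P\in\mathcal I^\eta_\omega$. This works only because we go through $\th H^*$, using semisensibility, and through the idempotent characterisation, and it reuses the argument of Theorem~\ref{thm:invhplus} verbatim. Once $P,Q\in\mathcal I^\eta_\omega$ is established, Lemma~\ref{lem:giulio} finishes the argument.
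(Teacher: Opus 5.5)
Your proposal is correct and follows the paper's own proof essentially step by step. In (1) you pass to $\th H^*$ via semisensibility, reuse the argument of Theorem~\ref{thm:invhplus} to get $M\circ N,\,N\circ M\in\mathcal{I}^\eta_\omega$, apply Lemma~\ref{lem:giulio}, and conclude with the $\boldsymbol{\lambda}$ case of Theorem~\ref{thm:inv}; in (2), as in the paper, you combine Theorem~\ref{thm:invhplus} with Theorem~\ref{thm:inv}(2) and the inclusions $\boldsymbol{\lambda\eta}\subseteq\th T\subseteq\th H^+$.
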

\begin{proof}
    (1) Let $M$ be $\th T$-invertible with inverse $N$. 
    Since $\th T \subseteq \th H^*$, we have $M, N \in \HP$ by Theorem \ref{thm:inv}, and  
    $ \th H^* \vdash M \circ N = N \circ M = \mathbf I$. 
    Then, following the proof of Theorem \ref{thm:invhplus}, we
    can prove that $M \circ N \in \mathcal{I}^\eta_\omega$ and $N \circ M \in \mathcal{I}^\eta_\omega$.
    By Lemma \ref{lem:giulio} applied to $\th T$, we get that $M \circ N =_{\beta} N \circ M=_{\beta} \mathbf I$. 
    As the only invertible term modulo $\boldsymbol{\lambda}$ is $\mathbf I$, $M =_\beta N =_\beta \mathbf I$. 
    (2) follows from Theorems \ref{thm:invhplus} and \ref{thm:inv}. 
\end{proof}
Thus we confirm Barendregt's conjecture \cite[p. 547]{B84} stating that $M$ is $\th T$-invertible, for $\th T=\th B \boldsymbol{\eta}$ and $\th T= \th B \boldsymbol{\omega}$, exactly when $M \in \FHP$.

%% file: conclusion.tex
Inverse semigroups were first employed in the $\lambda$-calculus to study properties of $\beta$-reduction \cite{DR93}.
Subsequently, they played a role in Girard's Geometry of Interaction \cite{G87}.
More recently, Goubault-Larrecq, inspired by the work \cite{DR93,DR99}, has shown that a category constructed from an inverse semigroup models a fragment of linear logic \cite{JGL}.
Hofmann and Mislove employed the theory of inverse semigroups to prove that every compact Hausdorff topological model of $\lambda$-calculus is degenerate \cite{HM}.
We have shown that the theory of inverse semigroups improves our understanding of the interaction between invertible and idempotent elements, which is essential for characterising \lam-terms invertible modulo \lam-theories. 
Looking ahead, we aim to characterise elements that are invertible modulo \lam-theories lying strictly between $\th H^+$ and $\th H^*$. 
The role played by $E$ and $E_{\fin}$ in $\th H^*$ and $\th H^+$, respectively, is taken on by other well-behaved semilattices of idempotents of $\PT$. 
We conjecture that there exist \lam-theories between $\th H^+$ and $\th H^*$ whose group of invertible elements is strictly between $\FHP$ and $\HP$.

%% file: appendix.tex
\appendix 
\section{Proofs}
We collect here all the missing proofs.

\subparagraph*{Proof of Lemma \ref{lem:norm}}
    Let $t=\<\pi; t_1, \ldots, t_n\>$ and $u=\<\rho; u_1, \ldots, u_m\>$. 
    \begin{enumerate}
    \item Assume that $t < u$; then either $n>m$ or $n=m$. 
    In the first case, 
    \begin{equation*}
        \norm t =1+\sum_{j=1}^{m} \norm {t_j} + \sum_{j=m+1}^{n} \norm {t_j} > 1+\sum_{j=1}^{m} \norm {t_j} \geq 1+\sum_{j=1}^{m} \norm {u_j}. 
    \end{equation*}
    In the second case, there exists $i$ such that $t_i < u_i$. 
    Then 
    $$\norm t=1+\sum_{j=1}^{n} \norm {t_j} > 1+\sum_{j=1}^{m} \norm {u_j}\text{,}$$ 
    because $\norm {t_i}> \norm {u_i}$ and $\norm {t_j}\geq \norm {u_j}$ for every $j\neq i$.
    \item The proof is by induction on $t$. 
    Observe that 
    \begin{equation*}
        \norm {t^*} = 1+ \sum_{i=1}^{n} \norm{(t^*)_i} 
        = 1+ \sum_{i=1}^{n} \norm{(t_{\pi^{-1}i})^*} 
        = 1+ \sum_{i=1}^{n} \norm {t_{\pi^{-1}i}}
        = 1+ \sum_{i=1}^{n} \norm{t_i}
    \end{equation*}
    which equals $\norm  t$, because $\pi^{-1}$ is a permutation of $n$.
    \item 
    We show (3) by induction on $\norm t +\norm u$. 
If $\norm t +\norm u=2$, then $t =u=\bar\epsilon$, and the conclusion is trivial.
If $\norm t +\norm u>2$, then we have two cases: (a) $n\geq m$ and (b) $m\geq n$.
\begin{enumerate}
    \item If $m=0$, then $u=\bar\epsilon$ and $tu=t$. 
    Then the conclusion is trivial. 
    Assume now $m>0$.
Since $(tu)_i = u_i t_{\rho i}$ for every $1\leq i\leq m$ and $(tu)_i=t_i$ for every $\rd(u)< i\leq \rd(t)$, then by induction hypothesis $\norm{u_i}, \norm {t_{\rho i}}\leq \norm{(tu)_i}$ for every $1\leq i\leq m$. 
Then we have:
\begin{equation*}
    \norm{tu}= 1+ \sum_{i=1}^{n} \norm{tu_i} = 1+ \sum_{i=1}^{m} \norm{tu_i} + \sum_{i=m+1}^{n} \norm{tu_i} \geq 1+ \sum_{i=1}^{m} \norm{u_i}=\norm{u}
\end{equation*}
and 
\begin{equation*}
    \norm{tu}=  1+ \sum_{i=1}^{m} \norm{(tu)_i} + \sum_{i=m+1}^{n} \norm{t_i}\geq 1+ \sum_{i=1}^{m} \norm{u_{\rho i}}+ \sum_{i=m+1}^{n} \norm{t_i}=\norm t
\end{equation*}
because $\rho$ is a permutation of $m$ and $\sum_{i=1}^{m} \norm{t_{\rho i}}=\sum_{j=1}^{m} \norm{t_j}$.
Finally 
\begin{align*}
    \norm{tu} & =1+ \sum_{i=1}^{m} \norm{(tu)_i} + \sum_{i=m+1}^{n} \norm{t_i}  \leq 2+ \sum_{i=1}^{m} (\norm{u_i}+\norm{t_{\rho i}})+ \sum_{i=m+1}^{n} \norm{t_i} \\
    & = \left (1+ \sum_{i=1}^{m} \norm{u_i} \right )+ \left (1+\sum_{i=1}^{m} \norm{t_{\rho i}}  +\sum_{i=m+1}^{n} \norm t_i\right ) \\
    & = \norm{u}+\left (1+ \sum_{j=1}^{m}\norm{t_j} +\sum_{i=m+1}^{n} \norm{t_i}\right )= \norm{u}+\norm{t}\text{,}
\end{align*}
because $\rho$ is a permutation of $m$.
\item Assume $m>0$. 
Since $(tu)_i = u_i t_{\rho i}$ for every $1\leq i\leq m$ and $t_{\rho i}=\bar\epsilon$ for every $\rho i>n$, then $(tu)_i=u_i$ for every $i$ such that $\rho i>n$.
Let $A=\{i: \rho i>n, 1\leq i\leq m\}$ and $B=\{i: \rho i\leq m, 1\leq i\leq n \}$. 
Then we have: 
    \begin{equation*}
        \norm{tu}=  1+ \sum_{i\in A} \norm{(tu)_i} + \sum_{i\in B} \norm{(tu)_i}= 1+ \sum_{i\in A} \norm{u_i} + \sum_{i\in B} \norm{(tu)_i}\text{,}
    \end{equation*}
so that 
\begin{equation*}
    \norm{tu}\geq 1+ \sum_{i\in A} \norm{u_i} + \sum_{i\in B} \norm{u_i}=\norm u
\end{equation*}
and 
\begin{equation*}
    \norm{tu}\geq 1 + \sum_{i\in B} \norm{(tu)_i} =1+\sum_{i\in B} \norm{t_{\rho i}}=\norm t \text{,}
\end{equation*}
because $\rho$ is a permutation of $m$ and $\sum_{i\in B} \norm{t_{\rho i}}=\sum_{j=1}^{n} \norm{t_j}$.
Finally, 
\begin{equation*}
    \norm{tu}\leq 2+ \sum_{i\in A} \norm{u_i} + \sum_{i\in B} \norm{u_i}+ \sum_{i\in B}  \norm{t_{\rho i}}=\norm u+ 1+ \sum_{i\in B}  \norm{t_{\rho i}}= \norm u+\norm t
\end{equation*}
because $\sum_{i\in B} \norm{t_{\rho i}}=\sum_{j=1}^{n} \norm{t_j}$.
\end{enumerate}
\item The first equality is proved by induction on $t$. 
Since $(tt^*)_i=  (t_{\pi^{-1}i})^* t_{\pi^{-1}i}$ and $(t^* t)_i=t_i(t_i)^*$, by induction hypothesis $\norm{\sk(t_i)} = \norm{t_i t_i^*}$. 
The second follows from (2). \qed
\end{enumerate}

\subparagraph*{Proof of Lemma \ref{lem:singleton2}}
    First of all observe that, since $t \prec u$ implies $t \le u$, then $\pi = \rho$. 
 If $u=\bar\epsilon$, then $t$ is idempotent. 
 If $t$ has more than two nodes, it can be easily seen that there is $z \in \FPT$ such that $t < z < \bar \epsilon$. 
 Then $t$ has two nodes and (ii) holds. 
 If $u\neq \bar\epsilon$, then we have two cases: (a) $n = m$ and (b) $n > m$. 
 \begin{alphaenumerate}
    \item If $n = m$, then $t_i \leq  u_i$ for every $1\leq i\leq n$ and it must exist $k$ such that $t_k <  u_k$. 
    If there is $w\neq t_k,u_k$ such that $t_k <  w< u_k$, then $t < \<\pi; t_1, \ldots, t_{k-1},w,t_{k+1}, \ldots, t_n\> <u$, 
    leading to a contradiction. 
    Therefore, $t_k\prec u_k$.
    If there is another $j\neq k$ such that $t_j\prec u_j$, then
    $t < \<\pi; t_1,\ldots,t_j,\ldots,t_{k-1},u_k,t_{k+1},\ldots,t_n\> <  u$, leading again to a contradiction.
    In conclusion, (i) holds.
    \item If $n >m$, similarly, (ii) has to hold, otherwise by Lemma \ref{lem:order} there would be $z \in \FPT$ such that $t < z <u$. 
 \end{alphaenumerate}
We prove the converse. 
Assume that (i) holds and that $t \le  w\le  u$ for some $w=\<\pi; w_1, \ldots,w_p\> \in \FPT$.  
We immediately get that either $w=t$ or $w=u$.
Assume that (ii) holds and that $t \le  w \le   u$. 
We have two cases: $n= p$ or $n= p+1$. 
In the first case we have $t=w$, because $t_i\leq  w_i\leq  u_i=t_i$ for every $1\leq i\leq m$ and $t_n=\bar\epsilon\leq  w_n$. 
In the second case $w=u$, because $t_i=u_i\leq  w_i\leq  u_i $ for every $1\leq i\leq m=p$. 
\qed

\subparagraph*{Proof of Lemma \ref{lem:oprime}}
Let $t:=\<\pi; t_1, \ldots, t_n\>$ and $u:=\<\rho;u_1, \ldots, u_m\>$.
\begin{enumerate}
    \item   
    We prove that there is $f \in E_{\fin}$ such that $uf =t$ by induction on $e \in E_{\fin}$.
    If $e=\bar \epsilon$, then $f = \bar \epsilon$. 
    Let $e=\<\iota; e_1, \ldots, e_k\>$ with $1 \le k \le n$.
    Assume that $m \ge k$; the other case is similar. 
    If $t=eu$ then $\pi = \rho$ and $t_i = u_i e_{\rho i}$ for $1 \le i \le k$ and $t_i = u_i$ for $k < i \le n$. 
    As $e_{\rho i} \in E_{\fin}$ for every $1 \le i \le k$, there are $f_1, \ldots, f_k \in E_{\fin}$ such that $t_i = f_i u_i$.  
    Let $f:=\<\iota;f_1, \ldots, f_k\>$; then $uf  = \<\pi; f_1 u_1 , \ldots, f_k u_k , u_{k+1}, \ldots, u_m\> =t$. 
    \item  
    The proof is by coinduction on $\le_{\fin}$. 
    Let $t=ue$ for some $e \in E_{\fin}$. 
    Then $\pi = \rho$ and $n=\max\{m,\rd(e)\}\geq m$. 
    Moreover, as $t_i = e_i u_i$, for $1\leq i\leq m$, then by (\ref{lem:order-prime}), there are $f_1, \ldots, f_m \in E_{\fin}$ such that 
    $t_i = u_i f_i$ for $1\leq i\leq m$. 
    As $t_i=e_i$ for $m< i\leq n$ the conclusion follows from the coinductive hypothesis.
    \item The proof is by induction on $f \in E_{\fin}$. 
    If $f=\bar\epsilon$, then $t=\bar\epsilon \in E_{\fin}$.
    Let $f = \<\iota; f_1, \ldots,f_m\>$, $m \ge 1$, such that $f\le_{\fin} t$. 
    Then $t=\<\iota; t_1, \ldots, t_n\>$ with $n \le m$ and $f_i \le_{\fin} t_i$ for $1 \le i \le n$. 
    As $f_i \in E_{\fin}$, $t_i \in E_{\fin}$ for every $1 \le i \le n$, so that $t \in E_{\fin}$. \qed
\end{enumerate}
    
\subparagraph*{Proof of Proposition \ref{lem:ume'}}
    If $t$ is idempotent, (1) is trivial and (2) follows from Lemma \ref{lem:oprime}(\ref{lem:funi}).  
    Otherwise let $t=\<\pi; t_1, \ldots, t_n\>$ and let $k$ be the greatest $i$ such that $t_k \notin E_{\fin}$ or $\pi k \neq k$, and $t_i \in E_{\fin}$, $\pi i = i$ for every $k < i \le n$. 
    Then $t^{\mathfrak{m}'}=\<\pi; (t_1)^{\mathfrak{m}'}, \ldots, (t_k)^{\mathfrak{m}'}\>$. 
    As to (1), by coinduction, $ t_i \le (t_i)^{\mathfrak{m}'}$ for $1 \le i \le k$, and $t_i \in E_{\fin}$ for $k < i \le n$, so that $t \le t^{\mathfrak{m}'}$. 
    Concerning (2), let $u=\<\rho; u_1, \ldots, u_m\>$ such that $t \le u$.
    Without loss of generality, we assume that $n=m$, so that $k \le m$.
    Moreover, we have $\pi = \rho$, so that
    by coinductive hypothesis, $u_i \le (t_i)^{\mathfrak{m}'}$ for $1 \le i \le k$. 
    The fact that $t_i \le u_i$ and $t_i \in E_{\fin}$ for $k< i \le n$, implies by Lemma \ref{lem:oprime}(\ref{lem:funi}) that $u_i \in E_{\fin}$ for $k< i \le m$. 
    We conclude that $u \le t^{\mathfrak{m}'}$. 
\qed 

\subparagraph*{Proof of Lemma \ref{lem:trans}}
    Let $t=\<\pi; t_1, \ldots, t_n\>$, $u=\<\rho; u_1, \ldots, u_m\>$ and $v=\<\tau; v_1, \ldots, v_k\>$. 
    By definition of ${\le_{\fin}}$ we have $\pi=\rho=\tau$ and $k \le n,m$. 
    Without loss of generality we assume $n \le m$. 
    We define $w:=\<\pi; w_1, \ldots, w_m\>$ where $w_i$ is 
    \begin{itemize}
        \item for $1 \le i \le k$ a $\le_{\fin}$-lower bound of $t_i$ and $u_i$ obtained coinductively; 
        \item for $k < i \le n$, $w_i:=t_i u_i$; 
        \item for $n < i \le m$, $w_i:=u_i$. 
    \end{itemize}
    Now we verify that $w \le_{\fin} t$: 
      \begin{itemize}
        \item for $1 \le i \le k$ $w_i \le t_i$ by coinductive hypothesis; 
        \item for $k < i \le n$, $w_i=t_i u_i \le_{\fin} t_i$ by Lemma \ref{lem:oprime}. 
        \item for $n < i \le m$, $w_i=u_i \in E_{\fin}$. 
    \end{itemize}
    Similarly, $w \le_{\fin} u$, proving the claimed result. 
    \qed